\documentclass{article}

\let\articleaddcontentsline\addcontentsline
\usepackage{iclr2027_conference,times}
\let\addcontentsline\articleaddcontentsline

\usepackage[T1]{fontenc}
\usepackage[utf8]{inputenc}
\usepackage{microtype}
\usepackage{amsmath,amssymb,amsfonts,amsthm,mathtools}
\usepackage{aliascnt}
\usepackage{bm}
\allowdisplaybreaks

\usepackage{graphicx}
\usepackage{booktabs,array}
\usepackage{pifont}
\usepackage{enumitem}
\usepackage{needspace}
\usepackage{placeins,flafter}
\usepackage[ruled,vlined]{algorithm2e}
\SetAlFnt{\small}
\SetAlCapFnt{\small}
\SetAlCapNameFnt{\small}
\SetAlCapHSkip{0pt}
\IncMargin{-\parindent}

\usepackage{xcolor}
\usepackage{hyperref}
\usepackage{url}
\usepackage[capitalize,noabbrev,nameinlink]{cleveref}
\hypersetup{hidelinks,pdftitle={A Logarithmic Regret Bound for Optimistic Hedge in General-Sum Games},pdfauthor={Anonymous}}
\newcommand{\cmark}{\textcolor{green!80!black}{\ding{51}}}
\newcommand{\xmark}{\textcolor{red}{\ding{55}}}

\theoremstyle{plain}

\newtheoremstyle{centralquestionstyle}{\topsep}{\topsep}
  {\itshape\centering}{}{}{}{0pt}{}
\theoremstyle{centralquestionstyle}
\newtheorem*{centralquestion}{}
\theoremstyle{plain}

\newaliascnt{lemma}{theorem}
\newtheorem{lemma}[lemma]{Lemma}
\aliascntresetthe{lemma}

\newaliascnt{proposition}{theorem}

\aliascntresetthe{proposition}

\newaliascnt{corollary}{theorem}

\aliascntresetthe{corollary}

\theoremstyle{plain}
\newaliascnt{definition}{theorem}
\newtheorem{definition}[definition]{Definition}
\aliascntresetthe{definition}

\theoremstyle{definition}
\newaliascnt{assumption}{theorem}

\aliascntresetthe{assumption}

\newaliascnt{example}{theorem}

\aliascntresetthe{example}

\theoremstyle{remark}
\newaliascnt{remark}{theorem}

\aliascntresetthe{remark}

\theoremstyle{plain}
\usepackage{thm-restate}
\newcommand{\printrepeatedstatement}[2]{%
  \begin{NoHyper}#2*\end{NoHyper}%
}

\crefname{theorem}{Theorem}{Theorems}
\Crefname{theorem}{Theorem}{Theorems}
\crefname{lemma}{Lemma}{Lemmas}
\Crefname{lemma}{Lemma}{Lemmas}
\crefname{proposition}{Proposition}{Propositions}
\Crefname{proposition}{Proposition}{Propositions}
\crefname{corollary}{Corollary}{Corollaries}
\Crefname{corollary}{Corollary}{Corollaries}
\crefname{remark}{Remark}{Remarks}
\Crefname{remark}{Remark}{Remarks}
\crefname{definition}{Definition}{Definitions}
\Crefname{definition}{Definition}{Definitions}
\crefname{equation}{Equation}{Equations}
\Crefname{equation}{Equation}{Equations}
\crefname{section}{Section}{Sections}
\Crefname{section}{Section}{Sections}

\newcommand{\R}{\mathbb{R}}
\newcommand{\E}{\mathbb{E}}
\newcommand{\1}{\mathbf{1}}
\newcommand{\Reg}{\operatorname{Reg}}
\newcommand{\Var}{\operatorname{Var}}
\newcommand{\softmax}{\operatorname{softmax}}
\newcommand{\osc}{\operatorname{osc}}
\newcommand{\diag}{\operatorname{Diag}}
\newcommand{\ip}[2]{\left\langle #1,#2\right\rangle}
\newcommand{\norm}[1]{\left\lVert #1\right\rVert}
\newcommand{\abs}[1]{\left|#1\right|}
\newcommand{\dd}{\mathrm{d}}
\newcommand{\appref}[1]{\hyperref[#1]{Appendix~\ref*{#1}}}

\title{A Logarithmic Regret Bound for Optimistic Hedge in General-Sum Games}

\author{Junsoo Ha \\
\texttt{junsoo.ha.contact@gmail.com}}

\iclrfinalcopy  
\ificlrfinal\hypersetup{pdfauthor={Junsoo Ha}}\fi

\begin{document}

\maketitle
\lhead{Preprint}

\begin{abstract}
Can simple no-regret dynamics attain smaller regret in self-play than against worst adversaries? In $n$-player general-sum games with $d_i$ actions for each player $i$, \citet{DFG2021} proved an $O(n\log d_i\log^4 T)$ individual regret bound for Optimistic Hedge, which improves upon classical $O(\sqrt T)$ adversarial regret bound. We show that Optimistic Hedge with a constant step size achieves $O(\sqrt n\log d_i\log T)$ individual external regret under expected loss-vector feedback. The time-averaged play consequently enjoys a coarse correlated equilibrium gap $O(\sqrt n\log d\log T/T)$, where $d=\max_i d_i$. The improvement comes from a larger admissible step size $\eta=\Theta(1/(\sqrt n\log T))$. Our analysis proves factorial bounds on high-order differences of probability-weighted pairwise loss gaps, then applies finite-difference interpolation in a fixed Euclidean norm. These estimates sharpen the analysis of \citet{DFG2021} and yield a logarithmic regret bound.
\end{abstract}

\section{Introduction}
\label{sec:introduction}

Game theory studies strategic interaction \citep{vonNeumannMorgenstern1944,Nash1951}. How can independent players learn to approach equilibrium \citep{FudenbergLevine1998}? The same question arises in poker self-play \citep{BrownSandholm2019Pluribus}, adversarial training \citep{GoodfellowEtAl2014GAN}, and language-model alignment \citep{MunosEtAl2024NLHF,WuEtAl2026MNPO}.

No-regret learning is a decentralized route. Players compete with their best fixed action in hindsight \citep{Blackwell1956,Hannan1957}. Sublinear individual regret drives average play to coarse correlated equilibrium (CCE) \citep{HartMasColell2001,CBL2006}. While adversarial minimax regret is tightly bounded as $\Theta(\sqrt T)$ \citep{FreundSchapire1997,CesaBianchiEtAl1997}, self-play offers an additional structure: each player's losses evolve with the same update rule.

Optimistic methods exploit such structure and predict future losses from the past \citep{RakhlinSridharan2013,SALS2015}. For Optimistic Hedge, \citet{DFG2021} proved $O(\log^4 T)$ individual regret, and subsequent work achieved logarithmic or constant regret by modifying the dynamics \citep{FarinaEtAl2022,SoleymaniEtAl2025,SoleymaniEtAl2025Fast,AbbadiLarakiMertikopoulos2026}. This line of recent progress leaves us with the following natural question:

\begin{centralquestion}
Can plain Optimistic Hedge achieve logarithmic individual regret in general-sum games?
\end{centralquestion}

\paragraph{Contributions.}
We answer the question affirmatively. Our contributions are two-fold.
\begin{enumerate}[label=(\roman*),leftmargin=*,itemsep=3pt,topsep=2pt]
    \brokenpenalty=10000
    \item \textbf{An improved high-order smoothness analysis.} Building on the high-order smoothness framework of \citet{DFG2021}, we construct a centered-logit recurrence map that transfers derivative bounds to sharper factorial bounds on high-order weighted-gap differences. Then, a direct finite-difference interpolation produces a larger admissible step size. Our smoothness analysis based on a recurrence map may be of independent interest for regret analysis in games.
    \item \textbf{Improved regret bound.} For $n$ players with action dimension $d_i$, we obtain $O(\sqrt n\log d_i\log T)$ individual regret; this improves upon the $O(n\log d_i\log^4 T)$ bound of \citet{DFG2021}. Our result suggests that Optimistic Hedge with a constant step size already suffices: neither modified dynamics nor an adaptive learning rate is necessary.
\end{enumerate}

\Cref{sec:related,sec:setup} review prior work and the setup; \cref{sec:main-results,sec:analysis} present the results and analysis.

\section{Related work}
\label{sec:related}

\paragraph{No-regret learning and games.}
No-regret learning connects individual performance guarantees to equilibrium behavior. Its classical foundations compare a learner's cumulative loss with that of the best fixed action in hindsight \citep{Blackwell1956,Hannan1957}, and algorithms such as Weighted Majority and Hedge achieve this guarantee against arbitrary loss sequences \citep{LittlestoneWarmuth1994,FreundSchapire1997}. When all players learn in this way, their individual guarantees combine: averaging yields approximate Nash equilibria in two-player zero-sum games and coarse correlated equilibria in general-sum games \citep{FreundSchapire1999,MoulinVial1978,HartMasColell2001}. Controlling deviations that depend on a mediator's recommendation gives the stronger notion of correlated equilibrium, motivating connections with calibration, regret matching, and swap regret \citep{Aumann1974,FosterVohra1997,HartMasColell2000,BlumMansour2007}. These connections make regret analysis a key quantitative tool for equilibrium computation: better regret bounds translate directly into faster equilibrium approximation.

\paragraph{Accelerated rates under self-play.}
Self-play offers a crucial structure that admits improved rates: each player's losses evolve with the same shared update rule. Optimistic methods exploit this structure by using past losses to predict future ones \citep{RakhlinSridharan2013,RakhlinSridharan2013Predictable}. Early logarithmic-regret guarantees in two-player zero-sum games demonstrated that faster learning is indeed possible under self-play \citep{DaskalakisDeckelbaumKim2011}, and subsequent work developed increasingly sharp guarantees for optimistic dynamics in general-sum games \citep{SALS2015,ChenPeng2020}. One of the key advances was the high-order smoothness analysis of \citet{DFG2021}, which established $O(n\log d_i\log^4 T)$ individual regret for Optimistic Hedge. A complementary direction sought faster rates by changing the learning rule. Lifting and adaptive learning speeds yield logarithmic external regret \citep{FarinaEtAl2022,SoleymaniEtAl2025,SoleymaniEtAl2025Fast}, while implicit updates and higher-order predictions yield constant regret at every horizon \citep{LiuFarinaOzdaglar2026,AbbadiLarakiMertikopoulos2026}. These results establish strong guarantees for modified dynamics, but leave open how far plain Optimistic Hedge can go. We show that Optimistic Hedge already can achieve logarithmic regret without any modifications.

\paragraph{The high-order smoothness framework.}
\citet{DFG2021} pioneered a framework that analyzes the high-order smoothness of loss sequences under self-play and established the first polylogarithmic individual regret bound for Optimistic Hedge in general-sum games. Their analysis first bounds high-order finite differences of losses, obtained by repeatedly subtracting consecutive terms. They then transfer these bounds one order at a time to first differences---the algorithm's prediction errors---while accounting for changing action probabilities. We build on this framework and analyze probability-weighted differences between losses. Our sharper high-order difference bound and direct comparison of high- and first-order differences together yield logarithmic regret.

\begin{table*}[!t]
\centering
\caption{Individual external regret under simultaneous self-play and exact expected loss-vector feedback; $d=\max_i d_i$. The result of \citet{ChenPeng2020} assumes two players; the others allow $n$ players. \emph{Single loop}: closed-form updates without inner optimization or root finding, excluding feedback cost. The OFTRL/OMD check refers to their entropy-regularized Optimistic Hedge instantiation.}
\label{tab:regret-comparison}
{
\setlength{\tabcolsep}{3pt}
\renewcommand{\arraystretch}{1.12}
\begin{tabular*}{\textwidth}{@{\extracolsep{\fill}}llcl@{}}
\toprule
Reference & Algorithm & Single loop & Regret bound \\
\midrule
\citet{SALS2015} & OFTRL / OMD & \cmark & $O\!\left(\sqrt n\log d\,T^{1/4}\right)$ \\
\citet{ChenPeng2020} & Optimistic Hedge & \cmark & $O\!\left(\log^{5/6}d\,T^{1/6}\right)$ \\
\citet{DFG2021} & Optimistic Hedge & \cmark & $O\!\left(n\log d\log^4T\right)$ \\
\citet{FarinaEtAl2022} & LRL-OFTRL & \xmark & $O(nd\log T)$ \\
\citet{SoleymaniEtAl2025,SoleymaniEtAl2025Fast} & COMWU & \xmark & $O\!\left(n\log^2d\log T\right)$ \\
\citet{LiuFarinaOzdaglar2026} & ECHO-OFTRL & \xmark & $O\!\left(n^{21}\log^4d\right)$ \\
\citet{AbbadiLarakiMertikopoulos2026} & HOOD & \xmark & $O\!\left(n^3\log^2d\right)$ \\
\textbf{This work} & Optimistic Hedge & \cmark & $O\!\left(\sqrt n\log d\log T\right)$ \\
\bottomrule
\end{tabular*}}
\end{table*}

\section{Preliminaries}
\label{sec:setup}

We study repeated play in finite general-sum games within the classical framework of no-regret learning \citep{Blackwell1956,Hannan1957,FudenbergLevine1998,CBL2006}.

\paragraph{Notation.}
For an integer $d\ge1$, define $[d]:=\{1,\ldots,d\}$ and the simplex $\Delta_d:=\{x\in\R_+^d:\sum_a x_a=1\}$. An $n$-player game has action sets $A_i=[d_i]$, joint action space $A:=\prod_{j\in[n]}A_j$, opponents' action space $A_{-i}:=\prod_{j\ne i}A_j$, and losses $\ell_i:A\to[0,1]$. Strategies $x_i\in\Delta_{d_i}$ form the profile $x=(x_j)_{j\in[n]}$, opponents' profile $x_{-i}=(x_j)_{j\ne i}$ and product distribution $\bigotimes_jx_j$ on $A$.

We also use $\ell_i:\prod_{j\ne i}\Delta_{d_j}\to[0,1]^{d_i}$ for the expected loss map. Unlike the scalar loss $\ell_i(a)$ at a pure action profile, the vector $\ell_i(x_{-i})$ is multilinear in the opponents' mixed strategies:
\begin{equation}
[\ell_i(x_{-i})]_{a_i}:=\sum_{a_{-i}\in A_{-i}}\ell_i(a_i,a_{-i})\prod_{j\ne i}x_{j,a_j}.
\label{eq:loss-vector}
\end{equation}
For a horizon $T\ge1$, players simultaneously choose a strategy profile $x^t$ at each round $t\in[T]$. Player $i$ incurs expected loss $\ip{x_i^t}{\ell_i^t}$ and observes the exact loss vector $\ell_i^t:=\ell_i(x_{-i}^t)$, using only its own observed vectors to update. Feedback is noiseless, and no additional game structure is assumed.

\begin{definition}[External regret]
Player $i$'s external regret against the best fixed action in hindsight is
\begin{equation}
\Reg_i(T):=\sum_{t=1}^T\ip{x_i^t}{\ell_i^t}
-\min_{a_i\in A_i}\sum_{t=1}^T\ell_{i,a_i}^t.
\label{eq:regret}
\end{equation}
\end{definition}

Regret $\Reg_i(T)$ may be negative, since adaptive play can outperform every fixed action. We seek bounds on each $\Reg_i(T)$, rather than only their sum.

While Nash equilibrium is one of the most widely used solution concepts in games \citep{Nash1951}, computing one is PPAD-complete even for two-player general-sum games \citep{DaskalakisGoldbergPapadimitriou2009,ChenDengTeng2009}. We therefore consider a standard relaxation, coarse correlated equilibrium (CCE) \citep{MoulinVial1978}, which permits correlated plays.

\begin{definition}[Coarse correlated equilibrium]
For $\varepsilon\ge0$, a distribution $\mu$ over $A$ is an $\varepsilon$-coarse correlated equilibrium ($\varepsilon$-CCE) if
\begin{equation}
\operatorname{Gap}_{\rm CCE}(\mu)
:=\max_{i\in[n]}\max_{a_i'\in A_i}
\E_{a\sim\mu}\!\left[\ell_i(a)-\ell_i(a_i',a_{-i})\right]\le\varepsilon.
\label{eq:cce-gap-definition}
\end{equation}
\end{definition}

\Needspace{6\baselineskip}
No-regret learning provides a direct route to CCE: if every player has sublinear external regret, average play approaches the CCE set \citep{HartMasColell2001,CBL2006}. Specifically, define the time-averaged joint distribution $\widehat\mu_T:=T^{-1}\sum_{t=1}^T\bigotimes_{j\in[n]}x_j^t$. It selects a common round uniformly, then samples from that round's product distribution. By multilinearity,

\begin{equation}
\operatorname{Gap}_{\rm CCE}(\widehat\mu_T)=T^{-1}\max_i\Reg_i(T).
\label{eq:cce-generic}
\end{equation}
The joint distribution $\widehat\mu_T$ averages products, not the individual strategies $x_i^t$.

\paragraph{Optimistic Hedge.}
Hedge is an instance of the follow-the-regularized-leader (FTRL) algorithm with negative entropy regularization \citep{FreundSchapire1997,FreundSchapire1999}. Optimistic Hedge is a variant of Hedge that uses the most recent loss vector $\ell_i^t$ to predict $\ell_i^{t+1}$ \citep{RakhlinSridharan2013,RakhlinSridharan2013Predictable}.

Starting from the uniform strategy $x_i^1=\1_{d_i}/d_i$ and initial prediction $\ell_i^0=0$, all players use a common constant step size $\eta>0$, chosen before play. For $t\ge1$, the variational update is
\begin{equation}
x_i^{t+1}
=\arg\min_{x_i\in\Delta_{d_i}}
\left\{\ip{x_i}{\sum_{s=1}^t\ell_i^s+\ell_i^t}
+\frac1\eta\sum_{a\in A_i}x_{i,a}\log x_{i,a}\right\}.
\label{eq:optimistic-ftrl}
\end{equation}
The entropy regularizer in \cref{eq:optimistic-ftrl} admits an explicit solution. Eliminating cumulative losses between consecutive rounds gives
\begin{equation}
x_{i,a}^{t+1}
=\frac{x_{i,a}^t\exp\{-\eta(2\ell_{i,a}^t-\ell_{i,a}^{t-1})\}}
{\sum_{b\in A_i}x_{i,b}^t\exp\{-\eta(2\ell_{i,b}^t-\ell_{i,b}^{t-1})\}},
\qquad a\in A_i.
\label{eq:omwu}
\end{equation}
With suitable step sizes, Optimistic Hedge improves the adversarial $O(\sqrt T)$ regret bound to $O(T^{1/4})$ \citep{SALS2015} and $O(\log^4T)$ \citep{DFG2021}. Our main result in the next section sharpens these guarantees to $O(\log T)$.

\section{Main result}
\label{sec:main-results}

The following result shows that plain Optimistic Hedge with a horizon-tuned constant step size can achieve a logarithmic individual regret bound in finite general-sum games.

\begin{restatable}[Parameterized individual-regret bound]{theorem}{mainResultRestated}
\label{thm:explicit}
Under the setup of Section~\ref{sec:setup}, there exist universal constants \(c_0,c_1,C>0\) such that, for every player \(i\), integer horizon \(T\ge1\), and \mbox{\(0<\eta\le c_0/\sqrt n\)},
\begin{equation}
\Reg_i(T)\le\frac{\log d_i}{\eta}+C
+C\eta T\exp\!\left(-\frac{c_1}{\sqrt n\eta}\right).
\label{eq:explicit-regret}
\end{equation}
\end{restatable}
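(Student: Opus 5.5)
We follow the high-order smoothness route of \citet{DFG2021} and sharpen it in two places. The starting point is the standard RVU-type bound for Optimistic Hedge \citep{SALS2015}, in its variance-based form from \citet{DFG2021}:
\begin{equation*}
\Reg_i(T)\le\frac{\log d_i}{\eta}+O(\eta)\sum_{t=1}^T\Var_{x_i^t}\!\left(\ell_i^t-\ell_i^{t-1}\right)-\Omega\!\left(\frac1\eta\right)\sum_{t=1}^T\Var_{x_i^t}\!\left(\text{logit increments}\right).
\end{equation*}
The first term already has the shape of \cref{eq:explicit-regret}. So the task is to show that the sum of prediction-error variances $\sum_t\Var_{x_i^t}(\ell_i^t-\ell_i^{t-1})$ is at most $O(1/\eta)$ times the negative stability term, plus an additive error of order $C/\eta+CT\exp(-c_1/(\sqrt n\eta))$. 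Multiplying by $\eta$ then gives the bound.

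\textbf{Step 1 (factorial bounds on high-order differences).} Write the joint dynamics as the iteration of a map on centered logits $z^t$. By \cref{eq:omwu}, this map is an entrywise analytic function of $(z^t,\ell^t,\ell^{t-1})$. We show that the $h$-th forward difference $D^h$ of the probability-weighted pairwise gaps
\begin{equation*}
g_{i,ab}^t=\sqrt{x_{i,a}^tx_{i,b}^t}\,\bigl(\ell_{i,a}^t-\ell_{i,b}^t\bigr)
\end{equation*}
satisfies
\begin{equation*}
\norm{D^h g^t}\le h!\,(C\sqrt n\,\eta)^{h}\cdot(\text{first-order scale})
\end{equation*}
for all $h\le H$. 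The $\sqrt n$ rather than $n$ should come from measuring the perturbation of $\ell_i$ in a fixed Euclidean norm: the multilinear expansion of \cref{eq:loss-vector} then sums the opponents' contributions in $\ell_2$ rather than $\ell_1$. The factorial comes from a Faà di Bruno or Cauchy-majorant argument: one bounds the Taylor coefficients of the recurrence map on a complex polydisc of radius $\Theta(1/(\sqrt n\eta))$ and then inducts on $h$.

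\textbf{Step 2 (interpolation).} Use finite-difference interpolation on windows of length $H$. The Euclidean norm of the first differences over a window is bounded by a small multiple of the window's total variation, namely the $\sum\norm{g^t-g^{t-1}}^2$ that the negative RVU term controls, plus the $H$-th difference times a Newton-type constant. With $H\asymp c_1/(\sqrt n\eta)$ and $\eta\le c_0/\sqrt n$, the factorial bound gives an $H$-th-order remainder of size $(C\sqrt n\eta H/e)^H\lesssim\exp(-c_1/(\sqrt n\eta))$ per round. Summed over $T$ rounds and multiplied by $\eta$, this yields the term $C\eta T\exp(-c_1/(\sqrt n\eta))$. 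Boundary windows near $t=1$ and $t=T$ contribute only the constant $C$.

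\textbf{Step 3 (closing the loop).} Sum the interpolation inequality over all players. This absorbs the positive variance term into the negative stability terms, which requires $\eta\le c_0/\sqrt n$ for small enough $c_0$. Then return to player $i$'s individual inequality.

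\textbf{Main obstacle.} We expect the main obstacle to be Step 1: obtaining a genuine $h!$ growth with base $\sqrt n\eta$ rather than the $h^{O(h)}$ or $n\eta$ of \citet{DFG2021}. We would first try to establish analyticity of the centered-logit map in a complex neighborhood of uniform radius and invoke Cauchy estimates on its iterates.
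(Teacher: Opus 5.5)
The genuine gap is Step~2, not Step~1. Your architecture matches the paper's: a variance-based RVU bound, probability-weighted pairwise gaps in a fixed Euclidean norm, factorial bounds on $k$th differences with base $\sqrt n\,\eta$, and interpolation at order $H\asymp1/(\sqrt n\,\eta)$. Step~1 is workable along your lines: analyticity of $F$ on a complex $\norm{\cdot}_E$-ball of radius $\Theta(1)$ does give the factorial derivative bounds of \cref{lem:F-derivative-bounds}, and a finite-difference Fa\`a di Bruno formula with a majorant recursion propagates them along the trajectory. Cauchy estimates on iterates of the map would not be uniform in $t$.

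Step~2 fails as written, and it is where the improvement is actually realized. First, its right-hand side is wrong. The negative RVU term controls the \emph{levels} $\sum_t\Var_{x_i^t}(\ell_i^{t-1})=\sum_t\norm{h_i^t}_2^2$, not the first differences $\sum_t\norm{g^t-g^{t-1}}^2$. Bounding first differences by a small multiple of first differences is circular. What you need is $\sum\norm{\Delta g}^2\le\theta\sum\norm g^2+(\text{remainder})$ with a fixed $\theta$; the paper uses $\theta=1/4$. Second, such an estimate cannot come from windows of length $H$ and the single order $H$ with a Newton remainder. Every polynomial sequence of degree $<H$ has $\Delta^Hp\equiv0$, yet on $N$ consecutive points its first differences can be comparable in $\ell_2$ to its values unless $N\gtrsim H^2$. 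This is the discrete Markov inequality; for $N\le H$ the alternating sequence already qualifies. So a local inequality with small $\theta$ needs $N\gtrsim H^2$. On such windows the Newton constant is of order $\binom NH$, so with your factorial bound the remainder is about $\binom NH(C\sqrt n\,\eta)^HH!\le(C\sqrt n\,\eta N)^H$. That is huge once $N\gtrsim H^2$ and $H\asymp1/(\sqrt n\,\eta)$. An argument of this type is confined to $H\lesssim(\sqrt n\,\eta)^{-1/2}$. It then yields only a remainder $\exp(-c/\sqrt{\sqrt n\,\eta})$ and loses a factor $\log T$. Your remainder $(C\sqrt n\,\eta H/e)^H$ also omits the interpolation constant, which grows exponentially in the order and must be beaten.

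The missing idea is a \emph{global} interpolation inequality together with a cutoff that respects high-order differences. For a finitely supported sequence, Parseval and the pointwise bound $\lambda\le\theta+\theta^{1-m}\lambda^m$, with $\lambda=|e^{i\omega}-1|^2$, give $\norm{\Delta y}_{\ell_2}^2\le\theta\norm y_{\ell_2}^2+\theta^{1-m}\norm{\Delta^my}_{\ell_2}^2$ (\cref{lem:fourier-tools}). Its constant is exponential in $m$ but independent of any window length. To apply it to $h_i^t$ on $[1,T]$ you cannot truncate sharply, because that creates $m$th differences exponentially large in $m$. The high-order bounds also fail in the first $\approx m$ rounds. \Cref{lem:boundary-extension} handles both by multiplying by $\chi=\mathbf 1_{\{2m+2,\ldots,T+1\}}*b^{*m}$, where $b$ is the uniform average over $L$ consecutive indices, so that $\sup_t|\Delta^j\chi^t|\le(2/L)^j$. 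The Leibniz rule then gives $\norm{\Delta^my^t}\le M(2/L+R\sqrt n\,\eta\,m)^m$ at boundary cost $O(m)$. Taking $L$ large and $m=\lfloor a/(\sqrt n\,\eta)\rfloor$ with $a$ small makes $\theta^{1-m}\gamma^{2m}\le2^{-m}$ for $\gamma:=2/L+Ra$.

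Two smaller points. You omit the moving-weight correction. The prediction error $\Var_{x_i^t}(\ell_i^t-\ell_i^{t-1})$ uses frozen weights while $\Delta g$ changes them. One needs $e^{-3\eta}\le w_{ab}^{t+1}/w_{ab}^t\le e^{3\eta}$ to obtain the $\frac32\sum_t\norm{\Delta h_i^t}_2^2+C\eta^2\sum_t\norm{h_i^t}_2^2+C\eta^2$ bound of \cref{lem:weighted-gap}. Also, summing over players in Step~3 and then returning to player $i$ is not a valid step, and it is unnecessary. The comparison holds player by player. The restriction $\eta\le c_0/\sqrt n$ is needed so that $m\ge1$ and $C(R\sqrt n\,\eta)^mm!\le C(Ra)^m$ is small, not for cross-player absorption.
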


The usual entropy bound leaves an $O(\eta T)$ term for bounded adversarial losses \citep[Lemma~4.1]{DFG2021}. \Cref{thm:explicit} replaces this by $O(\eta T\exp\{-c_1/(\sqrt n\eta)\})$, up to a bounded additive term. Hence, choosing $1/(\sqrt n\eta)$ to be a sufficiently large multiple of $\log(T+2)$ keeps the remainder bounded and gives the following logarithmic regret bound.

\begin{restatable}[Dimension-explicit logarithmic regret]{corollary}{logRegretRestated}
\label{cor:logarithmic}
There are universal constants $\bar c,C>0$ such that, for every player $i$ and integer horizon $T\ge1$, choosing the common constant step size \(\eta_T=\bar c/[\sqrt n\log(T+2)]\) before round~\(1\) gives
\begin{equation}
\Reg_i(T)\le C\sqrt n\,\log d_i\,\log(T+2).
\label{eq:log-regret}
\end{equation}
\end{restatable}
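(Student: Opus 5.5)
The plan is to deduce \cref{cor:logarithmic} directly from \cref{thm:explicit} by substituting the step size $\eta_T=\bar c/[\sqrt n\log(T+2)]$ and bounding each of the three terms in \cref{eq:explicit-regret}. First I will fix $\bar c\le c_0\log 3$ together with an extra condition on $\bar c$ specified below. Since $\log(T+2)\ge\log 3$ for every $T\ge1$, this gives $\eta_T\le c_0/\sqrt n$, so \cref{thm:explicit} applies to every player $i$ and every horizon $T$.

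The first term is $\log d_i/\eta_T=\bar c^{-1}\sqrt n\,\log d_i\log(T+2)$, which already has the target form. For the exponential remainder, note that $1/(\sqrt n\eta_T)=\log(T+2)/\bar c$. Hence
\begin{equation*}
\exp\!\left(-\frac{c_1}{\sqrt n\eta_T}\right)=(T+2)^{-c_1/\bar c}.
\end{equation*}
I now impose $\bar c\le c_1$, so the exponent $c_1/\bar c$ is at least $1$. Using $\eta_T\le c_0$, the remainder satisfies
\begin{equation*}
C\eta_T T\exp\!\left(-\frac{c_1}{\sqrt n\eta_T}\right)\le Cc_0\,\frac{T}{T+2}\le Cc_0 .
\end{equation*}
Together with the additive $C$ from \cref{thm:explicit}, all terms other than the first are bounded by the universal constant $C(1+c_0)$.

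The main obstacle is absorbing this additive constant into $C'\sqrt n\log d_i\log(T+2)$. This works whenever $d_i\ge2$, because then $\sqrt n\log d_i\log(T+2)\ge\log2\log3$. The case $d_i=1$ is degenerate: the player has a single action, so $x_i^t$ is that action at every round and $\Reg_i(T)=0$ directly from \cref{eq:regret}, so the bound holds trivially.

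Combining the cases, the corollary holds with $\bar c=\min\{c_0\log3,\,c_1\}$ and $C'=\bar c^{-1}+C(1+c_0)/(\log2\log3)$. Both constants are universal, since neither depends on $n$, $d_i$, or $T$.
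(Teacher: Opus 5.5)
Your proposal is correct and follows the paper's proof: choose $\bar c\le c_0\log 3$ so that \cref{thm:explicit} applies, rewrite the exponential as $(T+2)^{-c_1/\bar c}$, absorb the bounded terms into $C\sqrt n\log d_i\log(T+2)$ when $d_i\ge2$, and treat $d_i=1$ as zero regret. The only difference is cosmetic: you require $c_1/\bar c\ge1$ and use $\eta_T\le c_0$ to bound the remainder by $Cc_0$, whereas the paper requires $c_1/\bar c\ge2$ so that the remainder is $O(T/(T+2)^2)$; either way the term is bounded by a universal constant.
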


\Cref{cor:logarithmic} improves the $O(n\log d_i\log^4T)$ bound of \citet{DFG2021} in two respects: the horizon dependence decreases from $\log^4T$ to $\log T$, and the player dependence from $n$ to $\sqrt n$. The action dependence remains logarithmic. This improvement comes from the larger admissible step size $\eta=\Theta(1/[\sqrt n\log T])$, rather than $\Theta(1/[n\log^4T])$ in \citet{DFG2021}.

Recent work obtains logarithmic or constant external regret and sublogarithmic swap regret through modified regularization, adaptive step sizes, or higher-order predictions \citep{FarinaEtAl2022,SoleymaniEtAl2025,SoleymaniEtAl2025Fast,Tsuchiya2026Hybrid,LiuFarinaOzdaglar2026,AbbadiLarakiMertikopoulos2026}. The lifted and cautious methods in \cref{tab:regret-comparison}, however, require inner optimization or scalar root finding. \Cref{cor:logarithmic}, on the other hand, attains a logarithmic regret bound with the closed-form update rule in \cref{eq:omwu}.

\Needspace{8\baselineskip}
The following corollary shows that the improved individual regret bound yields a faster CCE approximation. It follows directly from \cref{cor:logarithmic} and the regret-to-CCE identity \cref{eq:cce-generic}.
\begin{restatable}[Coarse-correlated-equilibrium convergence]{corollary}{cceRestated}
\label{cor:cce}
Let \(d:=\max_i d_i\). Under the step-size choice of Corollary~\ref{cor:logarithmic}, there is a universal constant $C>0$ such that, for every integer $T\ge1$,
\begin{equation}
\operatorname{Gap}_{\rm CCE}(\widehat\mu_T)
\le \frac{C\sqrt n\log d\log(T+2)}{T}.
\label{eq:cce-rate}
\end{equation}
\end{restatable}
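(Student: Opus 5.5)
The plan is to combine \cref{cor:logarithmic} with the regret-to-CCE identity \cref{eq:cce-generic}; nothing beyond bookkeeping of constants and the dimension is needed. First, fix $T\ge1$ and run Optimistic Hedge with the common step size $\eta_T=\bar c/[\sqrt n\log(T+2)]$ prescribed in \cref{cor:logarithmic}. For each player $i$, that corollary gives $\Reg_i(T)\le C\sqrt n\log d_i\log(T+2)$ with a universal constant $C$. Because $d_i\le d$ and $\log$ is monotone, every player satisfies the uniform bound $\Reg_i(T)\le C\sqrt n\log d\log(T+2)$. We also use that the right-hand side is nonnegative, since $d\ge1$.

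Second, use the identity \cref{eq:cce-generic}, $\operatorname{Gap}_{\rm CCE}(\widehat\mu_T)=T^{-1}\max_i\Reg_i(T)$. To be safe, I would re-verify it directly. For fixed $i$ and $a_i'$, average over $t$ the quantity $\E_{a\sim\bigotimes_j x_j^t}[\ell_i(a)-\ell_i(a_i',a_{-i})]$. By multilinearity and \cref{eq:loss-vector}, this expectation equals $\ip{x_i^t}{\ell_i^t}-\ell^t_{i,a_i'}$. Summing over $t$ and maximizing over $a_i'$ reproduces $\Reg_i(T)$ from \cref{eq:regret}. Then take the maximum over $i$ and divide by $T$.

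Substituting the uniform regret bound into this identity gives \cref{eq:cce-rate} with the same constant $C$. If the regret happens to be negative, the bound still holds trivially because the right-hand side is nonnegative.

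There is no genuine obstacle here; all the analytic work sits in \cref{thm:explicit} and \cref{cor:logarithmic}. The only points to check are two. First, the constant in \cref{cor:logarithmic} must be independent of $i$, $n$, $d_i$, and $T$, which it is by the statement. Second, the step size must be common to all players and chosen before play, so that the same $\widehat\mu_T$ serves every player simultaneously.
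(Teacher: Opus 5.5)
Your proposal is correct and matches the paper's proof. The paper also verifies $\operatorname{Gap}_{\rm CCE}(\widehat\mu_T)=T^{-1}\max_i\Reg_i(T)$ by multilinearity, averaging $\ip{x_i^t}{\ell_i^t}-\ell_{i,a_i'}^t$ over rounds and maximizing over $a_i'$ and then $i$, and then applies \cref{cor:logarithmic} with $\log d_i\le\log d$.
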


\Cref{cor:cce} improves the CCE approximation rate of Optimistic Hedge from $O(n\log d\log^4T/T)$ to $O(\sqrt n\log d\log T/T)$ \citep{DFG2021}.

\section{Proof Overview}
\label{sec:analysis}

We follow the approach of \citet{DFG2021} in bounding regret through high-order smoothness of self-play. A centered-logit recurrence gives sharper difference bounds, which together with fixed-norm finite-difference interpolation produce a larger admissible step size and logarithmic regret.

\subsection{Reduction to a First-Order Difference Estimate}

For a mixed strategy $x\in\Delta_d$ and a vector of action losses $v\in\R^d$, define the weighted variance $\Var_x(v):=\sum_{a=1}^d x_a(v_a-\ip{x}{v})^2$. Optimistic regret bounds charge prediction error and subtract stability \citep{RakhlinSridharan2013,RakhlinSridharan2013Predictable}, as formalized by the regret bounded by variation in utilities (RVU) framework \citep{SALS2015}. We use the variance refinement of \citet[Lemma~4.1]{DFG2021}, specialized to a small constant step size.

\begin{restatable}[Variance-based RVU bound]{lemma}{regretSimpleRestated}
\label{lem:regret-simple}
There is a universal $\bar\eta>0$ such that, for every player $i$, integer horizon $T\ge1$, and $0<\eta\le\bar\eta$,
\begin{equation}
\Reg_i(T)
\le\frac{\log d_i}{\eta}
 +\frac{2\eta}{3}\sum_{t=1}^T\Var_{x_i^t}(\ell_i^t-\ell_i^{t-1})
 -\frac{\eta}{3}\sum_{t=1}^T\Var_{x_i^t}(\ell_i^{t-1}).
\label{eq:regret-simple}
\end{equation}
\end{restatable}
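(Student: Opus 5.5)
We would prove \cref{lem:regret-simple} with the standard one-step analysis of optimistic FTRL with entropy regularizer, as in \citet[Lemma~4.1]{DFG2021}, writing it in log-partition (potential) form so that the stability term appears as a weighted variance. Fix player $i$ and drop the index. Let $L^t:=\sum_{s\le t}\ell^s$ and $\Phi(L):=\frac1\eta\log\sum_a\exp(-\eta L_a)$, so that $x^{t+1}=\nabla\softmax$-weights of $-\eta(L^t+\ell^t)$, and $-\Phi$ is the convex conjugate potential of the entropy regularizer. Abbreviate $\delta^t:=\ell^t-\ell^{t-1}$ for the prediction error and $m^t:=\ell^{t-1}$ for the prediction used at round $t$. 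By definition, $x^t$ is the softmax of $-\eta(L^{t-1}+m^t)$.

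\textbf{Step 1 (telescoping).} We telescope $\Phi(L^T)-\Phi(0)$ and note that $-\Phi(L^T)\ge\min_a L^T_a-\frac{\log d}{\eta}$ up to the usual sign conventions. This reduces the regret to a sum of per-round terms of the form
\[
\ip{x^t}{\ell^t}-\big[\Phi(L^{t-1})-\Phi(L^{t})\big].
\]
Each such term is split through the intermediate point $L^{t-1}+m^t$, which is where $x^t$ lives. For a direction $v$, the second-order Taylor expansion along $s\mapsto\Phi(L+sv)$ has second derivative $-\eta\Var_{y_s}(v)$, where $y_s$ is the softmax at the intermediate point. This exact identity gives the two per-round contributions:
\begin{itemize}
\item a \emph{prediction-error term}, controlled by $\eta\Var(\delta^t)$ (a Bregman divergence between $L^{t-1}+m^t$ and $L^t$, i.e.\ along direction $\delta^t$);
\item a \emph{stability term} with a favorable sign, equal to $-\eta\Var(m^t)$ along the segment from $L^{t-1}$ to $L^{t-1}+m^t$.
\end{itemize}

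\textbf{Step 2 (freezing the variances at $x^t$).} Both variances must be evaluated at $x^t$ rather than at interpolated softmax points. Since losses lie in $[0,1]$, every increment has sup-norm at most $2$. Moving the logits by at most $2\eta$ in sup-norm changes each weight ratio by a factor of at most $e^{4\eta}$. Hence $\Var_y(v)$ is within a multiplicative factor $e^{\pm O(\eta)}$ of $\Var_{x^t}(v)$, using the representation $\Var_y(v)=\tfrac12\sum_{a,b}y_ay_b(v_a-v_b)^2$.

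\textbf{Step 3 (choosing $\bar\eta$ and obtaining the constants).} We take $\bar\eta$ small enough that these factors lie in $[1-\epsilon,1+\epsilon]$. The exact second-order coefficients are then $\tfrac12\eta$ for the error term and $-\tfrac12\eta$ for the stability term. After the perturbation they become at most $\tfrac{2\eta}{3}$ and at most $-\tfrac{\eta}{3}$. We expect some slack to be needed here: the cross term from the optimistic shift, $\ip{x^t-x^{t+1}}{\cdot}$, is absorbed via $ab\le\tfrac12(a^2+b^2)$ with the variance of $\delta^t$. This absorption is the reason for the $2/3$ rather than $1/2$.

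\textbf{Main obstacle.} The main difficulty is the bookkeeping in Step 1. The term $\Var(\ell^{t-1})$ with the correct index must appear exactly, and only Bregman terms between consecutive iterates are available. We would follow DFG's decomposition
\[
\ip{x^t-x^{t+1}}{\ell^t-m^t}-D(x^{t+1},x^t)-D(x^t,\tilde x^{t})
\]
with $\tilde x^t$ the FTRL point without optimism. We would then lower-bound each KL by $\tfrac{\eta}{2}e^{-O(\eta)}$ times the variance of the logit increment, which is $m^t$ for $D(x^t,\tilde x^t)$.
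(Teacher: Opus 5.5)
Your Steps~1--2 are the paper's proof. With the soft minimum $\Phi(q)=-\frac1\eta\log\sum_a e^{-\eta q_a}$ and $q^t=L^{t-1}+\ell^{t-1}$, so that $x^t=\nabla\Phi(q^t)$, each round contributes $D_\Phi(L^t,q^t)-D_\Phi(L^{t-1},q^t)$. The bound $\frac\eta2e^{-\eta R}\Var_{x^t}(v)\le D_\Phi(q^t+v,q^t)\le\frac\eta2e^{\eta R}\Var_{x^t}(v)$ for $\osc(v)\le R$ comes from the Hessian $-\eta(\diag(p_s)-p_sp_s^\top)$ and probability-ratio bounds, as in your Step~2. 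Applying it with $R=2$ and $R=1$ finishes the proof.

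What needs correcting is the commentary around this. There is no cross term and no bookkeeping obstacle. Both divergences are based at the same point $q^t$, not at consecutive iterates, so their linear parts share the gradient $x^t$ and combine exactly:
\[
D_\Phi(L^t,q^t)-D_\Phi(L^{t-1},q^t)=\ip{x^t}{\ell^t}-\bigl[\Phi(L^t)-\Phi(L^{t-1})\bigr].
\]
After telescoping, the only inequality needed is $\Phi(L^T)\le\min_aL^T_a$, together with $\Phi(0)=-\frac{\log d_i}{\eta}$. The inequality you wrote is the lower side of the soft-min sandwich, which points the wrong way for an upper bound on regret. The passage from $\frac12$ to $\frac23$ and $\frac13$ only absorbs the freezing factors, $\frac12e^{2\eta}\le\frac23$ and $\frac12e^{-\eta}\ge\frac13$; it is not an AM--GM step.

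The fallback decomposition, as written, is not a valid upper bound. With the optimistic iterate $x^{t+1}$, the optimistic-FTRL bound carries only one negative divergence per round, $-\frac1\eta\mathrm{KL}(x^{t+1}\|x^t)$. The two-divergence Rakhlin--Sridharan form needs the non-optimistic point $\tilde x^{t+1}\propto x^te^{-\eta\delta^t}$, with $\delta^t=\ell^t-\ell^{t-1}$, in both the cross term and the first divergence. By the Gibbs variational principle, $\ip{x^t-y}{\delta^t}-\frac1\eta\mathrm{KL}(y\|x^t)$ is maximized at $y=\tilde x^{t+1}$. So substituting $x^{t+1}$ lowers it, and the extra $-\frac1\eta\mathrm{KL}(x^t\|\tilde x^t)$ then counts the stability term twice. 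For instance, with a constant loss vector $\ell_i^t\equiv(0,1)$, your expression falls below the actual regret once $T$ is large. With $\tilde x^{t+1}$ in place, the decomposition is correct and reduces to your Step~1, since $\ip{x^t-\tilde x^{t+1}}{\delta^t}-\frac1\eta\mathrm{KL}(\tilde x^{t+1}\|x^t)=\frac1\eta\mathrm{KL}(x^t\|\tilde x^{t+1})=D_\Phi(L^t,q^t)$ and $\frac1\eta\mathrm{KL}(x^t\|\tilde x^t)=D_\Phi(L^{t-1},q^t)$.
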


\Cref{lem:regret-simple} directs us to bound the prediction-error variance sum by half the stability variance sum, up to a small remainder. A sup-norm bound, however, discards the weights $x_i^t$, while retaining them inside the variance leaves a weighting that changes across rounds. The identity $\Var_x(v)=\sum_{a<b}x_ax_b(v_a-v_b)^2$ suggests absorbing $\sqrt{x_ax_b}$ into each pairwise loss difference to work in a fixed Euclidean norm. This motivates us to define the weighted-gap vector $h_i^t\in\R^{\binom{d_i}{2}}$, for $a<b$, by
\begin{equation}
[h_i^t]_{ab}
:=\sqrt{x_{i,a}^t x_{i,b}^t}
  \bigl(\ell_{i,a}^{t-1}-\ell_{i,b}^{t-1}\bigr).
\label{eq:h}
\end{equation}
The squared norm $\norm{h_i^t}_2^2$ already equals the stability variance. To control prediction-error variance using the same sequence, we must relate it to $\norm{h_i^{t+1}-h_i^t}_2^2$. Prediction error compares consecutive losses under the same weights $x_i^t$, whereas $h_i^{t+1}-h_i^t$ changes both the losses and their weights. The following identity isolates this weight change. For each $a<b$, write $w^t:=\sqrt{x_{i,a}^t x_{i,b}^t}$. Then
\[
w^t\bigl[(\ell_{i,a}^t-\ell_{i,b}^t)-(\ell_{i,a}^{t-1}-\ell_{i,b}^{t-1})\bigr]
=[h_i^{t+1}-h_i^t]_{ab}
 +\left(\frac{w^t}{w^{t+1}}-1\right)[h_i^{t+1}]_{ab}.
\]
The second term corrects for the changing weights. Optimistic Hedge gives $w^t/w^{t+1}-1=O(\eta)$, so the squared correction, summed over action pairs, is at most $O(\eta^2)\norm{h_i^{t+1}}_2^2$. We square this identity, bound the cross term, and sum over action pairs and rounds to obtain the following reduction.

\begin{restatable}[Weighted-gap reduction]{lemma}{weightedGapRestated}
\label{lem:weighted-gap}
There are universal constants \(C>0\) and \(\eta_0>0\) such that, for every player $i$, integer horizon $T\ge1$, and \(0<\eta\le\eta_0\),
\begin{align}
\sum_{t=1}^T\Var_{x_i^t}(\ell_i^{t-1})&=\sum_{t=1}^T\norm{h_i^t}_2^2,
\label{eq:S-gap}\\
\sum_{t=1}^T\Var_{x_i^t}(\ell_i^t-\ell_i^{t-1})&\le\frac32\sum_{t=1}^T\norm{h_i^{t+1}-h_i^t}_2^2
 +C\eta^2\sum_{t=1}^T\norm{h_i^t}_2^2+C\eta^2.
\label{eq:V-gap}
\end{align}
\end{restatable}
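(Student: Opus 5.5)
The proof of \cref{lem:weighted-gap} splits into an exact identity for \cref{eq:S-gap} and a perturbation argument for \cref{eq:V-gap}.

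\textbf{The identity \cref{eq:S-gap}.} This is immediate from the pairwise form of the variance,
\[
\Var_x(v)=\sum_{a<b}x_ax_b(v_a-v_b)^2 .
\]
Take $x=x_i^t$ and $v=\ell_i^{t-1}$. Each summand is then exactly $[h_i^t]_{ab}^2$ by \cref{eq:h}. Summing over $t$ gives the claim. The pairwise form itself follows by expanding $\frac12\sum_{a,b}x_ax_b(v_a-v_b)^2$ and using $\sum_a x_a=1$.

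\textbf{Reducing \cref{eq:V-gap} to the displayed identity.} Apply the same pairwise form with $v=\ell_i^t-\ell_i^{t-1}$. Then
\[
\Var_{x_i^t}(\ell_i^t-\ell_i^{t-1})=\sum_{a<b}\bigl(w^t\Delta_{ab}^t\bigr)^2,
\qquad
\Delta_{ab}^t:=(\ell_{i,a}^t-\ell_{i,b}^t)-(\ell_{i,a}^{t-1}-\ell_{i,b}^{t-1}).
\]
Next I substitute the identity stated before the lemma,
\[
w^t\Delta_{ab}^t=[h_i^{t+1}-h_i^t]_{ab}+r_{ab}^t[h_i^{t+1}]_{ab},
\qquad
r^t_{ab}:=\frac{w^t}{w^{t+1}}-1 .
\]
This identity holds because $w^{t+1}$ cancels inside $r^t_{ab}[h_i^{t+1}]_{ab}$. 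Squaring and applying Young's inequality $(u+v)^2\le(1+\epsilon)u^2+(1+1/\epsilon)v^2$ with $\epsilon=1/2$ gives
\[
\bigl(w^t\Delta_{ab}^t\bigr)^2\le\tfrac32[h_i^{t+1}-h_i^t]_{ab}^2+3\,(r_{ab}^t)^2[h_i^{t+1}]_{ab}^2 .
\]

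\textbf{Bounding the weight ratio.} By \cref{eq:omwu},
\[
\frac{x_{i,a}^{t+1}}{x_{i,a}^t}=\frac{e^{-\eta(2\ell^t_{i,a}-\ell^{t-1}_{i,a})}}{Z^t},
\qquad
Z^t:=\sum_{b\in A_i}x_{i,b}^t\,e^{-\eta(2\ell_{i,b}^t-\ell_{i,b}^{t-1})}.
\]
Since all losses lie in $[0,1]$ (and $\ell_i^0=0$), the exponent lies in $[-2\eta,\eta]$, so $Z^t$ lies in the same range $[e^{-2\eta},e^{\eta}]$. Hence
\[
\Bigl|\log\frac{x_{i,a}^{t+1}}{x_{i,a}^t}\Bigr|\le3\eta,
\qquad
\Bigl|\log\frac{w^t}{w^{t+1}}\Bigr|\le3\eta .
\]
It follows that $|r^t_{ab}|\le e^{3\eta}-1\le6\eta$ for $\eta\le\eta_0$ small. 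Summing over pairs gives
\[
\sum_{a<b}(r_{ab}^t)^2[h_i^{t+1}]_{ab}^2\le108\,\eta^2\norm{h_i^{t+1}}_2^2 .
\]

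\textbf{Index shift and the additive term.} Summing over $t\le T$ produces
\[
C\eta^2\sum_{t=2}^{T+1}\norm{h_i^t}_2^2
\le C\eta^2\sum_{t=1}^{T}\norm{h_i^t}_2^2+C\eta^2\norm{h_i^{T+1}}_2^2 .
\]
For the boundary term, $\norm{h_i^{T+1}}_2^2=\Var_{x_i^{T+1}}(\ell_i^T)\le1/4$ because the losses lie in $[0,1]$. This boundary term is exactly what the additive $C\eta^2$ in \cref{eq:V-gap} absorbs.

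\textbf{Main obstacle.} The only step needing care is the uniform $O(\eta)$ control of $r^t_{ab}$. It must hold at every round, including $t=1$ where $\ell_i^0=0$, and it must not depend on how small the probabilities $x_{i,a}^t$ are. Working with log-ratios, as above, gives this bound independently of the probabilities.
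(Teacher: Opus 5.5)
Your proposal is correct and matches the paper's proof essentially step for step. Both use the pairwise variance identity for \cref{eq:S-gap}. For \cref{eq:V-gap}, both use the same moving-weight identity, the split $(a+b)^2\le\frac32a^2+3b^2$, the uniform bound $e^{-3\eta}\le w^{t+1}/w^t\le e^{3\eta}$ obtained from the update's normalizer (the paper phrases it via $\osc(2\ell_i^t-\ell_i^{t-1})\le3$, you via per-action log-ratios, which is the same computation), and the index shift with $\norm{h_i^{T+1}}_2^2\le1/4$ absorbed into the additive $C\eta^2$.
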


\Cref{lem:regret-simple,lem:weighted-gap} reduce bounding regret to comparing the two sums $\sum_{t=1}^T\norm{h_i^{t+1}-h_i^t}_2^2$ and $\sum_{t=1}^{T+1}\norm{h_i^t}_2^2$; the following lemma provides that key comparison.

\Needspace{5\baselineskip}
\begin{lemma}[First-order difference estimate]
\label{lem:explicit-temporal}
There are universal constants \(c,C>0\) such that, for every player $i$, integer horizon $T\ge1$, and \(0<\eta\le c/\sqrt n\),
\begin{equation}
\sum_{t=1}^T\norm{h_i^{t+1}-h_i^t}_2^2
\le\frac14\sum_{t=1}^{T+1}\norm{h_i^t}_2^2
 +C\left(\frac1{\sqrt n\,\eta}
 +T e^{-c/(\sqrt n\eta)}\right).
\label{eq:explicit-temporal}
\end{equation}
\end{lemma}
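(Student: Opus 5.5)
Our plan follows the high-order smoothness route of \citet{DFG2021}, but works directly with the vector sequence $(h_i^t)_t$ in the fixed Euclidean norm. We let $D^m$ denote the $m$-th forward difference in $t$, so $D h^t=h^{t+1}-h^t$. The argument has three steps.

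\textbf{Step 1: factorial bounds on high-order differences.} We show that for every order $m\le M$ and every window,
\[
\norm{D^m h_i^t}_2\le (C_0\sqrt n\,\eta)^m\, m!\cdot B,
\]
with $B=O(1)$. We do this through the centered-logit recurrence. Write $x_i^{t}=\softmax(z_i^t)$, where $z_i^{t+1}=z_i^t-\eta(2\ell_i^t-\ell_i^{t-1})$, and center each logit vector. The whole profile $(z^t)$ then evolves by a map $z^{t+1}=\Phi(z^t,z^{t-1})$ whose increment is $\eta$ times a multilinear function of softmaxes. Each $\ell_i$ is multilinear in $n-1$ opponents, and the softmax has analytic derivatives with factorial (Cauchy-type) bounds. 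Taking $m$-th finite differences of such a composition therefore yields a bound of the form $m!\,(C\eta\cdot\text{(opponent coupling)})^m$.

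The crucial gain over the $n$ of \citet{DFG2021} should come from measuring the opponent coupling in a weighted $\ell_2$ sense. The perturbations from the $n-1$ opponents add in quadrature in variance, and Cauchy--Schwarz across players gives $\sqrt n$ rather than $n$. Each coordinate $[h_i^t]_{ab}$ is a smooth function of $(z^t,z^{t-1})$, namely $\sqrt{x_ax_b}$ times a loss gap. Its differences therefore inherit the same factorial bound by a Fa\`a di Bruno / discrete chain-rule argument, which is the step used in \citet{DFG2021} to transfer derivative bounds to differences.

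\textbf{Step 2: finite-difference interpolation.} We relate $\sum_t\norm{Dh^t}^2$ to $\sum_t\norm{h^t}^2$ plus $\sum_t\norm{D^M h^t}^2$. In the Fourier picture on a window, $\norm{D h}^2\le \epsilon\norm{h}^2+C_\epsilon\norm{D^Mh}^2$ for the discrete operator; we want $\epsilon=1/4$. Concretely, we will use a Kolmogorov–Landau type inequality for sequences, $\sum\norm{Dh}^2\le \tfrac14\sum\norm{h}^2+ 4^{M}\sum\norm{D^Mh}^2+\text{boundary}$. It follows since $|2\sin(\theta/2)|^2\le 1/4$ or else $|2\sin(\theta/2)|^{2M}\ge 4^{-M}\cdot|2\sin(\theta/2)|^2$. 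We apply it on blocks of length $L\asymp M$. The boundary terms of each block are $O(M)$ times squared norms, each of size $O(1)$. We control them by summing over blocks with a smooth window, or bound them crudely by $O(1/(\sqrt n\eta))$ total at the start.

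\textbf{Step 3: choice of $M$.} By Step 1, $\norm{D^Mh}^2\le (C_0\sqrt n\eta\,M)^{2M}$ up to constants, using $m!\le m^m$. We choose $M\asymp c'/(\sqrt n\eta)$ so that $4C_0\sqrt n\eta M\le e^{-1}$. Then $4^M\norm{D^Mh}^2\le e^{-2M}=e^{-c/(\sqrt n\eta)}$ per round, which produces the $Te^{-c/(\sqrt n\eta)}$ term. The initial transient, where the window has fewer than $M$ past rounds, costs $O(M)=O(1/(\sqrt n\eta))$ rounds of bounded size. This gives the first term of \eqref{eq:explicit-temporal}.

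The main obstacle is Step 1. We need genuinely $m!$ (not $m^{O(m)}$ with a worse base or extra $\log$ factors) and $\sqrt n$ coupling, uniformly up to order $M\sim 1/(\sqrt n\eta)$. We would first try to prove analyticity of $t\mapsto h^t$ by embedding the recurrence into a complex neighborhood of radius $\rho\asymp 1/(\sqrt n\eta)$ in time, via a majorant-series induction on the centered logits. Cauchy estimates would then give the factorial bounds.
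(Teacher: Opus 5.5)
There is a genuine gap. Your architecture does match the paper's: a factorial bound on $\Delta^m h_i^t$ at scale $\sqrt n\eta$, the Parseval split $\lambda\le\theta+\theta^{1-M}\lambda^M$ with $\theta=1/4$, and $M\asymp(\sqrt n\eta)^{-1}$.

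\textbf{Where Step 2 fails.} The problem is the localization. The interpolation inequality holds only for finitely supported (or periodic) sequences. You must therefore replace $h_i^t$ by such a sequence without destroying the $M$-th difference bound, and neither of your options does this.
\begin{itemize}
\item \emph{Sharp truncation to a block.} Near the cut, $\Delta^M$ of the truncated sequence is a partial alternating binomial sum. When $h$ is nearly constant there, its size is about $\binom{M-1}{\lfloor(M-1)/2\rfloor}\approx 2^M/\sqrt M$ times $\norm{h}_2$. After the factor $4^{M-1}$, each cut costs on the order of $16^M$, not $O(M)$.
\item \emph{Summing over blocks.} Even at $O(M)$ per block, blocks of length $\asymp M$ give a total of order $T$, which is the trivial bound. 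Your ``crude $O(1/(\sqrt n\eta))$ at the start'' covers only the initial transient, not the interior cuts.
\item \emph{A generic smooth window.} This does not help unless it is built for the order $M$. What you need is $\sup_t\abs{\Delta^j\chi^t}\le(2/L)^j$ simultaneously for all $j\le M$, with transition width $O(M)$. A fixed compactly supported profile rescaled to width $W$ cannot deliver this uniformly in $M$, because such profiles are not analytic and their derivatives eventually outgrow $C^kk!$.
\end{itemize}

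\textbf{The missing ingredient.} The paper uses a finite-support cutoff applied once to the whole horizon, not blockwise. It takes $\chi=\mathbf 1_{\{2M+2,\ldots,T+1\}}*b^{*M}$, where $b$ averages uniformly over $L$ consecutive integers and $L$ is a fixed large constant.
\begin{itemize}
\item This gives $\sup_t\abs{\Delta^j\chi^t}\le\norm{\Delta b}_{\ell_1}^j=(2/L)^j$ for every $j\le M$, with transitions of width $O(LM)$.
\item Starting after index $2M+2$ ensures that every $\Delta^rh_i^t$ appearing in the Leibniz expansion has $t\ge r+2$.
\item With $y:=\chi h_i$ (zero off the support of $\chi$), Leibniz's rule and $r!\le M^r$ give $\sup_t\norm{\Delta^My^t}\le C(2/L+R\sqrt n\eta M)^M$.
\item One then takes $L$ large and $M=\lfloor a/(\sqrt n\eta)\rfloor$ with $a$ small so that $4(2/L+Ra)^2\le1/2$. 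Only then is the factor $4^{M-1}$ beaten, while the two ends cost $O(LM)=O(1/(\sqrt n\eta))$.
\end{itemize}
Your Step 3 omits the taper's contribution entirely.

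\textbf{Step 1 is asserted, not proved.} Two points are missing.
\begin{itemize}
\item A DFG-style discrete chain rule yields $k^{O(k)}$ rather than a single $k!$, as you yourself suspect.
\item You need a $d_i$-free constant $B$ for the Euclidean norm over $\binom{d_i}{2}$ pairs.
\end{itemize}
The paper gets these from four ingredients:
\begin{itemize}
\item the cumulant identity $D^r(F_{i,a}-F_{i,b})=\kappa(g,S_1,\ldots,S_r)$ with Hoeffding moment bounds on the $S_q$, where $\sqrt n$ enters only when the $n$ player blocks are combined in $\norm{\cdot}_E$;
\item a cube-interpolation chain rule whose bound is exactly the coefficient of the composed power series;
\item the majorant comparison $A'\preceq K/(1-RA)$;
\item the fact that derivatives of $\sqrt{x_{i,a}}$ keep $\sqrt{x_{i,a}}$ as a prefactor, so $\sum_a x_{i,a}=1$ removes the pair count.
\end{itemize}
Your analyticity-in-time sketch is not developed far enough to replace these.
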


Together with \cref{lem:weighted-gap}, this estimate controls the variance terms in \cref{eq:regret-simple} and proves \cref{thm:explicit}. We derive this key lemma in the following subsections.

\subsection{Reduction to a High-Order Difference}

For a sequence $y=(y^t)_{t\in\mathbb Z}$, define finite differences by $\Delta^0y^t:=y^t$, $\Delta y^t:=y^{t+1}-y^t$, and $\Delta^{k+1}y^t:=\Delta(\Delta^ky^t)$, with squared sequence norm $\norm y_{\ell_2}^2:=\sum_t\norm{y^t}^2$ for finite support. \Cref{lem:weighted-gap}~motivates bounding the squared first differences $\norm{\Delta h_i^t}_2^2$ by the squared norms $\norm{h_i^t}_2^2$. The following lemma provides this comparison with an additional high-order difference term.

\begin{restatable}[Finite-difference interpolation]{lemma}{directFourierRestated}
\label{lem:fourier-tools}
For every finitely supported sequence $y$ in a finite-dimensional Hilbert space, integer \(m\ge1\), and \(\theta\in(0,1)\),
\begin{equation}
\norm{\Delta y}_{\ell_2}^2
\le \theta\norm{y}_{\ell_2}^2
 +\theta^{1-m}\norm{\Delta^m y}_{\ell_2}^2.
\label{eq:direct-fourier}
\end{equation}
\end{restatable}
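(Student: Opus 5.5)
We pass to the frequency domain and reduce the inequality to a pointwise scalar bound on the Fourier symbols. Since $y$ is finitely supported in a finite-dimensional Hilbert space $H$, its Fourier series $\hat y(\omega):=\sum_t y^t e^{-\mathrm i t\omega}$ is a trigonometric polynomial with values in $H\otimes\mathbb C$. By Parseval, $\norm{y}_{\ell_2}^2=\frac1{2\pi}\int_{-\pi}^{\pi}\norm{\hat y(\omega)}^2\,\dd\omega$. The shift acts as multiplication by $e^{\mathrm i\omega}$, so $\widehat{\Delta^k y}(\omega)=(e^{\mathrm i\omega}-1)^k\hat y(\omega)$, and
\[
\abs{e^{\mathrm i\omega}-1}^2=4\sin^2(\omega/2)=:s(\omega)\in[0,4].
\]
Working coordinatewise in an orthonormal basis of $H$, or directly with the Hilbert-space-valued transform, the three norms become
\[
\frac1{2\pi}\int_{-\pi}^{\pi} s^{k}\norm{\hat y}^2\,\dd\omega
\qquad\text{for } k=1,\ 0,\ m \text{ respectively.}
\]
Thus it suffices to prove the pointwise inequality
\[
s\le \theta+\theta^{1-m}s^m\qquad\text{for all } s\in[0,4].
\]
We then multiply it by $\norm{\hat y(\omega)}^2$ and integrate.

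For the scalar inequality, split into two cases.
\begin{itemize}
\item If $s\le\theta$, the claim is immediate, since the second term is nonnegative.
\item If $s>\theta$, write $s=\theta r$ with $r>1$. The right-hand side is at least $\theta^{1-m}s^m=\theta r^m\ge\theta r=s$, because $r^m\ge r$ whenever $r\ge1$ and $m\ge1$.
\end{itemize}
This proves the pointwise bound for all $s\ge0$, so the restriction $s\le4$ is not even needed.

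No step is a real obstacle. The only points needing care are that the Fourier/Parseval identity applies to vector-valued finitely supported sequences, and that $\Delta^m y$ is again finitely supported. Both are immediate: everything reduces to finitely many scalar coordinate sequences, and Parseval holds for each of them separately. The resulting coordinatewise inequalities then sum to \cref{eq:direct-fourier}.
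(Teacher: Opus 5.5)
Your proof is correct and follows the same route as the paper: Parseval with the symbol $(e^{\mathrm i\omega}-1)^k$ reduces the claim to the pointwise bound $\lambda\le\theta+\theta^{1-m}\lambda^m$ for $\lambda=|e^{\mathrm i\omega}-1|^2$, proved by splitting on $\lambda\le\theta$ versus $\lambda>\theta$. Your substitution $\lambda=\theta r$ simply spells out the second case, which the paper leaves implicit.
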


The proof applies Parseval's identity and splits frequencies $\omega\in[-\pi,\pi]$ according to whether the condition $|e^{i\omega}-1|^2\le\theta$ holds or not. The first term $\theta\norm y_{\ell_2}^2$ bounds the low-frequency contribution to $\norm{\Delta y}_{\ell_2}^2$, and the high-order term $\theta^{1-m}\norm{\Delta^m y}_{\ell_2}^2$ bounds the contribution from the remaining frequencies. Hence, \cref{lem:fourier-tools} suggests proving \cref{lem:explicit-temporal} by deriving a sharp bound on $\norm{\Delta^m h_i^t}_2$.

\subsection{The Centered-Logit Recurrence and High-Order Difference Estimate}

To bound the high-order differences $\Delta^kh_i^t$ of the weighted gaps, we express Optimistic Hedge through a centered-logit state $z^t$. Derivative bounds for the centered loss map $F$ and weighted-gap map $H_i$, combined with the centered-logit recurrence and a chain rule, then give the required estimates.

\paragraph{Centered-logits as a state.}
Since each $x_i^t$ satisfies $x_{i,a}^t>0$ for all $a\in A_i$, we can define the \emph{centered logits} $z_i^t:=P_i\log x_i^t$, where the centering map $P_iv:=v-d_i^{-1}(\sum_bv_b)\1_{d_i}$ subtracts the coordinate mean. The logits $z_i^t$ encode relative probabilities: $z_{i,a}^t-z_{i,b}^t=\log(x_{i,a}^t/x_{i,b}^t)$, and the softmax map $[\softmax(v)]_a:=e^{v_a}/\sum_b e^{v_b}$ recovers the strategy $x_i^t=\softmax(z_i^t)$. We write $\osc(v):=\max_av_a-\min_av_a$ for the coordinate span and define the centered-logit spaces and product norm by
\begin{equation}
E_i:=\{z_i\in\R^{d_i}:\1_{d_i}^{\top}z_i=0\},
\qquad E:=\prod_{i=1}^n E_i,
\qquad
\norm{z}_E:=\left(\sum_{i=1}^n\osc(z_i)^2\right)^{1/2}.
\label{eq:logit-product-norm}
\end{equation}
\paragraph{Centered-logit recurrence.}
Define the centered loss map $F:E\to E$, which maps centered logits to negated centered loss vectors, and the weighted loss-gap map $H_i:E\times E\to\R^{\binom{d_i}{2}}$, which maps two centered-logit profiles to weighted gaps, by
\begin{align}
F_i(z)&:=[F(z)]_i=-P_i\ell_i\bigl((\softmax(z_j))_{j\ne i}\bigr),
\label{eq:F-main}\\
[H_i(z,z')]_{ab}
&:=\sqrt{[\softmax(z_i)]_a[\softmax(z_i)]_b}
 \bigl([F(z')]_{i,b}-[F(z')]_{i,a}\bigr),
\qquad a<b.
\label{eq:H-main}
\end{align}
The map $H_i(z,z')$ uses $z$ for the probability weights and $z'$ for the losses. For the joint centered-logit state $z^t:=(z_i^t)_{i=1}^n$, \cref{eq:omwu} gives $z^1=0$, $z^2=2\eta F(z^1)$, and, for $t\ge2$,
\begin{equation}
z^{t+1}=z^t+\eta\bigl(2F(z^t)-F(z^{t-1})\bigr),
\qquad h_i^t=H_i(z^t,z^{t-1}).
\label{eq:main-logit-gap-representation}
\end{equation}
\paragraph{High-order difference.}
Applying $k-1$ finite differences to the increment $\Delta z^t$ in the centered-logit recurrence gives, for $k\ge1$ and $t\ge2$,
\begin{equation}
\Delta^kz^t=\eta\bigl(2\Delta^{k-1}F(z^t)-\Delta^{k-1}F(z^{t-1})\bigr).
\label{eq:logit-difference-recurrence}
\end{equation}
Hence loss differences of order $k-1$ control centered-logit differences of order $k$, with a factor $\eta$. We will transfer this control through $H_i$ and prove the following estimate:

\begin{restatable}[High-order difference estimate of the weighted gaps]{lemma}{highOrderGapRestated}
\label{lem:high-order-gap}
There are universal constants $C,R>0$ such that, for every player $i$ and integer $k\ge1$,
\begin{equation}
\sup_{t\ge k+2}\norm{\Delta^kh_i^t}_2
\le C(R\sqrt n\,\eta)^k k!.
\label{eq:high-order-gap}
\end{equation}
\end{restatable}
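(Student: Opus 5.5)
We will prove \cref{lem:high-order-gap} by induction on the order of differences, working with the centered-logit recurrence \cref{eq:logit-difference-recurrence} and real-analytic bounds on $F$ and $H_i$. The plan has three ingredients.

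\textbf{(i) Derivative bounds.} The first ingredient is a bound on the derivatives of the maps, measured in the product norm $\norm{\cdot}_E$. Each coordinate of $F_i$ is a multilinear function of the $n-1$ opponent strategies $\softmax(z_j)$, with values in $[0,1]$. We will show that its $r$-th derivative satisfies
\[
\norm{D^rF_i(z)[u_1,\dots,u_r]}_\infty\le C_0 R_0^r r!\prod_s\norm{u_s}_E .
\]
This holds with $R_0$ independent of $n$ when the directions are measured in the $\ell_2$ product of the oscillations, up to a factor $\sqrt n$ that will be absorbed later. The mechanism is as follows. Along a direction $u$, the function $\tau\mapsto \ell_i(\softmax(z_j+\tau u_j))$ extends holomorphically to a complex disc. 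The log of each softmax factor perturbs by at most $O(\abs\tau\osc(u_j))$. Multilinearity then turns the product of $n-1$ factors into $\exp(O(\abs\tau\sum_j\osc(u_j)))$, which is at most $\exp(O(\abs\tau\sqrt n\norm{u}_E))$. Cauchy estimates on a disc of radius $\asymp 1/(\sqrt n\norm u_E)$ then yield the $r!$ growth with rate $R_0\sqrt n$.

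The same argument applies to $H_i$, which is jointly analytic in $(z,z')$. Its output is measured in $\ell_2$ over pairs, and $\sum_{a<b}x_ax_b\le 1/2$ ensures the $\ell_2$ norm stays bounded by a universal constant. Summing over $i$ in the $\ell_2$ sense gives $\norm{F(z)}_E$-type bounds of order $\sqrt n$.

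\textbf{(ii) Transferring to differences.} The second ingredient passes from derivatives to finite differences. Set $D_k:=\sup_{t}\norm{\Delta^k z^t}_E$ for $k\ge1$. We will use the discrete Faà di Bruno/chain-rule lemma of \citet{DFG2021}, or an analogue proved via the generating-function method, applied to the composite sequence $G(z^t)$ with $G$ analytic at rate $\rho$. That lemma gives
\[
\norm{\Delta^{k}G(z^t)}\le \sum_{\pi}\, C\rho^{|\pi|}|\pi|!\prod_{B\in\pi}\widetilde D_{|B|},
\]
where $\widetilde D$ bounds increments over a window. Combining this with \cref{eq:logit-difference-recurrence}, which gives $D_k\le 3\eta\sup\norm{\Delta^{k-1}F(z^\cdot)}_E$, we will prove inductively that
\[
D_k\le \eta (R\sqrt n\eta)^{k-1}(k-1)!\,\cdot C\sqrt n .
\]
The base case is immediate: $D_1\le 3\eta\sup\norm{F}_E\le 6\eta\sqrt n$. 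The inductive step relies on the standard majorant-series fact. A sequence with $D_k\le \alpha\beta^{k-1}(k-1)!$, fed through an analytic function of radius $1/\rho$, has $k$-th difference bounded by $C(\beta+c\rho\alpha)^k k!$. We then choose $R$ large so that the resulting constant closes the induction, using $\eta\le c/\sqrt n$ to keep $\rho\alpha$ of order $R\sqrt n\eta$.

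\textbf{(iii) Concluding.} With these bounds in hand, the final step applies the same transfer to $h_i^t=H_i(z^t,z^{t-1})$. The shifted argument only doubles the relevant increments, and this yields $\norm{\Delta^kh_i^t}_2\le C(R\sqrt n\eta)^kk!$. The restriction $t\ge k+2$ ensures that all rounds involved are at indices $\ge2$, where the recurrence \cref{eq:main-logit-gap-representation} holds.

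\textbf{Main obstacle.} We expect the hardest part to be the discrete chain rule in (ii): getting exactly $k!$ growth rather than $(k!)^{1+\epsilon}$ or extra $\log$ factors, together with a rate that is linear in $\sqrt n\eta$. Our first attempt will be a generating-function/majorant argument. Formally, we encode $\Delta^k z^t$ as Taylor coefficients of a shifted analytic interpolant. We then bound compositions via Cauchy majorants, which handles the factorials cleanly. The fallback is to replicate the combinatorial bookkeeping of \citet{DFG2021} with the sharper analytic-radius constants from (i).
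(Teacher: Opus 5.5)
Your outline (derivative bounds for $F$ and $H_i$, a discrete chain rule, an induction through \cref{eq:logit-difference-recurrence}, then transfer through $H_i$) matches the paper's, but two steps fail as written.

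\textbf{The derivative bound in (i).} The $n$-free bound you display is true (it follows from \cref{eq:Fi-derivative}), but your holomorphic mechanism does not prove it. Bounding each complexified softmax factor by $e^{O(\abs{\tau}\osc(u_j))}$ and multiplying gives $\exp(O(\abs{\tau}\sum_j\osc(u_j)))$, and $\sum_j\osc(u_j)$ can be as large as $\sqrt n\norm{u}_E$. As you note, Cauchy estimates then give rate $R_0\sqrt n$ per derivative order.

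This extra $\sqrt n$ per order cannot be absorbed later. With $c_p\asymp\sqrt n(R_0\sqrt n)^p$ the recurrence becomes $A'(s)\preceq K/(1-R_0\sqrt n\,A(s))$ with $K\asymp\sqrt n\,\eta$. Its coefficients grow like $(2KR_0\sqrt n)^k\asymp(Cn\eta)^k$, so you would only get $C(Rn\eta)^kk!$. That is the player dependence of \citet{DFG2021}, not $\sqrt n$. The constraint $\eta\le c/\sqrt n$ does not help, since $\rho\alpha\asymp n\eta$ is not $O(\sqrt n\,\eta)$ uniformly in $n$.

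The missing idea is cancellation across independent opponents. After centering, the perturbation is a sum of independent, centered, bounded terms whose fluctuations have size $(\sum_j\osc(u_j)^2)^{1/2}=\norm{u}_E$, not $\sum_j\osc(u_j)$. The paper exploits this in three moves:
\begin{itemize}
\item it writes $D^r(F_{i,a}-F_{i,b})(z)[v^1,\dots,v^r]=\kappa(g,S_1,\dots,S_r)$;
\item it bounds $\norm{S_q}_{L^p}\lesssim\sqrt p\,\norm{v^q}_E$ by Hoeffding;
\item it applies the single-factorial cumulant bound \cref{lem:cumulant}.
\end{itemize}
A holomorphic proof can be repaired the same way. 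With $X_j:=u_j(\mathsf A_j)-\E u_j(\mathsf A_j)$, one has $\abs{\log\E e^{\tau X_j}}\lesssim\abs{\tau}^2\osc(u_j)^2$, which controls both numerator and denominator on a disc of radius $\asymp1/\norm{u}_E$. Without this repair, the same loss also enters $H_i$ through its factor $F_i(z')$.

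\textbf{The induction in (ii).} It cannot close in the form stated. Take the hypothesis $D_k\le\alpha\beta^{k-1}(k-1)!$ with $\alpha\asymp\sqrt n\,\eta$ and $\beta=R\sqrt n\,\eta$. If your feed-through fact returns $\norm{\Delta^kF(z^t)}_E\lesssim\sqrt n(\beta+c\rho\alpha)^kk!$, then \cref{eq:logit-difference-recurrence} gives $D_{k+1}\lesssim\sqrt n\,\eta(\beta+c\rho\alpha)^kk!$. Recovering the hypothesis at order $k+1$ then needs $(1+c\rho\alpha/\beta)^k=O(1)$ for all $k$.

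Since $\rho\alpha/\beta$ is a fixed positive constant, enlarging $R$ only postpones the failure. Indeed, composing the series with coefficients $\alpha\beta^{k-1}/k$ into $\rho A/(1-\rho A)$ always produces a singularity strictly inside radius $1/\beta$. Passing from $\Delta^{k-1}F(z^t)$ to $\Delta^kz^t$ gains only one factor $\eta$, which cannot offset a rate increase compounding over all $k$ orders. So the rate must be found as a fixed point, not guessed and verified.

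The paper does this in two steps. First, an exact-coefficient chain rule, $\sup_{t}\norm{\Delta^k\Psi(u^t)}/k!\le[s^k]\sum_pc_p(\sum_ra_rs^r)^p$, built from multilinear interpolation on $[0,1]^k$ and the mixed Fa\`a di Bruno formula with no extra partition count. This turns the recurrence into $A'(s)\preceq K/(1-RA(s))$ with $K=3M\sqrt n\,\eta$. Second, coefficientwise comparison with the exact solution $W(s)=(1-\sqrt{1-2KRs})/R$, whose square-root singularity fixes the rate $2KR$, gives $a_k\le C_R(C_RK)^k$.

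Your planned majorant-series route can work, but only with such a self-consistent comparison. Even then it yields rate $\sqrt n\,\eta$ only if $R$ is independent of $n$, which is exactly what the first gap breaks.
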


Notably, the weighted-gap estimate in \cref{eq:high-order-gap} becomes exponentially small at a suitable difference order. To see this, write $\delta:=\sqrt n\eta$ and choose the difference order $m=\lfloor a/\delta\rfloor$, with $a>0$, $Ra<1$, and $\delta\le a/2$. Then $m\ge a/(2\delta)$ and
\[
\sup_{t\ge m+2}\norm{\Delta^m h_i^t}_2
\le C(R\delta)^m m!
\le C(R\delta m)^m
\le C(Ra)^m
\le C e^{-c/\delta}.
\]
Using $m!\le m^m$ and choosing $a$ small enough, this decay offsets the interpolation factor $\theta^{1-m}$. Together with the finite-difference interpolation in \cref{lem:fourier-tools}, \cref{lem:high-order-gap} yields \cref{lem:explicit-temporal}. The remaining task is therefore to prove the weighted-gap bound in \cref{lem:high-order-gap}.

\subsection{Proof of the High-Order Difference Estimate}
\label{sec:high-order-proof}

We first bound the high-order differences $\norm{\Delta^k z^t}_E$ of the centered logits using derivative bounds and a finite-difference chain rule. We then transfer these bounds through the weighted-gap map $H_i$ to prove \cref{lem:high-order-gap}. For a smooth map $\Psi$, write $D^r\Psi(x)$ for its $r$th Fr\'echet derivative at $x$ and $\norm{D^r\Psi(x)}$ for its induced operator norm; see \appref{app:frechet-derivatives} for the rigorous definition. For a real formal power series $U(s)$, let $[s^k]U$ denote the coefficient of $s^k$.

\paragraph{Step 1: control high-order derivatives $\norm{D^rF}_E$.}
We bound the derivatives $D^rF$ through coordinate differences of the centered loss map $F_i$. Write $F_{i,a}$ for its action-$a$ coordinate. For $z\in E$ and directions $v^1,\ldots,v^r\in E$, the product norm $\norm{\cdot}_E$ from \cref{eq:logit-product-norm} gives
\[
\norm{D^rF(z)[v^1,\ldots,v^r]}_E
\le\sqrt n\,\max_i\max_{a,b\in A_i}
\abs{D^r(F_{i,a}-F_{i,b})(z)[v^1,\ldots,v^r]}.
\]
Now suppose we draw the opponents' actions $\mathsf A_j$ independently from $\softmax(z_j)$ for $j\ne i$. Their joint action $\mathsf A_{-i}$ determines the loss gap $g:=\ell_i(b,\mathsf A_{-i})-\ell_i(a,\mathsf A_{-i})$. For each direction $v^q$, let $v_j^q\in E_j$ be the perturbation to player $j$'s logits; evaluating these perturbations at the same opponent actions and centering their sum gives $S_q:=\sum_{j\ne i}\bigl(v_j^q(\mathsf A_j)-\E v_j^q(\mathsf A_j)\bigr)$, and the softmax in $F_i$ gives the following joint-cumulant representation
\[
D^r(F_{i,a}-F_{i,b})(z)[v^1,\ldots,v^r]
=\kappa(g,S_1,\ldots,S_r).
\]
Here $\kappa(g,S_1,\ldots,S_r)$ is the joint cumulant of $g,S_1,\ldots,S_r$, defined by
\[
\kappa(Y_1,\ldots,Y_m)
:=\sum_{\pi\in\Pi_m}(-1)^{|\pi|-1}(|\pi|-1)!
  \prod_{B\in\pi}\E\prod_{j\in B}Y_j,
\]
for real random variables $Y_1,\ldots,Y_m$ with finite mixed moments, where $\Pi_m$ is the set of partitions of $[m]$ into nonempty blocks \citep{LeonovShiryaev1959}.

The joint-cumulant bound in \cref{lem:cumulant}, boundedness of $g$, and moment bounds for the centered sums $S_q$ in terms of $\norm{v^q}_E$ yield the following derivative estimate; \appref{app:loss-derivative-proof} gives the proof.

\begin{restatable}[Factorial derivatives of the centered loss map]{lemma}{lossResponseDerivativeRestated}
\label{lem:F-derivative-bounds}
There are universal constants $A,R>0$ such that every integer $r\ge1$, player $i$, point $z\in E$, and directions $v^1,\ldots,v^r\in E$ satisfy
\begin{align}
\osc\!\left(D^rF_i(z)[v^1,\ldots,v^r]\right)
&\le AR^rr!\prod_{q=1}^r\norm{v^q}_E,
\label{eq:Fi-derivative}\\
\norm{D^rF(z)[v^1,\ldots,v^r]}_E
&\le A\sqrt n\,R^rr!\prod_{q=1}^r\norm{v^q}_E.
\label{eq:F-derivative}
\end{align}
\end{restatable}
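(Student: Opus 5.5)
We plan to start from the joint-cumulant representation stated just before the lemma, namely $D^r(F_{i,a}-F_{i,b})(z)[v^1,\ldots,v^r]=\kappa(g,S_1,\ldots,S_r)$. First we verify this identity. For each opponent $j\ne i$, perturbing $z_j$ to $z_j+\sum_q s_qv_j^q$ turns $\ell_i(b,\cdot)-\ell_i(a,\cdot)$ under the product of softmaxes into a ratio $\E[g\,e^{\sum_q s_q V_q}]/\E[e^{\sum_q s_q V_q}]$, where $V_q:=\sum_{j\ne i}v_j^q(\mathsf A_j)$. This is the derivative of the cumulant generating function $\partial_u\log\E e^{ug+\sum_q s_qV_q}$ at $u=0$. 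Differentiating once in each $s_q$ at $s=0$ therefore yields the joint cumulant $\kappa(g,V_1,\ldots,V_r)$. Since cumulants of order $\ge2$ are shift invariant, we may replace $V_q$ by its centered version $S_q$. Because $\osc(F_i)$ ranges over the differences $F_{i,a}-F_{i,b}$ (centering by $P_i$ does not affect differences), it suffices to bound $|\kappa(g,S_1,\ldots,S_r)|$ uniformly over $a,b$.

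Next we bound the cumulant. The variables $S_q=\sum_{j\ne i}Y_j^q$ are sums of independent centered terms $Y_j^q:=v_j^q(\mathsf A_j)-\E v_j^q(\mathsf A_j)$, each bounded by $\osc(v_j^q)$. By multilinearity of cumulants, $\kappa(g,S_1,\ldots,S_r)=\sum_{j_1,\ldots,j_r}\kappa(g,Y_{j_1}^1,\ldots,Y_{j_r}^r)$. Here $g$ couples all players, so we cannot kill mixed terms directly. Instead we intend to use the cumulant bound of \cref{lem:cumulant}, which we expect to take the form $|\kappa(Y_0,\ldots,Y_r)|\le A R^r r!\,\|Y_0\|_\infty\prod_q\|Y_q\|_{\psi}$ for a sub-Gaussian-type moment norm, or equivalently a bound through mixed moments $\E|S_q|^p\le (C\sqrt p\,\norm{v^q}_E)^p$. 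The latter follows from Hoeffding, since $\sum_j\osc(v_j^q)^2\le\norm{v^q}_E^2$. The key point is that $\sqrt{\sum_j\osc(v_j^q)^2}$, and not $\sum_j\osc(v_j^q)$, appears, which is exactly the $\norm{\cdot}_E$ scaling.

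To make this bound rigorous with factorial rather than $(r!)^{3/2}$ growth, we will proceed by analyticity, because summing the partition formula with Hölder's inequality loses too much. We consider the function $\Phi(s)=\partial_u\log\E e^{ug+\sum_q s_qS_q}|_{u=0}$ for complex $s$ with $|s_q|\le\rho/\norm{v^q}_E$. By sub-Gaussianity, $\E|e^{\sum s_qS_q}|\le e^{C\rho^2}$. Moreover, the denominator $\E e^{\sum s_qS_q}$ stays away from zero for small $\rho$: its real part is at least $1-O(\rho^2)$, since the first-order term vanishes by centering. The ratio is then bounded by an absolute constant on this polydisc. Cauchy's estimate for the mixed derivative $\partial_{s_1}\cdots\partial_{s_r}$ gives $\rho^{-r}\prod_q\norm{v^q}_E$, but this carries no $r!$, which is suspicious for a polydisc. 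It is cleaner to set $s_q=s$ for all $q$ after polarization: for the symmetric $r$-linear form $D^r$, the bound on the diagonal $D^r[v,\ldots,v]\le r!\rho^{-r}\norm v_E^r\sup|\Phi|$ from one-variable Cauchy combined with the polarization constant $r^r/r!\le e^r$ yields $A R^r r!\prod_q\norm{v^q}_E$, which is \eqref{eq:Fi-derivative}.

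The main obstacle is the non-vanishing of the denominator uniformly in $n$. This depends only on $\norm{v}_E\le\rho$ being small and on the Hoeffding-type complex moment bound $|\E e^{sS}-1|\le C|s|^2\norm v_E^2e^{C|s|^2\norm v_E^2}$, which we will derive from independence term by term, using $|\E e^{sY_j}-1-0|\le C|s|^2\osc(v_j)^2e^{|s|\osc(v_j)}$ and a product bound. We must handle the factors $e^{|s|\osc(v_j)}$ carefully, perhaps via $\prod(1+\epsilon_j)\le e^{\sum\epsilon_j}$ with $\sum_j\osc(v_j)^2\le\rho^2$, where each $\osc(v_j)\le\rho$. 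Finally, \eqref{eq:F-derivative} follows from \eqref{eq:Fi-derivative} by the displayed inequality $\norm{D^rF}_E\le\sqrt n\max_i\max_{a,b}|\cdot|$, since $\norm{\cdot}_E$ squares and sums $n$ oscillations.
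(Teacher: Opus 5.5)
Your final argument (one-variable Cauchy estimates on the diagonal, followed by polarization) is correct, and it takes a genuinely different route from the paper's.

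\textbf{Where the two proofs diverge.} Both proofs start from $D^r(F_{i,a}-F_{i,b})(z)[v^1,\ldots,v^r]=\kappa(g,S_1,\ldots,S_r)$ and Hoeffding's lemma. The paper then stays real-variable and combinatorial. In \cref{lem:cumulant}, H\"older together with $\norm{Y}_{L^b}\le K\sqrt b$ bounds each block moment by $e^b\,b!\prod_{j\in B}K_j$. The exact weighted count $\sum_{\pi\in\Pi_m}(|\pi|-1)!\prod_{B\in\pi}|B|!=(m-1)!(2^m-1)$ then gives $(2e)^{r+1}C^r r!\prod_q\norm{v^q}_E$ for distinct directions at once. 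So your stated reason for abandoning the partition formula, a supposed $(r!)^{3/2}$ loss, is mistaken.

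\textbf{Your route.} You bound $\Phi_v(s)=\E[ge^{sS_v}]/\E[e^{sS_v}]$ on the disc $|s|\le\rho/\norm v_E$.
\begin{itemize}
\item The numerator is at most $e^{\rho^2/8}$, by Hoeffding applied with $\mathrm{Re}\,s$.
\item The termwise bound $|\E e^{sY_j}-1|\le\tfrac12|s|^2\osc(v_j)^2e^{\rho}$, combined with the product bound, gives $|\E e^{sS_v}-1|\le e^{\rho^2e^\rho/2}-1$. This is uniform in $n$, so a fixed small $\rho$ keeps the denominator away from zero.
\item Cauchy's estimate gives $|D^r(F_{i,a}-F_{i,b})(z)[v,\ldots,v]|\le M\,r!\,\rho^{-r}\norm v_E^r$.
\item The real polarization inequality $\norm L\le (r^r/r!)\sup_{\norm v_E\le1}|L[v,\ldots,v]|$, with $r^r\le e^rr!$, then yields the lemma with $R=e/\rho$.
\end{itemize}

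\textbf{What each approach buys.} The paper's route needs no complex non-vanishing estimate and pays no polarization factor. Its cumulant lemma is also reused for $\psi_a$ in \cref{lem:H-derivative-bounds}. Your route avoids partition combinatorics altogether and shows the $r!$ as a radius-of-analyticity effect. The cost is a lower bound on $|\E e^{sS_v}|$ over a complex disc, which again relies on the product structure across opponents, plus an extra factor $e^r$.

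\textbf{A correction to the step you discarded.} The polydisc estimate $\E|e^{\sum_qs_qS_q}|\le e^{C\rho^2}$ on $|s_q|\le\rho/\norm{v^q}_E$ is false, not merely suspicious. Hoeffding together with Minkowski over opponents only gives $\exp\{(\sum_q|s_q|\norm{v^q}_E)^2/8\}\le e^{r^2\rho^2/8}$. Once $r\rho$ is of order one, the denominator can even vanish on that polydisc. Shrinking the radii to $\rho/(r\norm{v^q}_E)$ restores a uniform bound. Cauchy then gives $(r/\rho)^r\prod_q\norm{v^q}_E\le(e/\rho)^r r!\prod_q\norm{v^q}_E$. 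This is the same result as your polarization step, and it is exactly where the ``missing'' $r!$ comes from.
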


\paragraph{Step 2: transfer to high-order differences $\norm{\Delta^kz^t}_E$.}
By the high-order difference identity given by \cref{eq:logit-difference-recurrence}, bounding the finite difference $\Delta^kz^t$ reduces to bounding order-$(k-1)$ differences of $F(z^t)$. The following chain rule uses derivatives of $F$ and lower-order logit differences to control these terms, extending the chain rule of \citet[Lemma~4.5]{DFG2021} to normed spaces.

\begin{restatable}[Finite-difference chain rule]{lemma}{differenceChainRuleRestated}
\label{lem:difference-chain-rule}
Let $X,Y$ be finite-dimensional normed spaces, $(u^t)_{t\ge1}$ a sequence in $X$, and $\Psi:X\to Y$ smooth. Suppose $a_r:=\sup_{t\ge r+1}\norm{\Delta^ru^t}/r!$ and $c_p:=\sup_{x\in X}\norm{D^p\Psi(x)}/p!$ are finite for all integers $r,p\ge1$. Then, for every $k\ge1$,
\begin{equation}
\sup_{t\ge k+1}\frac{\norm{\Delta^k\Psi(u^t)}}{k!}
\le[s^k]\sum_{p=1}^kc_p
 \left(\sum_{r=1}^ka_rs^r\right)^p.
\label{eq:difference-chain-rule}
\end{equation}
\end{restatable}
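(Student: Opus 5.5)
We prove the finite-difference chain rule (\cref{lem:difference-chain-rule}) with a discrete Taylor expansion of $\Psi$ around a base point, combined with a combinatorial bookkeeping of finite differences of multilinear terms. The model is the scalar argument of \citet[Lemma~4.5]{DFG2021}. The new feature is that $\Psi$ acts between normed spaces, so every estimate has to pass through operator norms of $D^p\Psi$ rather than coordinatewise bounds.

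\textbf{Step 1: exact Taylor reduction.} Fix $t\ge k+1$ and write $\Delta^k\Psi(u^t)=\sum_{j=0}^k(-1)^{k-j}\binom kj\Psi(u^{t+j})$. Set $w^j:=u^{t+j}-u^t$ for $j=0,\ldots,k$. Newton's forward formula gives $w^j=\sum_{r=1}^{j}\binom jr\Delta^ru^t$. Expand $\Psi(u^t+w^j)$ by Taylor's formula to order $k$:
\[
\Psi(u^t+w^j)=\sum_{p=0}^{k}\frac1{p!}D^p\Psi(u^t)[w^j,\ldots,w^j]+R_j .
\]
Here $R_j$ is an integral remainder involving $D^{k+1}\Psi$. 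Since $w^j$ is a polynomial in $j$ with no constant term, the $p$-th Taylor term is a polynomial in $j$ whose monomials in $(\Delta^ru^t)_r$ carry total "order" $\sum r_q\ge p$. The $k$-th difference operator in $j$ annihilates polynomials of degree $<k$. I will make this cleaner by replacing $w^j$ with a formal variable: let $\phi(s):=\sum_{r\ge1}\Delta^ru^t\,\frac{(e^s-1)^r}{r!}$, so that $u^{t+j}=u^t+[\text{coefficient extraction}]$ in the exponential generating-function sense. The $k$-th forward difference then corresponds to extracting the coefficient of $(e^s-1)^k/k!$.

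\textbf{Step 2: handling the remainder and the exactness issue.} This is the main obstacle. With finite-order Taylor, the remainder $R_j$ does not vanish and is not killed by $\Delta^k$. Two routes are available.
\begin{itemize}
\item Take $\Psi$ analytic, which is the case for the multilinear-softmax maps used later. Then the Taylor series converges once the $a_r$ are small enough, and the identity is exact. A standard approximation and continuity argument extends this to the general bound, since only $c_p$ with $p\le k$ appear.
\item Alternatively, prove the statement by induction on $k$ using the Leibniz-type identity $\Delta(\Psi(u^t))=\int_0^1D\Psi(u^t+\tau\Delta u^t)[\Delta u^t]\,d\tau$. Then apply the discrete product rule $\Delta^{k-1}(fg)=\sum\binom{k-1}{i}\Delta^if\,\Delta^{k-1-i}g$ at shifted arguments, with the inductive hypothesis applied to $D\Psi$ in place of $\Psi$. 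The constants $c_p$ shift correctly, since $\|D^{p}(D\Psi)\|/p!=(p+1)c_{p+1}$.
\end{itemize}
I would try the induction first because it avoids convergence issues. The delicate point is checking that the shifted-argument integral averages still obey the same supremum bounds. This is why the $a_r,c_p$ are defined as suprema over all $t\ge r+1$ and all $x\in X$.

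\textbf{Step 3: majorization.} In either route, $\Delta^k\Psi(u^t)/k!$ becomes a sum over $p\le k$ and compositions $r_1+\cdots+r_p=k$ of terms $D^p\Psi(\xi)[\Delta^{r_1}u,\ldots,\Delta^{r_p}u]$ with nonnegative combinatorial weights. These weights are at most the corresponding coefficients of $\frac1{p!}\prod_q\frac{s^{r_q}}{r_q!}$ after normalization by $k!$. To get this, I will compare with the scalar case $\Psi(x)=x^p$, $u^t$ polynomial, where the bound holds with equality. Bounding each term by $\|D^p\Psi\|\prod_q\|\Delta^{r_q}u\|\le p!\,c_p\prod_q r_q!\,a_{r_q}$ and summing gives exactly $[s^k]\sum_{p=1}^kc_p(\sum_{r=1}^ka_rs^r)^p$. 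The $p=0$ term vanishes because $\Delta^k$ kills constants for $k\ge1$.
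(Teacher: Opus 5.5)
Your Step~3 is the whole content of the lemma, and the proposal does not prove it. Route~1 cannot deliver it. A Taylor expansion at the fixed base point $u^t$ does not produce a sum over set partitions of $[k]$. The $k$-th difference at $j=0$ of $\prod_{q=1}^p\binom{j}{r_q}$ counts tuples $(A_1,\ldots,A_p)$ of subsets of $[k]$ with $|A_q|=r_q$ and $\bigcup_qA_q=[k]$. These are overlapping coverings, and they exist for every $p$. Already for $k=1$ the expansion is $\Delta\Psi(u^t)=\sum_{p\ge1}D^p\Psi(u^t)[\Delta u^t,\ldots,\Delta u^t]/p!$, which gives $\sum_pc_pa_1^p$ instead of $c_1a_1$. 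So the claim that only $c_p$ with $p\le k$ appear is false, and no approximation or continuity argument repairs it. The partition structure (one term per partition, $p\le k$) requires evaluating derivatives at intermediate points, as in your Route~2 or the paper's proof. Your justification of the weights, by comparison with $\Psi(x)=x^p$ and polynomial $u^t$, is not an argument either: there $c_q=\infty$ for $q<p$, and you have not exhibited an identity with universal coefficients that such a test case would calibrate.

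Your Route~2 does work; what is missing is its closing computation, which you should state. Set $f^t:=D\Psi((1-\tau)u^t+\tau u^{t+1})\in L(X,Y)$. The Leibniz rule gives $\Delta^k\Psi(u^t)=\int_0^1\sum_{i=0}^{k-1}\binom{k-1}{i}(\Delta^if)^t[\Delta^{k-i}u^{t+i}]\,\dd\tau$. The interpolated sequence has $r$-th differences that are convex combinations of $\Delta^ru^t$ and $\Delta^ru^{t+1}$, so its constants are at most $a_r$; this settles the point you flagged as delicate. Since $D\Psi$ has constants $(p+1)c_{p+1}$, the induction hypothesis gives $\norm{(\Delta^if)^t}/i!\le\beta_i:=[s^i]G(A(s))$ for $t\ge k+1$. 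Here $A(s):=\sum_{r\ge1}a_rs^r$ and $G(x):=\sum_{p\ge0}(p+1)c_{p+1}x^p$, and the case $i=0$ is just $\norm{D\Psi}\le c_1$. Since $\binom{k-1}{i}i!\,(k-i)!/k!=(k-i)/k$,
\[
\frac{\norm{\Delta^k\Psi(u^t)}}{k!}
\le\frac1k\sum_{i=0}^{k-1}(k-i)a_{k-i}\beta_i
=\frac1k[s^{k-1}]\Bigl(A'(s)\,G(A(s))\Bigr)
=\frac1k[s^{k-1}]\frac{\dd}{\dd s}\sum_{p\ge1}c_pA(s)^p,
\]
which equals $[s^k]\sum_{p=1}^kc_pA(s)^p$. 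The index conditions hold because $t\ge k+1\ge i+1$ and $t+i\ge k-i+1$.

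Completed this way, your proof is genuinely different from the paper's. The paper writes $\Delta^k\Psi(u^t)$ as the integral over $[0,1]^k$ of $\partial_1\cdots\partial_k(\Psi\circ U)$, where $U$ is the multilinear interpolant with $U(\varepsilon)=u^{t+|\varepsilon|}$. It then expands by the mixed Fa\`a di Bruno formula, with one term per set partition and each $\partial_BU$ a convex combination of $\Delta^{|B|}u^{t+j}$, and counts ordered partitions. Unrolling your induction recovers the same partition sum: choosing the block that contains the last index accounts for the factor $\binom{k-1}{i}$. Your route avoids the multivariate Fa\`a di Bruno formula, at the cost of an induction over $\Psi\mapsto D\Psi$ and a formal power-series identity.
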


Like the usual chain rule, \cref{lem:difference-chain-rule} controls finite differences of the composition $\Psi(u^t)$ using derivatives of $\Psi$ and finite differences of $u^t$. Here $a_r$ bounds the $r$th input difference normalized by $r!$, and $c_p$ bounds the $p$th derivative normalized by $p!$. Intuitively, repeated differentiation acts on the outer map and transfers to its input derivatives; \cref{eq:difference-chain-rule} bounds this chained contribution.

Together with the high-order difference identity in \cref{eq:logit-difference-recurrence} and the derivative bounds in \cref{lem:F-derivative-bounds}, the chain rule with $\Psi=F$ and $u^t=z^t$ at order $k-1$ bounds order-$k$ centered-logit differences $\Delta^kz^t$ using orders $1,\ldots,k-1$. Starting from $\norm{\Delta z^t}_E\le3\sqrt n\eta$, the following lemma controls how these bounds grow with $k$.

\begin{restatable}[High-order differences of the centered-logit state]{lemma}{highOrderLogitRestated}
\label{lem:high-order-logit}
For the trajectory $z^t$ generated by the centered-logit recurrence, there are universal constants $C_z,R_z>0$ such that, for every integer $k\ge1$,
\begin{equation}
\sup_{t\ge k+1}\norm{\Delta^kz^t}_E
\le C_z(R_z\sqrt n\,\eta)^k k!.
\label{eq:z-high-order}
\end{equation}
\end{restatable}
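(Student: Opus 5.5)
We will prove the bound by strong induction on $k$, working with the normalized quantities $a_r:=\sup_{t\ge r+1}\norm{\Delta^r z^t}_E/r!$. The aim is the bound $a_r\le C_z(R_z\delta)^r$ with $\delta:=\sqrt n\,\eta$. For the base case $k=1$, note that $\osc(P_iv)=\osc(v)$ and $\osc(\ell_i)\le1$ imply $\norm{F(z)}_E\le\sqrt n$. The recurrence \cref{eq:main-logit-gap-representation} then gives $\norm{\Delta z^t}_E\le3\sqrt n\eta$ for $t\ge2$, and $\norm{\Delta z^1}_E\le 2\sqrt n\eta$. So $a_1\le 3\delta$, which holds as soon as $C_zR_z\ge3$.

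For the inductive step with $k\ge2$, we use \cref{eq:logit-difference-recurrence}. It gives
\[
\norm{\Delta^kz^t}_E\le 3\eta\sup_{s\ge k}\norm{\Delta^{k-1}F(z^s)}_E,
\]
where the indices are compatible: we need $t\ge k+1$, so $t-1\ge k$, and the chain rule at order $k-1$ requires $s\ge k$. We apply \cref{lem:difference-chain-rule} with $\Psi=F$, $u^t=z^t$, $X=Y=E$ and order $k-1$. By \cref{lem:F-derivative-bounds}, $c_p\le A\sqrt n R^p$. The chain rule at order $k-1$ only involves $a_1,\dots,a_{k-1}$, and the lemma requires all $a_r$ to be finite. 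This holds because $\norm{\Delta^r z^t}_E\le 2^r\sup_t\norm{z^t}_E$ would be infinite only if the logits diverge. To avoid that issue, I would instead truncate the sequence at level $r\le k-1$, or restate the chain rule so that its $k$th coefficient uses only $a_1,\dots,a_k$, which is evident from the coefficient extraction. We obtain
\[
\frac{\norm{\Delta^kz^t}_E}{(k-1)!}\le 3\eta\,A\sqrt n\,[s^{k-1}]\sum_{p\ge1}R^p\Bigl(\sum_{r\ge1}C_z(R_z\delta s)^r\Bigr)^p .
\]

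The main obstacle is closing this coefficient estimate so that we get $a_k\le C_z(R_z\delta)^k$, which requires gaining a factor $1/k$ from $(k-1)!/k!$. Write $\sigma:=R_z\delta s$. The inner series is $C_z\sigma/(1-\sigma)$. The outer sum is then $RC_z\sigma/(1-\sigma)\cdot\bigl(1-RC_z\sigma/(1-\sigma)\bigr)^{-1}=RC_z\sigma/(1-(1+RC_z)\sigma)$. Its $[\sigma^{k-1}]$ coefficient is $RC_z(1+RC_z)^{k-2}$. Hence
\[
a_k\le\frac{3A R C_z}{k}\,\delta\,(1+RC_z)^{k-2}(R_z\delta)^{k-1}.
\]
To beat $C_z(R_z\delta)^k$ we need $3AR(1+RC_z)^{k-2}\le kR_z$ for all $k$. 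This holds if we choose $R_z\ge 3AR(1+RC_z)$. That choice is self-consistent, because $R_z$ appears only on the right of this constraint, while the ratio $(1+RC_z)^{k-2}/R_z^{\,k-1}$ is exactly what the induction hypothesis contributes. So we fix $C_z=3$ and $R_z:=3AR(1+3R)+3$.

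The subtle point is that the generating-function identity treats $a_r$ as exactly $C_z(R_z\delta)^r$ for every $r$, whereas we only know this for $r<k$. Since coefficient extraction at order $k-1$ is monotone in $a_1,\dots,a_{k-1}$ and ignores the higher $a_r$, we may replace every $a_r$ by the majorant without loss. This is the step I would check most carefully against the exact form of \cref{lem:difference-chain-rule}. Finally, multiplying the bound on $a_k$ by $k!$ yields \cref{eq:z-high-order}. No smallness assumption on $\eta$ is needed, because the bound is purely multiplicative.
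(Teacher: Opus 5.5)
Your preliminary steps are sound. These are the base case $a_1\le3\delta$, the index bookkeeping, the constants $c_p\le A\sqrt n\,R^p$, and the monotone replacement of $a_1,\dots,a_{k-1}$ by majorants in the $[s^{k-1}]$ coefficient. The gap is that the inductive step does not close. Your own computation gives
\[
a_k\le\frac{3ARC_z}{k}\,\delta\,(1+RC_z)^{k-2}(R_z\delta)^{k-1},
\]
so concluding $a_k\le C_z(R_z\delta)^k$ requires $3AR(1+RC_z)^{k-2}\le kR_z$ for every $k\ge2$. Since $RC_z>0$, the left side grows exponentially in $k$ and the right side only linearly, so no fixed $C_z,R_z$ works. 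With your choice $C_z=3$, $R_z=3AR(1+3R)+3$, the requirement becomes $(1+3R)^{k-3}\le k\bigl(1+1/(AR(1+3R))\bigr)$, which fails for all large $k$.

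The ``self-consistency'' remark does not help. The whole factor $(R_z\delta)^{k-1}$ is already used up by the substitution $\sigma=R_z\delta s$, so enlarging $R_z$ gains only linearly. Structurally, feeding a geometric majorant $C_z\sigma/(1-\sigma)$ into $\sum_{p\ge1}(RA)^p$ returns a geometric series with the strictly larger ratio $(1+RC_z)R_z\delta$. The single factor $1/k$ from $(k-1)!/k!$ cannot absorb that exponential loss. A pure geometric ansatz is not invariant under this nonlinear recursion.

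What is missing is a majorant that is stable under the composition. The paper writes $G(s):=\sum_{k\ge1}a_ks^k$ and packages the base case and recursion as $G'(s)\preceq K/(1-RG(s))$, with $K=3\max\{1,A\}\sqrt n\,\eta$. It then compares $G$ coefficientwise with the exact solution $W(s)=(1-\sqrt{1-2KRs})/R$ of $W'=K/(1-RW)$, $W(0)=0$. The induction $a_k\le w_k$ closes because $W$ satisfies the recursion with equality. The binomial series gives $w_k\asymp(2KR)^kk^{-3/2}/R$; note the polynomial damping $k^{-3/2}$ that your ansatz lacks.

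Alternatively, you can keep your direct induction but use the damped ansatz $a_k\le c\rho^k/k^2$. Since $\sum_{j=1}^{k-1}j^{-2}(k-j)^{-2}\le7k^{-2}$, the powers satisfy $[s^j]G^p\le c^p7^{p-1}\rho^j/j^2$. Write $b_{k-1}:=\sup_{t\ge k}\norm{\Delta^{k-1}F(z^t)}_E/(k-1)!$ and take $c=1/(14R)$. Then $b_{k-1}\le\frac{A\sqrt n}{7}\rho^{k-1}/(k-1)^2$. Consequently $a_k\le\frac{3\eta}{k}b_{k-1}\le c\rho^k/k^2$ as soon as $\rho\ge\max\{12A,42\}R\,\delta$, and this choice also covers the base case. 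Either route yields $a_k\le C_z(R_z\sqrt n\,\eta)^k$.

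A minor point: every $a_r$ is finite because $\norm{\Delta^rz^t}_E\le2^{r-1}\cdot3\sqrt n\,\eta$. The bound through $\sup_t\norm{z^t}_E$ is useless, since the logits can drift without bound.
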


The proof works with $a_k:=\sup_{t\ge k+1}\norm{\Delta^kz^t}_E/k!$ and seeks a geometric bound in $k$. The derivative estimates and the chain rule turn the logit recurrence into a polynomial upper bound for $a_k$ in terms of $a_1,\ldots,a_{k-1}$ for $k\ge2$. As the chain rule involves only sums and products of norm bounds, this polynomial upper bound has nonnegative coefficients. As a result, replacing each $a_j$ in the polynomial by a number $w_j\ge a_j$, for $1\le j<k$, gives another upper bound on $a_k$. Hence, we can start from the initial upper bound $w_1=C\sqrt n\eta\ge a_1$ for some constant $C$, and define the upper bound $w_k$ by evaluating the polynomial at previous bounds $w_1,\ldots,w_{k-1}$. Computing the power series $\sum_{k\ge1}w_ks^k$ explicitly gives $a_k\le w_k\le C_z(R_z\sqrt n\,\eta)^k$; restoring $k!$ proves the lemma.

The bound has one factor of $\sqrt n\eta$ per difference order: $\eta$ comes from the update and $\sqrt n$ from combining playerwise derivative bounds in the product norm. After division by $k!$, the remaining growth is geometric with constants independent of the order. This permits a suitable choice of finite-difference order $m\asymp(\sqrt n\eta)^{-1}$ that yields an exponentially small bound on high-order differences.

\paragraph{Step 3: transfer to high-order differences $\norm{\Delta^kh_i^t}_2$.}
Recall the identity $h_i^{t+1}=H_i(z^{t+1},z^t)$; hence, we can transfer the estimate of centered-logits $\Delta^kz^t$ in \cref{lem:high-order-logit} to an estimate of weighted gaps $\Delta^kh_i^t$ through the map $H_i$. \Cref{lem:H-derivative-bounds} supplies a derivative bound of $H_i$ for this argument.

\begin{restatable}[Factorial derivatives of the weighted loss-gap map]{lemma}{weightedGapDerivativeRestated}
\label{lem:H-derivative-bounds}
For the weighted-gap map $H_i:E\times E\to\R^{\binom{d_i}{2}}$, equip the input space $E\times E$ with $\norm{(p,q)}=(\norm p_E^2+\norm q_E^2)^{1/2}$ and the output space with the Euclidean norm on unordered action pairs. Then, there are universal constants $A_H,R_H>0$ such that, for every player $i$ and integer $r\ge0$,
\begin{equation}
\sup_{z,z'\in E}\norm{D^rH_i(z,z')}
\le A_HR_H^rr!.
\label{eq:H-derivative}
\end{equation}
\end{restatable}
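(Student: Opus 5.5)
We prove \cref{lem:H-derivative-bounds} by viewing $H_i$ as a product of two factors and applying the Leibniz rule. For $a<b$ write
\[
[H_i(z,z')]_{ab}=W_{ab}(z_i)\,G_{ab}(z'),\qquad W_{ab}(z_i):=\sqrt{[\softmax(z_i)]_a[\softmax(z_i)]_b},\quad G_{ab}(z'):=F_{i,b}(z')-F_{i,a}(z').
\]
The weight factor depends only on the first argument, and the gap factor only on the second. A derivative of order $r$ in direction $(p,q)$ therefore splits as $\sum_{s=0}^r\binom rs D^sW_{ab}[p,\ldots]\,D^{r-s}G_{ab}[q,\ldots]$, taken symmetrically over the directions. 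Since $\norm{p}_E,\norm{q}_E\le\norm{(p,q)}$, it suffices to show two separate estimates. First, the gap factors satisfy $\abs{D^{u}G_{ab}(z')[\cdots]}\le AR^{u}u!\prod\norm{q^j}_E$ for $u\ge1$, with $\abs{G_{ab}}\le1$ when $u=0$. Second, the weight vector $(D^sW_{ab}(z_i)[\cdots])_{a<b}$ has Euclidean norm at most $A'R'^ss!\prod\osc(p_i^j)$. The first estimate is exactly \cref{eq:Fi-derivative}, since $\osc$ dominates every coordinate difference. Given both, the output Euclidean norm is bounded by $\sup_{ab}\abs{D^{r-s}G_{ab}}\cdot\norm{(D^sW_{ab})_{ab}}_2$. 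Summing over $s$ and using $\binom rs s!(r-s)!=r!$ yields $(r+1)AA'\max(R,R')^rr!$, and the factor $(r+1)$ is absorbed by enlarging $R_H$.

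The main work is the Euclidean bound on the weights, uniformly in $d_i$. Write $x=\softmax(z_i)$, so that $W_{ab}=\exp\{\tfrac12(z_a+z_b)-\log\sum_c e^{z_c}\}$. Differentiating along $p^1,\ldots,p^s$, the derivative equals $W_{ab}$ times a polynomial in the quantities $\tfrac12(p_a+p_b)-\langle x,p\rangle$ and in the derivatives of the log-partition function. The derivatives of the log-partition function are cumulants of $p^j(\mathsf A)$ under $\mathsf A\sim x$.

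The first option is a Fa\`a di Bruno expansion of $e^{\phi}$ with $\phi=\tfrac12(z_a+z_b)-\log\sum e^{z_c}$. The first derivative $D\phi[p]$ is bounded by $\osc(p)$, and $D^u\phi$ for $u\ge2$ is a cumulant of bounded variables. By \cref{lem:cumulant} this cumulant is at most $C^uu!\prod\osc(p^j)$. The generating-function argument used for \cref{lem:high-order-logit} then gives $\abs{D^sW_{ab}}\le W_{ab}C'^ss!\prod\osc(p^j)$. Finally, $\sum_{a<b}W_{ab}^2=\sum_{a<b}x_ax_b\le\tfrac12$, so the $\ell_2$ norm over pairs is bounded independently of $d_i$. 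This last inequality is where the fixed Euclidean norm pays off.

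I expect the main obstacle to be keeping the combinatorics factorial rather than worse than factorial. The plan handles this by working directly with the exponential generating function. The quantity $\log\E e^{\langle\cdot\rangle}$ is analytic in a ball of radius $c/\max_j\osc(p^j)$, so Cauchy estimates on the complexified map $\tau\mapsto W_{ab}(z_i+\sum\tau_jp^j)$ give the same $s!\,R'^s$ bound with $W_{ab}$ as a prefactor. This route avoids tracking partitions.
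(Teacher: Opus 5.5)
Your proposal is correct and follows essentially the same route as the paper: write the square-root weights as exponentials of a half-logit minus the log-partition function, bound the higher derivatives of the latter as cumulants via \cref{lem:cumulant}, use the exponential Fa\`a di Bruno formula to keep the weight as a prefactor, use probability normalization to get a $d_i$-free Euclidean bound over pairs, and finish with the Leibniz rule and \cref{eq:Fi-derivative}. The differences are cosmetic: the paper factors $\sqrt{x_ax_b}=\rho_a\rho_b$ and uses a trilinear map with $\norm{\rho_i}_2=1$, whereas you exponentiate $\phi_{ab}$ directly and use $\sum_{a<b}x_ax_b\le\frac12$; also, the combinatorial input you need is the partition count $\sum_{\pi\in\Pi_s}\prod_{B\in\pi}|B|!\le s!\,2^{s-1}$, not the comparison argument behind \cref{lem:high-order-logit}.
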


For a mixed strategy $x_i=\softmax(z_i)$, differentiating the exponential formula for $\sqrt{x_{i,a}}$ leaves $\sqrt{x_{i,a}}$ as a multiplier. Cumulant bounds in \cref{lem:cumulant} and \cref{lem:F-derivative-bounds} then give factorial derivative bounds for $H_i$ weighted by $\sqrt{x_{i,a}x_{i,b}}$, which can be bounded as $\sum_{a<b}x_{i,a}x_{i,b}\le1/2$.

Now, we are ready to prove \cref{lem:high-order-gap} by combining \cref{lem:difference-chain-rule}, \cref{lem:high-order-logit}, and \cref{lem:H-derivative-bounds}.
\par
\begin{proof}[Proof of \Cref{lem:high-order-gap}]
Write $\delta:=\sqrt n\eta$. For the paired centered-logit state $(z^{t+1},z^t)$, \cref{lem:high-order-logit} and the product norm in \cref{lem:H-derivative-bounds} give $\sup_{t\ge k+1}\norm{\Delta^k(z^{t+1},z^t)}/k!\le C_z(R_z\delta)^k$, after enlarging $C_z$ by $\sqrt2$. Applying \cref{lem:difference-chain-rule,lem:H-derivative-bounds} to $h_i^{t+1}=H_i(z^{t+1},z^t)$ yields
\[
\sup_{t\ge k+1}\frac{\norm{\Delta^kh_i^{t+1}}_2}{k!}
\le[s^k]A_H\frac{R_HC_zR_z\delta s}{1-(1+R_HC_z)R_z\delta s}
\le C(R\delta)^k,
\]
The chain rule sums the geometric series of input difference bounds. Multiplying by $k!$ and shifting to $t+1\ge k+2$ proves \cref{lem:high-order-gap}.
\end{proof}

\subsection{Proof of the First-Order Difference Estimate}
\label{sec:temporal-estimate}

\Cref{lem:high-order-gap} successfully controls the high-order difference of weighted gaps $h_i^t$. However, the finite-difference interpolation in \cref{lem:fourier-tools} requires a finitely supported sequence. The following cutoff lemma transforms a given sequence $u^t$ into a finitely supported one $y^t$ while preserving the smoothness of the original sequence $u^t$ within a small error. We defer the proof to \appref{app:finite-horizon}.

\begin{restatable}[Finite-support cutoff]{lemma}{finiteSupportCutoffRestated}
\label{lem:boundary-extension}
Fix $B,C_0,R_0,\rho>0$ and integers $m\ge1$, $L\ge2$. Let $(u^t)_{t\ge1}$ be a sequence in a finite-dimensional Euclidean space satisfying
\begin{equation}
\sup_{t\ge1}\norm{u^t}\le B,
\qquad
\sup_{t\ge k+2}\norm{\Delta^ku^t}\le C_0(R_0\rho)^kk!
\quad(1\le k\le m).
\label{eq:boundary-extension-assumption}
\end{equation}
There is a constant $C_L>0$, depending only on $B,C_0,L$, such that every integer horizon $T\ge1$ admits a finitely supported sequence $y=(y^t)_{t\in\mathbb Z}$ with
\begin{align}
\norm{\Delta y}_{\ell_2}^2
&\ge{\textstyle\sum_{t=1}^T}\norm{\Delta u^t}^2-C_Lm,
\label{eq:boundary-extension-difference}\\
\norm{y}_{\ell_2}^2
&\le{\textstyle\sum_{t=1}^{T+1}}\norm{u^t}^2+C_Lm,
\label{eq:boundary-extension-level}\\
\norm{\Delta^my}_{\ell_2}^2
&\le C_L(T+m)(2/L+R_0\rho m)^{2m}.
\label{eq:boundary-extension-high-order}
\end{align}
\end{restatable}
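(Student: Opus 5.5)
We construct $y$ by multiplying $u$ by a smooth cutoff whose transition regions have length of order $Lm$, so that every high-order difference of the cutoff costs only a factor of order $1/L$. Fix a window function $\chi:\mathbb Z\to[0,1]$ with $\chi^t=1$ for $1\le t\le T+1$ and $\chi^t=0$ for $t\le -N$ or $t\ge T+2+N$, where $N\asymp Lm$. On the right ramp we take a discretized polynomial or Gevrey-type profile $\phi((t-T-1)/N)$. Its $j$th differences should satisfy $\norm{\Delta^j\chi}_\infty\le C^j j!\,N^{-j}$ times binomial-type constants; a cleaner choice is a $m$-fold discrete convolution of indicator boxes of length $\approx Lm/2$. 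Outside $[1,T+1]$ we also need values of $u$. On the right we simply keep the given $u^t$ for $t\le T+1+N$, since $u$ is defined for all $t\ge1$. On the left we avoid negative indices by reflection or by extending with the degree-$(m-1)$ polynomial continuation. The simplest device is to set $y^t=0$ for $t\le 0$ and let the left ramp occupy indices $1,\dots,N'$ with $N'\asymp Lm$ as well. We then shift which indices count: the bounds \eqref{eq:boundary-extension-difference} and \eqref{eq:boundary-extension-level} tolerate an $O(m)$ error, so discarding the first $O(Lm)$ rounds costs at most $C_Lm$ (terms are bounded by $4B^2$). We therefore define $y^t=\chi^tu^t$, with $\chi\equiv1$ on $[N'+1,T+1]$ and ramps of length $\asymp Lm$ at both ends, all inside $t\ge1$, with the window truncated to respect $t\ge k+2$ where the hypotheses apply.

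With this choice, \eqref{eq:boundary-extension-difference} follows because $\Delta y^t=\Delta u^t$ whenever $\chi^t=\chi^{t+1}=1$. The remaining at most $O(Lm)$ indices in $[1,T]$ each contribute at most $4B^2$, so $C_L=O(LB^2)$. For \eqref{eq:boundary-extension-level}, note that $|\chi|\le1$. The only indices of $\operatorname{supp}y$ beyond $T+1$ lie in the right ramp, of length $O(Lm)$ with $\norm{y^t}\le B$, so they add at most $C_Lm$.

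The main obstacle is \eqref{eq:boundary-extension-high-order}, which we handle with the discrete Leibniz rule
\[
\Delta^m(\chi u)^t=\sum_{j=0}^m\binom mj(\Delta^j\chi^{t})(\Delta^{m-j}u^{t+j}).
\]
We bound $\norm{\Delta^{m-j}u}\le C_0(R_0\rho)^{m-j}(m-j)!$, using $B$ when $j=m$, so that $C_0'=\max(C_0,B)$. If $\chi$ is the $m$-fold convolution of normalized boxes of length $M:=\lceil Lm/2\rceil$ with a plateau indicator, then $\Delta^j\chi$ is supported only in the ramps and $\norm{\Delta^j\chi}_\infty\le(2/M)^j$, because each difference hits one box and produces two point masses of weight $1/M$. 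The resulting sum is
\[
\sum_j\binom mj (2/M)^j(R_0\rho)^{m-j}(m-j)!\le m!\sum_j\frac{1}{j!}(2/M)^j(R_0\rho)^{m-j}.
\]
Since $m!/j!\le m^{m-j}$, this is at most $\sum_j\binom mj(2m/M)^j(R_0\rho m)^{m-j}\cdot$, after absorbing the factor $m^{j}/j!\cdot j!$ by $\binom mj$. We therefore aim for $(4/L+R_0\rho m)^m$, and rescale $L$ (longer boxes) to reach the stated $2/L$. Care is needed at this step: we must verify that $m!/(m-j)!\le m^j$ combines with $(m-j)!$ correctly, i.e.\ that $\binom mj(m-j)!=m!/j!\le m^m\cdot m^{-j}\cdot m^j/j!\ldots$, which yields $m^{m-j}\cdot m^j(2/M)^j/j!\le (2m/M)^j m^{m-j}$ up to the ratio $m^j/j!\le e^m$. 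If this extra $e^m$ appears, we absorb it by taking $M\asymp eLm$. Finally, squaring and summing over the support, of size $T+O(Lm)$, gives $C_L(T+m)(2/L+R_0\rho m)^{2m}$. On the plateau only the $j=0$ term survives, giving $(R_0\rho)^{2m}(m!)^2\le (R_0\rho m)^{2m}$, consistent with the bound.
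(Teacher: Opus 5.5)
\textbf{Gap: the ramp length.} Your construction has the same shape as the paper's: a plateau indicator smoothed by an $m$-fold box convolution, the discrete Leibniz rule, and the binomial theorem. However, the parameters you settle on do not give the first two conclusions. An $m$-fold convolution of normalized boxes of length $M$ is supported on an interval of length $m(M-1)+1$. With your choice $M=\lceil Lm/2\rceil$, and even more with the later $M\asymp Lm$ or $M\asymp eLm$, each ramp of $\chi$ has length of order $Lm^2$, not $Lm$ as you assert. The counting arguments for \eqref{eq:boundary-extension-difference} and \eqref{eq:boundary-extension-level} then discard or add order $Lm^2$ indices. They only yield additive errors of order $B^2Lm^2$, which is not of the form $C_Lm$ with $C_L$ depending only on $B,C_0,L$. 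You need $m$ box factors to get $\norm{\Delta^j\chi}_\infty\le(2/M)^j$ for every $j\le m$, so ramps of length $O(Lm)$ force $M=O(L)$.

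\textbf{Where the long boxes came from, and the fix.} The step that pushed you to long boxes is lossy. After writing $\binom mj(m-j)!=m!/j!$, you reintroduced a factor $m^j$, and then worried about an $e^m$, neither of which the estimate needs. Bound only $(m-j)!\le m^{m-j}$ and keep the binomial coefficient:
\[
\sum_{j=0}^m\binom mj\Bigl(\frac2M\Bigr)^j(R_0\rho)^{m-j}(m-j)!
\le\sum_{j=0}^m\binom mj\Bigl(\frac2M\Bigr)^j(R_0\rho m)^{m-j}
=\Bigl(\frac2M+R_0\rho m\Bigr)^m .
\]
So boxes of length $M=L$ already give the stated $(2/L+R_0\rho m)^m$, with ramps of length $m(L-1)+1=O_L(m)$. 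This is exactly the paper's construction: $\nu=b^{*m}$ with $b=L^{-1}\mathbf 1_{\{0,\dots,L-1\}}$, so $\norm{\Delta^j\nu}_{\ell_1}\le(2/L)^j$, and $\chi=\mathbf 1_{\{2m+2,\dots,T+1\}}*\nu$.

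\textbf{Left boundary.} Your left ramp cannot begin at index $1$. In your indexing, the $j$th Leibniz term uses $\Delta^{m-j}u^{t+j}$ whenever $\chi$ is nonzero somewhere on $\{t,\dots,t+j\}$. The hypothesis for that difference requires $t+j\ge m-j+2$. This holds for every nonzero term provided $\chi^s=0$ for $s\le m+1$; the paper starts the support at $2m+2$ because of its different Leibniz indexing. Shifting the support this way costs only $O(m)$ further indices, which $C_Lm$ absorbs. With both corrections your argument coincides with the paper's proof.
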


Intuitively, \cref{lem:boundary-extension} constructs a tapered, finitely supported version $y$ of the weighted gaps $h_i^t$. Now, we can apply finite-difference interpolation (\cref{lem:fourier-tools}) to $y$, then transfer the comparison back to $h_i^t$, and prove \cref{lem:explicit-temporal}. With a sufficiently large $L$ and a suitable $m\asymp(\sqrt n\eta)^{-1}$, this gives a $O(m)$ boundary error and a small $O((T+m)e^{-cm})$ contribution from the high-order differences.

\begin{proof}[Proof of \Cref{lem:explicit-temporal}]
Set $\delta:=\sqrt n\eta$ and $\theta:=1/4$, and let $C,R$ be the constants in \cref{lem:high-order-gap}. Choose a fixed integer $L\ge2$ sufficiently large and $a>0$ sufficiently small that $\gamma:=2/L+Ra$ satisfies $\gamma^2/\theta\le1/2$. For $\delta\le a/2$, set $m:=\lfloor a/\delta\rfloor$, so $a/(2\delta)\le m\le a/\delta$.

By $\norm{h_i^t}_2\le1/2$ and \cref{lem:high-order-gap}, \cref{lem:boundary-extension} applies with $u^t=h_i^t$ and $(B,C_0,R_0,\rho)=(1/2,C,R,\delta)$. It gives $\norm{\Delta^my}_{\ell_2}^2\le C_L(T+m)\gamma^{2m}$ since $R\delta m\le Ra$. Its two squared-norm comparisons and finite-difference interpolation (\cref{lem:fourier-tools}) yield
\begin{align*}
\sum_{t=1}^T\norm{\Delta h_i^t}_2^2
&\le\theta\sum_{t=1}^{T+1}\norm{h_i^t}_2^2
 +Cm+C(T+m)\theta^{1-m}\gamma^{2m}\\*
&\le\frac14\sum_{t=1}^{T+1}\norm{h_i^t}_2^2
 +C\left(\frac1\delta+T e^{-c'/\delta}\right).
\end{align*}
The last line uses $\theta^{1-m}\gamma^{2m}\le2^{-m}$, $m\le a/\delta$, and $2^{-m}\le e^{-c'/\delta}$ for $c':=a\log2/2$. Since $L$ is fixed, the constants are universal. Taking the stated $c\le\min\{a/2,c'\}$ proves \cref{lem:explicit-temporal}.
\end{proof}

\paragraph{Completing the regret bound.}
Combining \cref{lem:regret-simple,lem:weighted-gap,lem:explicit-temporal} proves \cref{thm:explicit}; choosing $\eta_T=\bar c/[\sqrt n\log(T+2)]$ gives \cref{cor:logarithmic}. \appref{app:main-result} supplies the complete regret calculation.

\section{Discussion}
\label{sec:discussion}
\paragraph{Implications.}
A notable recent line of work obtains logarithmic or constant regret bounds through modified learning dynamics \citep{FarinaEtAl2022,AbbadiLarakiMertikopoulos2026}. Our result shows that such modifications are unnecessary for logarithmic individual regret in general-sum games: under expected loss-vector feedback, Optimistic Hedge with a horizon-tuned constant step size already achieves this guarantee.

\paragraph{Improvement upon \citet{DFG2021}.}
Our larger constant step size $\eta=\Theta(1/[\sqrt n\log T])$, compared to the step size $\eta=\Theta(1/[n\log^4 T])$ in \citet{DFG2021}, comes from two improvements that work together. The factorial estimate in \cref{lem:high-order-gap} controls high-order differences with an exponentially small bound at a suitably chosen order $m\asymp(\sqrt n\eta)^{-1}$. Then, the weighted-gap representation lets finite-difference interpolation transfer this control directly to first differences in a fixed Euclidean norm, avoiding the repeated comparisons under changing weights in \citet{DFG2021}. The $\sqrt n$ factor, on the other hand, originates from combining playerwise derivative bounds in the product norm. \appref{app:dfg-comparison} details the estimates underlying these conditions.

\section{Conclusion}
In this work, we established the first logarithmic $O(\sqrt n\log d_i\log T)$ individual regret bound for Optimistic Hedge in general-sum games under expected loss-vector feedback. Our analysis transfers derivative bounds through a centered-logit recurrence to obtain a high-order difference bound for weighted loss gaps, and finite-difference interpolation yields a larger admissible step size. Whether Optimistic Hedge can achieve horizon-independent regret remains an open question.

\clearpage
\subsection*{AI use statement}
Generative AI tools assisted with mathematical exploration, including formulating and refining claims and hypotheses, developing the analytical framework and proof ingredients, drafting and revising proofs, and interpreting the resulting bounds. They also assisted with literature search and comparison, manuscript organization and exposition, LaTeX editing, and reference formatting. No datasets or empirical experiments are reported in this theoretical work. The author takes responsibility for the final claims, proofs, citations, and text, including material developed with AI assistance.

\subsection*{Reproducibility statement}
\Cref{sec:setup,sec:main-results} specify the setup, algorithm, and assumptions. Proofs of the theoretical results are provided in \cref{sec:analysis} and Appendices~\ref{app:main-result}--\ref{app:finite-horizon}.

\bibliography{references}

\clearpage
\begingroup
\renewcommand{\contentsname}{Appendix contents}
\setcounter{tocdepth}{-1}
\tableofcontents
\endgroup
\clearpage
\appendix
\addtocontents{toc}{\protect\setcounter{tocdepth}{2}}

\section{Proof of the main result}
\label{app:main-result}

We prove the regret bound for a constant step size, its logarithmic specialization, and the CCE guarantee in order.

\printrepeatedstatement{thm:explicit}{\mainResultRestated}

\begin{proof}
Choose $c_0\le1$ no larger than the universal thresholds in \cref{lem:regret-simple,lem:weighted-gap,lem:explicit-temporal} and small enough that $3/8+C\eta^2\le1/2$ whenever $\eta\le c_0/\sqrt n$. Since $\norm{h_i^{T+1}}_2^2=\Var_{x_i^{T+1}}(\ell_i^T)\le1/4$ for losses in $[0,1]$, \cref{lem:weighted-gap,lem:explicit-temporal} give
\begin{align*}
\sum_{t=1}^T\Var_{x_i^t}(\ell_i^t-\ell_i^{t-1})
&\le\left(\frac38+C\eta^2\right)\sum_{t=1}^T\norm{h_i^t}_2^2
 +C\left(1+\frac1{\sqrt n\eta}+T e^{-c/(\sqrt n\eta)}\right)\\
&\le\frac12\sum_{t=1}^T\norm{h_i^t}_2^2
 +C\left(\frac1{\sqrt n\eta}+T e^{-c/(\sqrt n\eta)}\right).
\end{align*}
The second line also uses $\sqrt n\eta\le c_0\le1$, so $1\le1/(\sqrt n\eta)$. Substitution into \cref{eq:regret-simple} yields
\begin{align*}
\Reg_i(T)
&\le\frac{\log d_i}{\eta}
 +\frac{2\eta}{3}\left[\frac12\sum_{t=1}^T\norm{h_i^t}_2^2
 +C\left(\frac1{\sqrt n\eta}+T e^{-c/(\sqrt n\eta)}\right)\right]\\*
&\hspace{1cm}-\frac{\eta}{3}\sum_{t=1}^T\norm{h_i^t}_2^2
 \le\frac{\log d_i}{\eta}+C+C\eta T e^{-c/(\sqrt n\eta)}.
\end{align*}
Taking $c_1$ to be the exponential constant supplied by \cref{lem:explicit-temporal} proves \cref{eq:explicit-regret}.
\end{proof}

\Cref{thm:explicit} leaves only $C\eta T e^{-c_1/(\sqrt n\eta)}$ to control. The horizon-tuned constant step size $\eta_T$ makes this remainder bounded; the entropy term $(\log d_i)/\eta_T$ then gives \cref{eq:log-regret}.

\printrepeatedstatement{cor:logarithmic}{\logRegretRestated}

\begin{proof}
Choose $\bar c>0$ so that $\bar c\le c_0\log 3$ and $c_1/\bar c\ge2$. Then $\eta_T\le c_0/\sqrt n$ for every $T\ge1$, and
\[
\exp\!\left(-\frac{c_1}{\sqrt n\eta_T}\right)
=(T+2)^{-c_1/\bar c}\le(T+2)^{-2}.
\]
Substituting $\eta_T=\bar c/[\sqrt n\log(T+2)]$ into \cref{eq:explicit-regret} gives
\[
\Reg_i(T)
\le \frac{\sqrt n\,\log d_i\,\log(T+2)}{\bar c}+C
+\frac{C\bar cT}{\sqrt n\log(T+2)(T+2)^2}.
\]
For $d_i=1$, regret is zero. For $d_i\ge2$, $\sqrt n\log d_i\log(T+2)\ge\log2\log3$, so the bounded terms are absorbed by enlarging the universal constant. This proves \cref{eq:log-regret}.
\end{proof}

The identity in \cref{eq:cce-generic} concerns $\widehat\mu_T$, the time-average of the joint distributions $\bigotimes_jx_j^t$, and is exact under the expected-vector model.

\printrepeatedstatement{cor-cce}{\cceRestated}

\begin{proof}
For any player \(i\) and fixed deviation \(a_i'\), multilinearity gives
\[
\E_{a\sim\widehat\mu_T}\ell_i(a)-\E_{a\sim\widehat\mu_T}\ell_i(a_i',a_{-i})
=\frac1T\sum_{t=1}^T\bigl(\ip{x_i^t}{\ell_i^t}-\ell_{i,a_i'}^t\bigr).
\]
Maximizing over \(a_i'\) and then over \(i\) proves \cref{eq:cce-generic}. Combining this identity with \cref{cor:logarithmic} and $d_i\le d$ proves \cref{eq:cce-rate}.
\end{proof}

\section{Preliminary lemmas}
\label{app:common}

We recall the variance-based RVU bound for Optimistic Hedge, then prove the weighted-gap representation used to compare its two variance sums.

\subsection{The variance-based RVU bound}

\paragraph{Soft minimum and Bregman divergence.}
Fix a player $i$, suppress its index, and fix $\eta>0$. For loss-score vectors $q,q'\in\R^{d_i}$, define the soft minimum $\Phi:\R^{d_i}\to\R$ and the Bregman divergence of $-\Phi$, denoted by $D_\Phi:\R^{d_i}\times\R^{d_i}\to\R_+$, by
\[
\Phi(q):=-\frac1\eta\log\sum_{a=1}^{d_i} e^{-\eta q_a},
\qquad
D_\Phi(q',q):=\Phi(q)+\ip{\nabla\Phi(q)}{q'-q}-\Phi(q').
\]
Here $\nabla\Phi(q)=\softmax(-\eta q)$.

We recover the variance refinement of the RVU bound in \citet[Lemma~4.1]{DFG2021}, following optimistic prediction-error analysis \citep{RakhlinSridharan2013Predictable,RakhlinSridharan2013} and the RVU framework \citep{SALS2015}. Regret contains a difference of divergence sums, so we need the following comparison with local variance from both above and below.

\begin{lemma}[Soft-min divergence and local variance]
\label{lem:softmin-variance}
Let $p=\nabla\Phi(q)$ and $\osc(v)\le R$. Then
\begin{equation}
\frac\eta2 e^{-\eta R}\Var_p(v)
\le D_\Phi(q+v,q)
\le\frac\eta2 e^{\eta R}\Var_p(v).
\label{eq:softmin-variance}
\end{equation}
\end{lemma}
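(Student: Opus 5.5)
The plan is to reduce \cref{eq:softmin-variance} to a one-dimensional Taylor expansion along the segment from $q$ to $q+v$, and to compare the curvature of the log-sum-exp function at intermediate points with its curvature at $q$. First I will note that $D_\Phi(q+v,q)$ depends on $v$ only modulo constant shifts. Indeed, $\Phi(q+c\1)=\Phi(q)+c$ and $\ip{\nabla\Phi(q)}{\1}=1$, and $\Var_p(v)$ is likewise shift-invariant. So only $\osc(v)$ matters, consistent with the hypothesis.

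Next, set $\phi(s):=\Phi(q+sv)$ for $s\in[0,1]$. By definition, $D_\Phi(q+v,q)=\phi(0)+\phi'(0)-\phi(1)$, and Taylor's theorem with integral remainder gives
\[
D_\Phi(q+v,q)=-\int_0^1(1-s)\phi''(s)\,\dd s .
\]
A direct computation gives $\phi''(s)=-\eta\Var_{p_s}(v)$, where $p_s:=\softmax(-\eta(q+sv))$ is a tilted version of $p$. Since $\int_0^1(1-s)\,\dd s=1/2$, it suffices to show
\[
e^{-\eta R}\Var_p(v)\le\Var_{p_s}(v)\le e^{\eta R}\Var_p(v)\qquad\text{for all }s\in[0,1].
\]

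The main step is this variance comparison, which I will handle through likelihood ratios. Write $p_{s,a}=p_a e^{-\eta s v_a}/Z_s$ with $Z_s=\sum_b p_b e^{-\eta s v_b}$. Shifting $v$ so that $\min_a v_a=0$, we have $v_a\in[0,R]$, so each factor $e^{-\eta s v_a}$ lies in $[e^{-\eta R},1]$. It follows that the ratio $p_{s,a}/p_a=e^{-\eta s v_a}/Z_s$ lies in $[e^{-\eta R},e^{\eta R}]$. A naive bound $p_s\le e^{\eta R}p$ applied to the variance would lose nothing here. I will use the pairwise identity $\Var_x(v)=\frac12\sum_{a,b}x_ax_b(v_a-v_b)^2$, but then the ratio $p_{s,a}p_{s,b}/(p_ap_b)=e^{-\eta s(v_a+v_b)}/Z_s^2$ could range up to $e^{2\eta R}$, which is too crude by a factor of two in the exponent.

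To avoid this loss I will use the variational characterization $\Var_x(v)=\min_c\sum_a x_a(v_a-c)^2$. For the upper bound, take $c=\ip{p}{v}$; then
\[
\Var_{p_s}(v)\le\sum_a p_{s,a}(v_a-c)^2\le\Bigl(\max_a p_{s,a}/p_a\Bigr)\Var_p(v).
\]
The lower bound is symmetric: take $c=\ip{p_s}{v}$ and use $\max_a p_a/p_{s,a}$. This reduces both bounds to single-ratio bounds $\max_a p_{s,a}/p_a\le e^{\eta R}$ and $\max_a p_a/p_{s,a}\le e^{\eta R}$, which follow from the factor bounds above. For example, $p_{s,a}/p_a\le 1/Z_s\le e^{\eta sR}\le e^{\eta R}$, and $p_a/p_{s,a}=Z_s e^{\eta s v_a}\le e^{\eta sR}$.

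Integrating these comparisons against $(1-s)$ then yields \cref{eq:softmin-variance}.
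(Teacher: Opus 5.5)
Your proposal is correct and follows essentially the same route as the paper: Taylor's integral remainder gives $D_\Phi(q+v,q)=\eta\int_0^1(1-s)\Var_{p_s}(v)\,\dd s$. The variational characterization $\Var_x(v)=\min_c\sum_a x_a(v_a-c)^2$, combined with the likelihood-ratio bounds $e^{-\eta sR}\le p_{s,a}/p_a\le e^{\eta sR}$, then compares $\Var_{p_s}(v)$ with $\Var_p(v)$ in both directions. (Your aside about the pairwise identity is only a discarded detour and does not affect the argument.)
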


\begin{proof}
We compare the weighted variance along the segment $q+sv$ with the weighted variance under $p$. Set $p_s:=\nabla\Phi(q+sv)$ for $0\le s\le1$. The Hessian identity and Taylor's integral remainder give
\begin{align}
-\nabla^2\Phi(q+sv)
&=\eta\bigl(\diag(p_s)-p_sp_s^\top\bigr),\notag\\
D_\Phi(q+v,q)
&=\eta\int_0^1(1-s)\Var_{p_s}(v)\,\dd s.
\label{eq:hessian-integral}
\end{align}
Thus $D_\Phi(q+v,q)$ averages $\Var_{p_s}(v)$ along the segment. To compare with $\Var_p(v)$, use $\Var_x(v)=\min_{c\in\R}\sum_a x_a(v_a-c)^2$ for a mixed strategy $x\in\Delta_{d_i}$. Then
\begin{align*}
e^{-\eta sR}
&\le\frac{(p_s)_a}{p_a}
=\frac{e^{-\eta s v_a}}{\sum_b p_be^{-\eta s v_b}}
\le e^{\eta sR},\\*
e^{-\eta R}\Var_p(v)
&\le\Var_{p_s}(v)\le e^{\eta R}\Var_p(v).
\end{align*}
The first line uses $\osc(v)\le R$. For the second, multiply the coordinatewise probability bounds by $(v_a-c)^2$, sum, and minimize over $c$, using $s\le1$. Substituting into \cref{eq:hessian-integral} and using $\int_0^1(1-s)\,\dd s=1/2$ proves the claim.
\end{proof}

We now derive the variance-based RVU bound.

\printrepeatedstatement{lem-regret-simple}{\regretSimpleRestated}

\begin{proof}
To telescope the soft-min potential $\Phi$, distinguish cumulative loss $L^t$ from the cumulative loss plus prediction $q^t$ defining $x^t$. Let
\[
L^t:=\sum_{\tau=1}^t\ell^\tau,
\qquad
q^t:=L^{t-1}+\ell^{t-1}.
\]
Here $L^0=0$, and $q^t$ adds the prediction $\ell^{t-1}$ to past losses $L^{t-1}$. Unrolling Optimistic Hedge gives $x^t=\softmax(-\eta q^t)=\nabla\Phi(q^t)$. The divergence difference separates the incurred loss $\ip{x^t}{\ell^t}$ from the change in $\Phi$:
\[
D_\Phi(L^t,q^t)-D_\Phi(L^{t-1},q^t)
=\ip{x^t}{\ell^t}-\bigl[\Phi(L^t)-\Phi(L^{t-1})\bigr].
\]
Summing cancels the intermediate potential values. Using $\Phi(L^T)\le\min_aL_a^T$ and $\Phi(0)=-(\log d_i)/\eta$ then gives
\begin{equation}
\Reg(T)
\le\frac{\log d_i}{\eta}
 +\sum_{t=1}^T\left[D_\Phi(L^t,q^t)-D_\Phi(L^{t-1},q^t)\right].
\label{eq:regret-bregman}
\end{equation}

We bound $D_\Phi(L^t,q^t)$ above and $D_\Phi(L^{t-1},q^t)$ below. Their displacements are the prediction error $\ell^t-\ell^{t-1}$ and the negative previous loss $-\ell^{t-1}$.

The prediction-error displacement $L^t-q^t=\ell^t-\ell^{t-1}$ has coordinate span at most $2$, whereas $L^{t-1}-q^t=-\ell^{t-1}$ has span at most $1$. Applying \cref{lem:softmin-variance} to the positive and negative divergences in \cref{eq:regret-bregman} gives
\begin{equation}
\Reg(T)
\le\frac{\log d_i}{\eta}
 +\frac{\eta e^{2\eta}}2\sum_{t=1}^T\Var_{x^t}(\ell^t-\ell^{t-1})
 -\frac{\eta e^{-\eta}}2\sum_{t=1}^T\Var_{x^t}(\ell^{t-1}).
\label{eq:regret-exact}
\end{equation}
Choose a universal $\bar\eta>0$ such that $e^{2\eta}/2\le2/3$ and $e^{-\eta}/2\ge1/3$ for $0<\eta\le\bar\eta$. This proves \cref{lem:regret-simple}.
\end{proof}

\subsection{The weighted pairwise loss gaps}

The identity $\Var_{x_i^t}(\ell_i^{t-1})=\norm{h_i^t}_2^2$ represents stability exactly. Prediction error is more delicate: differencing the weighted gaps $h_i^t$ changes both the losses and their probability weights. We show that this additional weight change contributes only $O(\eta^2)$ times the squared weighted-gap magnitude, plus a bounded terminal correction.

\printrepeatedstatement{lem-weighted-gap}{\weightedGapRestated}

\begin{proof}
For a mixed strategy $x\in\Delta_d$ and an action-loss vector $v\in\R^d$, the pairwise variance identity is
\begin{equation}
\Var_x(v)=\sum_{a<b}x_ax_b(v_a-v_b)^2.
\label{eq:pairwise-variance}
\end{equation}
This identifies $\Var_{x_i^t}(\ell_i^{t-1})=\norm{h_i^t}_2^2$ and gives \cref{eq:S-gap}. To verify it, expand $\frac12\sum_{a,b}x_ax_b(v_a-v_b)^2$; the diagonal terms vanish.

To compare the consecutive weighted gaps $h_i^{t+1}$ and $h_i^t$, isolate their probability weights. Fix player $i$ and pair $a<b$, and write
\[
w_{ab}^t:=\sqrt{x_{i,a}^t x_{i,b}^t}.
\]
The weight $w_{ab}^t$ scales the pairwise gap $\ell_{i,a}^{t-1}-\ell_{i,b}^{t-1}$. To track its update, let $m_i^t:=2\ell_i^t-\ell_i^{t-1}$ and $Z_i^t:=\sum_cx_{i,c}^te^{-\eta m_{i,c}^t}$; these are the loss vector in the exponent and the normalizing constant in the next update. The Optimistic Hedge update gives
\begin{equation*}
\frac{w_{ab}^{t+1}}{w_{ab}^t}
=\frac{\exp\{-\eta(m_{i,a}^t+m_{i,b}^t)/2\}}{Z_i^t},
\qquad
e^{-\eta\max_c m_{i,c}^t}
\le Z_i^t
\le e^{-\eta\min_c m_{i,c}^t}.
\end{equation*}
Since $\osc(m_i^t)\le3$,
\begin{equation}
e^{-3\eta}\le\frac{w_{ab}^{t+1}}{w_{ab}^t}\le e^{3\eta},
\qquad
\left|\frac{w_{ab}^t}{w_{ab}^{t+1}}-1\right|\le C\eta
\label{eq:weight-ratio}
\end{equation}
for a universal $C$ and small universal $\eta$. Thus the relative weight change is $O(\eta)$, whose square produces the $O(\eta^2)$ correction.

Since $[h_i^t]_{ab}=w_{ab}^t(\ell_{i,a}^{t-1}-\ell_{i,b}^{t-1})$,
\begin{equation}
w_{ab}^t\bigl[(\ell_{i,a}^{t}-\ell_{i,b}^{t})-(\ell_{i,a}^{t-1}-\ell_{i,b}^{t-1})\bigr]
=[h_i^{t+1}-h_i^t]_{ab}
 +\left(\frac{w_{ab}^t}{w_{ab}^{t+1}}-1\right)[h_i^{t+1}]_{ab}.
\label{eq:moving-weight}
\end{equation}
In \cref{eq:moving-weight}, $[h_i^{t+1}-h_i^t]_{ab}$ is the weighted-gap change; the remaining term accounts for the weight ratio $w_{ab}^t/w_{ab}^{t+1}$. Apply $(a+b)^2\le\frac32a^2+3b^2$ to \cref{eq:moving-weight}, then sum over $t$ and $a<b$. Using \cref{eq:pairwise-variance,eq:weight-ratio},
\begin{align*}
\sum_{t=1}^T\Var_{x_i^t}(\ell_i^t-\ell_i^{t-1})
&=\sum_{t=1}^T\sum_{a<b}
 \left(w_{ab}^t\bigl[(\ell_{i,a}^{t}-\ell_{i,b}^{t})
 -(\ell_{i,a}^{t-1}-\ell_{i,b}^{t-1})\bigr]\right)^2\\*
&\le\frac32\sum_{t=1}^T\norm{h_i^{t+1}-h_i^t}_2^2
 +C\eta^2\sum_{t=1}^T\norm{h_i^{t+1}}_2^2\\*
&\le\frac32\sum_{t=1}^T\norm{h_i^{t+1}-h_i^t}_2^2+C\eta^2\sum_{t=1}^T\norm{h_i^t}_2^2+C\eta^2.
\end{align*}
The last inequality uses $h_i^1=0$ and
\[
\sum_{t=1}^T\norm{h_i^{t+1}}_2^2
=\sum_{t=1}^T\norm{h_i^t}_2^2+\norm{h_i^{T+1}}_2^2
\le \sum_{t=1}^T\norm{h_i^t}_2^2+\frac14.
\]
The terminal variance $\norm{h_i^{T+1}}_2^2\le1/4$ supplies the additive correction. The weight estimate controls the ratio $w_{ab}^{t+1}/w_{ab}^t$, so no uniform lower bound on the action probabilities is needed.
\end{proof}

\section{High-order finite-difference bounds}
\label{app:high-order}

\paragraph{Proof strategy.}
The target is the factorial finite-difference bound of \cref{lem:high-order-gap}. We first reduce the derivative norm of $F$ to derivatives of expected loss gaps, then bound their joint-cumulant representations to prove \cref{lem:F-derivative-bounds}. The finite-difference chain rule and scalar comparison then prove the centered-logit difference bound in \cref{lem:high-order-logit} from the centered-logit recurrence. Finally, \cref{lem:difference-chain-rule,lem:high-order-logit,lem:H-derivative-bounds} transfer that bound through $H_i$ to prove \cref{lem:high-order-gap}.

\paragraph{Notation.}
We use moment norms for random quantities and operator norms for derivatives. For $p\ge1$, the moment norm is $\norm{Y}_{L^p}:=(\E|Y|^p)^{1/p}$. To track all temporal orders together, we use formal power series. For a formal power series $P(s)=\sum_{k\ge0}p_ks^k$, write $[s^k]P:=p_k$, with the same notation for other indeterminates. Write $\R[[s]]$ for real formal power series and $U\preceq V$ for coefficientwise domination. This bounds each order separately without requiring series convergence.

\subsection{Fr\'echet derivatives and operator norms}
\label{app:frechet-derivatives}

The derivatives below measure how the centered loss map $F$ and weighted-gap map $H_i$ respond to perturbations of their centered-logit inputs. We first define the derivatives, then the operator norms used to bound them.

\begin{definition}[Fr\'echet derivative]
\label{def:frechet-derivative}
Let $X,Y$ be finite-dimensional normed spaces, let $U\subseteq X$ be open, and let $\Psi:U\to Y$. The Fr\'echet derivative at $x\in U$ is the unique continuous linear map $D\Psi(x):X\to Y$, when it exists, satisfying
\[
\lim_{\substack{h\to0\\h\ne0}}
\frac{\norm{\Psi(x+h)-\Psi(x)-D\Psi(x)[h]}_Y}{\norm h_X}=0.
\]
\end{definition}

\begin{definition}[Higher-order Fr\'echet derivatives]
\label{def:higher-frechet-derivative}
Let $X,Y$ be finite-dimensional normed spaces, let $U\subseteq X$ be open, and let $\Psi:U\to Y$. For an integer $r\ge2$, suppose $D^{r-1}\Psi$ is defined near $x\in U$ and is Fr\'echet differentiable at $x$ as a map into the continuous $(r-1)$-linear maps, equipped with the operator norm defined below. The $r$th Fr\'echet derivative $D^r\Psi(x):X^r\to Y$ is defined by
\[
D^r\Psi(x)[v^1,\ldots,v^r]
:=\bigl(D(D^{r-1}\Psi)(x)[v^r]\bigr)[v^1,\ldots,v^{r-1}].
\]
We set $D^0\Psi(x):=\Psi(x)$ and call $\Psi$ smooth when these derivatives exist and are continuous at every order.
\end{definition}

For smooth $\Psi$, the derivative $D^r\Psi(x)$ is symmetric. Its evaluation has a direct interpretation: perturb $x$ independently in the directions $v^1,\ldots,v^r$ and differentiate once in each perturbation amplitude:
\[
D^r\Psi(x)[v^1,\ldots,v^r]
=\left.\partial_{s_1}\cdots\partial_{s_r}
\Psi\!\left(x+\sum_{q=1}^rs_qv^q\right)\right|_{s_1=\cdots=s_r=0}.
\]
The operator norm makes this response uniform over directions of norm at most one.

\begin{definition}[Multilinear operator norm]
\label{def:operator-norm}
Let $X,Y$ be finite-dimensional normed spaces and $r\ge1$. The induced operator norm of a continuous $r$-linear map $L:X^r\to Y$ is
\[
\norm L:=\sup_{\substack{\norm{v^q}_X\le1\\q=1,\ldots,r}}
\norm{L[v^1,\ldots,v^r]}_Y.
\]
\end{definition}

\paragraph{Notation.}
We use this operator norm for $L=D^r\Psi(x)$, and set $\norm{D^0\Psi(x)}:=\norm{\Psi(x)}_Y$ at order zero. For $F:E\to E$, both spaces carry the centered-logit product norm $\norm{\cdot}_E$, and we write $\norm{D^rF(z)}_E$ for the resulting operator norm. For $H_i:E\times E\to\R^{\binom{d_i}{2}}$, the input norm is $\norm{(p,q)}=(\norm p_E^2+\norm q_E^2)^{1/2}$ and the output norm is Euclidean, as in \cref{lem:H-derivative-bounds}.

\subsection{Proof of finite-difference interpolation}
\label{app:fourier-proof}

\printrepeatedstatement{lem:fourier-tools}{\directFourierRestated}

\begin{proof}
We compare the sequence norms in an orthonormal basis of the finite-dimensional Hilbert space. Define the Fourier transform $\widehat y(\omega):=\sum_{t\in\mathbb Z}y^t e^{-it\omega}$ for $\omega\in[-\pi,\pi]$, using the complex Euclidean norm on these coordinates. The sum is finite, and shifting the time index gives $\widehat{\Delta^ry}(\omega)=(e^{i\omega}-1)^r\widehat y(\omega)$. Parseval's identity therefore yields, for every integer $r\ge0$,
\[
\norm{\Delta^ry}_{\ell_2}^2
=\frac1{2\pi}\int_{-\pi}^{\pi}
|e^{i\omega}-1|^{2r}\norm{\widehat y(\omega)}^2\,\dd\omega.
\]
For $\lambda:=|e^{i\omega}-1|^2$, the inequality $\lambda\le\theta+\theta^{1-m}\lambda^m$ holds at every frequency: the first term suffices when $\lambda\le\theta$, and the second when $\lambda>\theta$. Multiplying by $\norm{\widehat y(\omega)}^2/(2\pi)$ and integrating gives \cref{eq:direct-fourier}.
\end{proof}

\subsection{A joint-cumulant estimate}
\label{app:cumulants}

To prove \cref{lem:F-derivative-bounds}, we express derivatives of expectations under softmax distributions as joint cumulants. We first define these quantities and prove their derivative representation. We then bound them using moments, separating growth in the derivative order from the number of players and actions.

\begin{definition}[Joint cumulant]
\label{def:joint-cumulant}
Let $m\ge1$, and let $Y_1,\ldots,Y_m$ be real random variables on a common probability space with $\E\prod_{j\in B}|Y_j|<\infty$ for every nonempty $B\subseteq[m]$. Their joint cumulant is the real number
\begin{equation}
\kappa(Y_1,\ldots,Y_m)
:=\sum_{\pi\in\Pi_m}(-1)^{|\pi|-1}(|\pi|-1)!
 \prod_{B\in\pi}\E\prod_{j\in B}Y_j,
\label{eq:cumulant-definition}
\end{equation}
where $\Pi_m$ is the set of partitions of $[m]$ into nonempty blocks.
\end{definition}

For one variable, $\kappa(Y_1)=\E Y_1$; for two, $\kappa(Y_1,Y_2)=\operatorname{Cov}(Y_1,Y_2)$. The next lemma identifies every joint cumulant with a mixed derivative of the logarithm of the joint moment-generating function. Its exponential-moment assumption holds automatically for the finite-support random variables in our softmax calculations.

\begin{lemma}[Derivative representation of joint cumulants]
\label{lem:cumulant-derivative}
Let the joint moment-generating function $M(s):=\E\exp\{\sum_{j=1}^m s_jY_j\}$ be finite in a neighborhood of $s=0\in\R^m$. Then
\[
\kappa(Y_1,\ldots,Y_m)
=\left.\partial_{s_1}\cdots\partial_{s_m}\log M(s)\right|_{s=0}.
\]
\end{lemma}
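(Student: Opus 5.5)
We prove the identity through the classical moment--cumulant relation, computing the mixed derivative of $\log M$ directly from derivatives of $M$. Since $M$ is finite near $0$, it is smooth (indeed real-analytic) on a neighborhood of $0$. Mixed partial derivatives may be taken under the expectation: dominated convergence applies because $|Y_B|e^{\sum_j s_jY_j}$ is integrable for $s$ in a slightly smaller neighborhood. Here we write $Y_B:=\prod_{j\in B}Y_j$. This gives, for every $B\subseteq[m]$,
\[
\Bigl(\prod_{j\in B}\partial_{s_j}\Bigr)M(s)=\E\Bigl[Y_B\,e^{\sum_j s_jY_j}\Bigr],
\]
and evaluating at $s=0$ gives $\E Y_B$. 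Also $M(0)=1>0$, so $\log M$ is smooth near $0$.

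Next, apply the multivariate Fa\`a di Bruno formula for the composition $f\circ M$ with $f=\log$. Because each variable $s_1,\ldots,s_m$ is differentiated exactly once, the formula takes the set-partition form
\[
\partial_{s_1}\cdots\partial_{s_m}f(M(s))=\sum_{\pi\in\Pi_m}f^{(|\pi|)}(M(s))\prod_{B\in\pi}\Bigl(\prod_{j\in B}\partial_{s_j}\Bigr)M(s).
\]
This follows by induction on $m$. Differentiating a term indexed by a partition $\pi$ of $[m-1]$ in the direction $s_m$ either hits $f^{(|\pi|)}$, creating the new singleton block $\{m\}$, or hits one factor $\prod_{j\in B}\partial_{s_j}M$, adjoining $m$ to the block $B$. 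These two moves produce each partition of $[m]$ exactly once.

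Finally, use $f^{(k)}(u)=(-1)^{k-1}(k-1)!\,u^{-k}$ and evaluate at $s=0$, where $M(0)=1$. Each factor reduces to $\E Y_B$, and the resulting sum is exactly the right-hand side of \cref{eq:cumulant-definition}. The integrability condition in \cref{def:joint-cumulant} holds automatically, since finiteness of $M$ near $0$ implies that all mixed absolute moments are finite.

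The only delicate step is the combinatorial induction establishing the set-partition form of Fa\`a di Bruno. I would present that bijection carefully and treat differentiation under the expectation as routine.
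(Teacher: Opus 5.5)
Your proof is correct, but it reaches the identity by a different mechanism than the paper. The paper identifies the mixed derivative at $0$ with the square-free Taylor coefficient $[s_1\cdots s_m]\log M$ and substitutes $\log M=\sum_{k\ge1}(-1)^{k-1}(M-1)^k/k$. Only $k\le m$ matters, because $M-1$ vanishes at $0$. It then reads off the relevant coefficient of $(M-1)^k$ as a sum over \emph{ordered} partitions of $[m]$ into $k$ nonempty blocks. Finally, it passes to unordered partitions by cancelling the $k!$ orderings against the $1/k$, which leaves $(k-1)!$.

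You instead apply the set-partition form of Fa\`a di Bruno to $\log\circ M$ and read the weight off directly as $\log^{(k)}(1)=(-1)^{k-1}(k-1)!$, so no ordered-versus-unordered counting is needed. Your induction is a complete proof of that formula: the new variable either opens a singleton block through the outer derivative or joins an existing block through an inner factor.

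Your route needs only that $M$ is $C^m$ near $0$ with $M(0)=1$ and that derivatives pass under $\E$. It avoids computing a Taylor coefficient of $\log M$ by substituting the expansion of $M-1$ into the series for $\log(1+u)$, a step the paper justifies with a one-line appeal to the exponential-moment assumption. The paper's computation is shorter and fits its generating-function style. Yours uses the same mixed Fa\`a di Bruno identity the paper later invokes in the proofs of \cref{lem:difference-chain-rule} and \cref{lem:H-derivative-bounds}, so it would keep a single combinatorial tool throughout.

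Your remark that finiteness of $M$ near $0$ makes every mixed absolute moment finite is also worth keeping. It follows from $\E e^{\epsilon|Y_j|}<\infty$ and H\"older, and the paper leaves it implicit even though it is what makes the right-hand side of \cref{eq:cumulant-definition} well defined.
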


\begin{proof}
The exponential-moment assumption permits Taylor expansion and differentiation under the expectation near zero. Let $[s_1\cdots s_m]$ denote the coefficient of the monomial in which each variable appears once. Since $M(0)=1$, only the first $m$ terms of $\log M=\sum_{k\ge1}(-1)^{k-1}(M-1)^k/k$ contribute to this coefficient. Each contribution assigns a nonempty subset of $[m]$ to each of the $k$ factors. Thus
\begin{align*}
\left.\partial_{s_1}\cdots\partial_{s_m}\log M(s)\right|_{s=0}
&=[s_1\cdots s_m]\log M(s)\\
&=\sum_{k=1}^m\frac{(-1)^{k-1}}{k}
 \sum_{(B_1,\ldots,B_k)}\prod_{q=1}^k\E\prod_{j\in B_q}Y_j\\
&=\sum_{\pi\in\Pi_m}(-1)^{|\pi|-1}(|\pi|-1)!
 \prod_{B\in\pi}\E\prod_{j\in B}Y_j.
\end{align*}
The inner sum is over ordered partitions of $[m]$ into $k$ nonempty blocks. Each unordered partition has $k!$ orderings, giving the factor $k!/k=(k-1)!$ in the last line. This is \cref{eq:cumulant-definition}.
\end{proof}

The following lemma bounds a joint cumulant by the product of its arguments' moment scales, with a single factorial in the order. It does not require independence among the arguments.

\begin{lemma}[Joint-cumulant bound]
\label{lem:cumulant}
Let $m\ge2$, and let $Y_1,\ldots,Y_m$ be real random variables on the same probability space. Suppose that numbers $K_1,\ldots,K_m\ge0$ satisfy
\[
\norm{Y_j}_{L^p}\le K_j\sqrt p
\qquad\text{for every }p\ge1\text{ and }j\in[m].
\]
Then
\begin{equation}
\abs{\kappa(Y_1,\ldots,Y_m)}
\le(2e)^m(m-1)!\prod_{j=1}^mK_j.
\label{eq:cumulant-bound}
\end{equation}
Moreover, a joint cumulant of order $m\ge2$ is unchanged when any one of its arguments is shifted by a constant.
\end{lemma}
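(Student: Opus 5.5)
The shift-invariance claim comes first and is a short generating-function argument. By \cref{lem:cumulant-derivative}, replacing $Y_1$ by $Y_1+c$ multiplies $M(s)$ by $e^{cs_1}$, so $\log M$ changes by the additive term $cs_1$. That term is annihilated by $\partial_{s_1}\cdots\partial_{s_m}$ once $m\ge2$. For general (not exponentially integrable) variables one can argue the same identity combinatorially from \cref{eq:cumulant-definition}, or note that the sub-Gaussian moment hypothesis gives finite exponential moments near zero anyway.

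For the bound, I would use the Gaussian-type moment hypothesis to control $\log M$ on a polydisc and then extract the mixed coefficient by Cauchy's estimate. By homogeneity ($\kappa$ is multilinear), rescale so that $K_j=1$ for all $j$; the case where some $K_j=0$ is trivial, since then $Y_j=0$ a.s. and every term in \cref{eq:cumulant-definition} vanishes. The steps are as follows.
\begin{enumerate}
\item By shift invariance, center each $Y_j$. This changes the $L^p$ scale by at most a factor $2$, since $\norm{Y_j-\E Y_j}_{L^p}\le2\sqrt p$.
\item Bound the complex moment-generating function $M(\zeta)=\E\exp\{\sum_j\zeta_jY_j\}$ for complex $|\zeta_j|\le\rho/m$ with $\rho$ small. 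Expanding, $|M(\zeta)-1|\le\sum_{k\ge2}\E|\sum_j\zeta_jY_j|^k/k!$; the $k=1$ term vanishes by centering. Minkowski gives $\norm{\sum_j\zeta_jY_j}_{L^k}\le\rho\cdot2\sqrt k$, so this sum is at most $\sum_{k\ge2}(2\rho\sqrt k)^k/k!\le C\rho^2$, using $k^{k/2}/k!\le (e/\sqrt k)^k$.
\item Choose $\rho$ so that $|M-1|\le1/2$ on the polydisc. Then $|\log M|\le\log 2$ there.
\item Apply Cauchy's estimate to the mixed coefficient $[\zeta_1\cdots\zeta_m]\log M$ over the polydisc of radii $\rho/m$. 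This gives $|\kappa|\le\log2\,(m/\rho)^m$.
\end{enumerate}

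This yields $m^m\le e^m m!$, which is off by a factor $m$ from the target $(m-1)!$ and has a constant base that needs tuning. To sharpen it, I would use unequal scaling: instead of a polydisc, take a single complex variable $w$ and set $\zeta_j=w\,u_j$, writing $\kappa$ as the coefficient from the multivariate expansion. Alternatively, and this is the route I would try first, work with $\sum_j \zeta_j Y_j$ and $|\zeta_j|\le r_j$ with $\sum r_j\le\rho$, where the radius can be $\rho$ itself rather than $\rho/m$ in each variable but constrained jointly. The clean way to get $(m-1)!$ with base $2e$ is the known bound for cumulants via moment polydisc arguments: choose $r_j=\rho/m$ but note that $\log M$ is analytic in the larger region $\{\sum|\zeta_j|\le\rho\}$. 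Cauchy's estimate on the product of circles $|\zeta_j|=r_j$ with $\sum r_j=\rho$ gives $|\kappa|\le\sup|\log M|/\prod r_j$. Optimizing over the $r_j$ gives $r_j=\rho/m$ and $m^m/\rho^m$ again, so the polydisc route loses a factor $m$.

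The main obstacle is therefore obtaining exactly $(m-1)!$ instead of $m!$ or $m^m$. The fallback is a direct combinatorial estimate. One option is the Leonov--Shiryaev recursion $\kappa(Y_1,\dots,Y_m)=\E[Y_1\cdots Y_m]-\sum\kappa(\cdot)\E[\cdot]$, pinning $Y_1$'s block, which produces the count of partitions weighted so that induction yields $(m-1)!$ times $C^m$. Another is the bound $\abs{\E\prod_{j\in B}Y_j}\le\prod_j\norm{Y_j}_{L^{|B|}}\le|B|^{|B|/2}$ by Hölder, together with the standard partition-sum estimate $\sum_\pi(|\pi|-1)!\prod_B|B|^{|B|/2}\le C^m(m-1)!\cdot m^{m/2}$/... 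If the constants do not close to $(2e)^m$, I would settle for a universal base, which is all the later lemmas use.
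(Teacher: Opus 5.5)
Your shift-invariance argument is fine: under the moment hypothesis the joint moment-generating function is finite, by H\"older. Your polydisc/Cauchy estimate is also correct as far as it goes. With $K_j=1$ it gives $\abs{\kappa(Y_1,\ldots,Y_m)}\le\log 2\,(m/\rho)^m$ for a fixed small $\rho$, about $1/4$ with your choices.

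However, you misidentify the obstacle. The lost factor of $m$ is harmless, since $m^m\le e^m m!$ and $m!\le 2^{m-1}(m-1)!$. So your bound already has the form $C\,C'^m(m-1)!\prod_jK_j$. What it does not give is the stated base $2e$: your $C'\approx 2e/\rho$ is several times larger, and you end by saying you would settle for a universal base. That weaker inequality would in fact suffice for \cref{lem:F-derivative-bounds,lem:H-derivative-bounds}, which only use universal constants. It is not \cref{eq:cumulant-bound}, though, so as a proof of the statement the proposal has a gap.

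The missing step is in your combinatorial fallback. You leave it unfinished, and its written form $C^m(m-1)!\,m^{m/2}$ carries a superexponential factor that would spoil the bound. Keep your H\"older estimate $\abs{\E\prod_{j\in B}Y_j}\le\abs{B}^{\abs{B}/2}\prod_{j\in B}K_j$, but deliberately weaken it to $e^{\abs{B}}\abs{B}!\prod_{j\in B}K_j$ using $b!\ge(b/e)^b$. The factors $e^{\abs{B}}$ multiply to $e^m$ over every partition. The remaining weighted partition sum can then be evaluated exactly:
\[
\sum_{\pi\in\Pi_m}(\abs{\pi}-1)!\prod_{B\in\pi}\abs{B}!=\sum_{k=1}^m\frac{m!}{k}\binom{m-1}{k-1}=(m-1)!\,(2^m-1).
\]
To see this, fix $k$ blocks:
\begin{itemize}
\item there are $\binom{m-1}{k-1}$ ordered size tuples;
\item the multinomial count $m!/(b_1!\cdots b_k!)$ of ordered partitions cancels the block factorials;
\item dividing by the $k!$ orderings and multiplying by $(k-1)!$ leaves $1/k$;
\item finally, $\frac{m!}{k}\binom{m-1}{k-1}=(m-1)!\binom{m}{k}$.
\end{itemize}
This yields $\abs{\kappa}\le e^m(2^m-1)(m-1)!\prod_jK_j\le(2e)^m(m-1)!\prod_jK_j$ directly from \cref{eq:cumulant-definition}. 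No centering, analyticity, or Cauchy estimate is needed; this is the paper's argument.
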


\begin{proof}
We bound the moment for each block of a partition, then evaluate the weighted partition sum exactly. For a block $B$ of size $b$, generalized H\"older gives
\[
\abs{\E\prod_{j\in B}Y_j}
\le\prod_{j\in B}\norm{Y_j}_{L^b}
\le b^{b/2}\prod_{j\in B}K_j
\le e^b b!\prod_{j\in B}K_j.
\]
The last step uses $b!\ge(b/e)^b$. For $k$ ordered blocks, there are $\binom{m-1}{k-1}$ positive tuples of block sizes summing to $m$. For each tuple, the multinomial count $m!/(b_1!\cdots b_k!)$ cancels the product of block factorials. Removing the $k!$ block orderings and multiplying by the cumulant weight $(k-1)!$ leaves the factor $1/k$. Hence
\begin{align*}
\sum_{\pi\in\Pi_m}(|\pi|-1)!\prod_{B\in\pi}|B|!
&=m!\sum_{k=1}^m\frac1k\binom{m-1}{k-1}\\*
&=(m-1)!\sum_{k=1}^m\binom{m}{k}
=(m-1)!(2^m-1).
\end{align*}
Consequently,
\[
\abs{\kappa(Y_1,\ldots,Y_m)}
\le e^m(m-1)!(2^m-1)\prod_{j=1}^mK_j
\le(2e)^m(m-1)!\prod_{j=1}^mK_j.
\]
For shift invariance, \cref{eq:cumulant-definition} is multilinear, so it suffices to show that a cumulant with one constant argument is zero. Fix a partition of the other $m-1$ indices into $k$ blocks. Making the constant a singleton has coefficient $(-1)^kk!$; adjoining it to one of the $k$ blocks has total coefficient $k(-1)^{k-1}(k-1)!$. These contributions have the same moment product and cancel.
\end{proof}

\subsection{Proof of the centered-loss derivative bound}
\label{app:loss-derivative-proof}

For the derivative bound below, the cumulant arguments are a bounded loss gap $g$ and centered opponent sums $S_1,\ldots,S_r$, defined in the proof. We apply \cref{lem:cumulant} with $m=r+1$, $Y_1=g$, and $Y_{q+1}=S_q$ for $q\in[r]$. The moment bounds $\norm g_{L^p}\le1$ and $\norm{S_q}_{L^p}\le C\sqrt p\,\norm{v^q}_E$ give $K_1=1$ and $K_{q+1}=C\norm{v^q}_E$, yielding the required $r!$ derivative bound without independence among the $S_q$.

\printrepeatedstatement{lem-F-derivative-bounds}{\lossResponseDerivativeRestated}

\begin{proof}
The proof strategy is to bound each scalar derivative $D^r(F_{i,a}-F_{i,b})$ before combining the player blocks. Fix $r\ge1$, a centered-logit state $z\in E$, and perturbation directions $v^1,\ldots,v^r\in E$. Write $v^q=(v_1^q,\ldots,v_n^q)$, where $v_j^q\in E_j$ is the player-$j$ block of direction $q$, and $v_j^q(c)$ is its coordinate for action $c\in A_j$. Write $F_{i,a}$ for action $a$'s coordinate of $F_i$. The product norm gives
\begin{align*}
\norm{D^rF(z)[v^1,\ldots,v^r]}_E
&=\left(\sum_{i=1}^n\osc\!\left(D^rF_i(z)[v^1,\ldots,v^r]\right)^2\right)^{1/2}\\*
&\le\sqrt n\,\max_i\max_{a,b\in A_i}
 \abs{D^r(F_{i,a}-F_{i,b})(z)[v^1,\ldots,v^r]}.
\end{align*}
The inequality uses that a vector's oscillation is its largest absolute coordinate difference. It therefore suffices to bound every scalar derivative in the last line by $AR^rr!\prod_{q=1}^r\norm{v^q}_E$, uniformly over $i,a,b$.

Fix such a player $i$ and actions $a,b\in A_i$. Introduce independent opponent actions $\mathsf A_j\sim\softmax(z_j)$ for $j\ne i$, and write $\mathsf A_{-i}:=(\mathsf A_j)_{j\ne i}$. Define the loss gap $g$ and centered directional sums $S_q$ by
\[
g:=\ell_i(b,\mathsf A_{-i})-\ell_i(a,\mathsf A_{-i}),
\qquad
S_q:=\sum_{j\ne i}\bigl(v_j^q(\mathsf A_j)-\E v_j^q(\mathsf A_j)\bigr),
\qquad q\in[r].
\]
The random variable $g$ takes values in $[-1,1]$. Each $S_q$ is the directional derivative at $z$ of the opponents' joint log-probability along $v^q$, evaluated at their random action profile. All expectations and centering means below use this fixed law at $z$. Since $P_i$ subtracts the same value from every coordinate, $(F_{i,a}-F_{i,b})(z)=\E g$.

We next express derivatives of this expectation as cumulants. Let $s_1,\ldots,s_r\in\R$ specify the perturbation amplitudes. Softmax multiplies the opponents' joint probability weights by $\exp\{\sum_qs_q\sum_{j\ne i}v_j^q(\mathsf A_j)\}$ and then normalizes. Subtracting the means in $S_q$ changes every weight by a common factor that cancels on normalization. Introducing an auxiliary scalar $u\in\R$ to extract the loss gap by differentiation gives
\begin{align*}
(F_{i,a}-F_{i,b})\left(z+\sum_{q=1}^rs_qv^q\right)
&=\frac{\E\!\left[g\exp\{\sum_{q=1}^rs_qS_q\}\right]}
        {\E\exp\{\sum_{q=1}^rs_qS_q\}}\\*
&=\left.\partial_u\log\E\exp\left\{ug+\sum_{q=1}^rs_qS_q\right\}\right|_{u=0}.
\end{align*}
The auxiliary scalar $u$ inserts the loss gap $g$ through differentiation, while the logarithm supplies the normalizing denominator. Differentiating once in each $s_q$ and setting the perturbation amplitudes to zero gives, by \cref{lem:cumulant-derivative},
\begin{equation}
D^r(F_{i,a}-F_{i,b})(z)[v^1,\ldots,v^r]
=\kappa(g,S_1,\ldots,S_r).
\label{eq:loss-derivative-cumulant}
\end{equation}
All arguments have finite support, which justifies the differentiations and the cumulant representation.

It remains to bound this cumulant. Since $\abs g\le1$, \cref{lem:cumulant} reduces the task to controlling the moments of each $S_q$. Its summands are independent across opponents, centered, and have range lengths $\osc(v_j^q)$. Hoeffding's lemma and the resulting subgaussian moment bound give
\begin{align*}
\E e^{\lambda S_q}
&\le\exp\left\{\frac{\lambda^2}{8}\sum_{j\ne i}\osc(v_j^q)^2\right\}
\le\exp\left\{\frac{\lambda^2}{8}\norm{v^q}_E^2\right\},
\qquad \lambda\in\R,\\*
\norm{S_q}_{L^p}
&\le C\sqrt p\,\norm{v^q}_E,
\qquad p\ge1.
\end{align*}
The second line follows by integrating the subgaussian tail implied by the first; $C$ is universal. The sums $S_q$ may depend on one another, but \cref{lem:cumulant} requires only their individual moment bounds. Applying it to the $r+1$ arguments in \cref{eq:loss-derivative-cumulant}, using $\norm g_{L^p}\le1$, yields
\begin{align*}
\abs{D^r(F_{i,a}-F_{i,b})(z)[v^1,\ldots,v^r]}
&\le (2e)^{r+1}C^rr!\prod_{q=1}^r\norm{v^q}_E\\
&\le AR^rr!\prod_{q=1}^r\norm{v^q}_E
\end{align*}
for universal $A,R>0$. Maximizing over $a,b$ proves \cref{eq:Fi-derivative}; substituting into the opening norm inequality proves \cref{eq:F-derivative}. All constants are uniform over $z,i,a,b$, and the only factor $\sqrt n$ comes from combining the player blocks.
\end{proof}

\subsection{Proof of the finite-difference chain rule}

We now convert bounds on the derivatives $D^p\Psi$ and input differences $\Delta^ru^t$ into bounds on the output differences $\Delta^k\Psi(u^t)$. The key issue is combinatorial: a separate bound on the number of partitions would introduce an additional factorial, preventing the difference bound used in the interpolation argument.

\printrepeatedstatement{lem-difference-chain-rule}{\differenceChainRuleRestated}

\begin{proof}
The proof strategy is to express $\Delta^k\Psi(u^t)$ as an integral of a mixed derivative over $[0,1]^k$. Multilinear interpolation makes the inner derivatives averages of input differences, and counting the resulting partitions gives the coefficient formula in \cref{eq:difference-chain-rule}. Fix $t\ge k+1$ and define the multilinear interpolation $U:[0,1]^k\to X$ by
\[
U(\theta):=\sum_{\varepsilon\in\{0,1\}^k}
 u^{t+|\varepsilon|}
 \prod_{j=1}^k\theta_j^{\varepsilon_j}(1-\theta_j)^{1-\varepsilon_j}.
\]
This interpolant agrees with the trajectory values at the cube's vertices: $U(\varepsilon)=u^{t+|\varepsilon|}$. Repeated use of the fundamental theorem of calculus gives
\[
\Delta^k\Psi(u^t)
=\int_{[0,1]^k}\partial_1\cdots\partial_k(\Psi\circ U)(\theta)\,\dd\theta.
\]
To track the input derivative associated with a block, write $\partial_B:=\prod_{j\in B}\partial_j$ for nonempty $B\subseteq[k]$. Differentiating $U$ shows that $\partial_BU(\theta)$ is a convex combination of vectors $\Delta^{|B|}u^{t+j}$ with $0\le j\le k-|B|$. Since $t\ge k+1\ge|B|+1$, all these indices lie in the range defining $a_{|B|}$, and hence
\[
\norm{\partial_BU(\theta)}\le |B|!a_{|B|}.
\]
The mixed Fa\`a di Bruno formula is indexed by the partitions of $[k]$:
\[
\partial_1\cdots\partial_k(\Psi\circ U)
=\sum_{\pi\in\Pi_k}D^{|\pi|}\Psi(U)[\partial_BU:B\in\pi].
\]
Each block supplies an input derivative, while the number of blocks sets the outer derivative order. Consequently,
\[
\frac{\norm{\Delta^k\Psi(u^t)}}{k!}
\le\frac1{k!}\sum_{p=1}^kp!c_p
 \sum_{\substack{\pi\in\Pi_k\\|\pi|=p}}
 \prod_{B\in\pi}|B|!a_{|B|}.
\]
Multiplying the sum over unordered $p$-block partitions by $p!$ is the same as summing over ordered blocks. For a fixed ordered size tuple $(r_1,\ldots,r_p)$ with every $r_j\ge1$ and $\sum_jr_j=k$, there are exactly $k!/(r_1!\cdots r_p!)$ ordered partitions of those sizes. After division by $k!$, their contribution is $c_p\prod_{j=1}^pa_{r_j}$. Summing over the ordered size tuples gives exactly the coefficient on the right-hand side of Equation~\eqref{eq:difference-chain-rule} in \Cref{lem:difference-chain-rule}. Taking the supremum over $t\ge k+1$ proves the claim.
\end{proof}

\subsection{Proof of the centered-logit difference bound}
\label{app:logit-difference-proof}

To prove \cref{lem:high-order-logit}, we divide each order-$k$ centered-logit difference by $k!$ and bound the resulting coefficients geometrically. The following scalar comparison will control the recursion obtained from the chain rule and centered-logit recurrence.

\begin{lemma}[Scalar coefficient comparison]
\label{lem:formal-comparison}
Let $K,R>0$, and let $A(s)=\sum_{k\ge1}a_ks^k$ be a formal power series with nonnegative coefficients and $A(0)=0$. If $A'(s)\preceq K/(1-RA(s))$, then $a_k\le C_R(C_RK)^k$ for a constant $C_R\ge1$ depending only on $R$.
\end{lemma}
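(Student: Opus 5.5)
We prove \cref{lem:formal-comparison} by majorizing $A$ with the solution of the corresponding ODE, which is an explicit algebraic function with a positive radius of convergence. The comparison series $W(s)=\sum_{k\ge1}w_ks^k$ is defined by $W(0)=0$ and $W'(s)=K/(1-RW(s))$, read as an identity of formal power series. Since $W(0)=0$, the right-hand side $K\sum_{p\ge0}R^pW^p$ is a well-defined formal series. Comparing coefficients of $s^{k-1}$ gives $kw_k=[s^{k-1}]K\sum_{p}R^pW^p$, whose right-hand side involves only $w_1,\ldots,w_{k-1}$. So $W$ exists, is unique, and has nonnegative coefficients.

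\textbf{Step 1: induction $a_k\le w_k$.} The hypothesis gives $ka_k\le[s^{k-1}]K\sum_pR^pA^p$. This coefficient is a polynomial with nonnegative coefficients in $a_1,\ldots,a_{k-1}$, so it is monotone in those variables. If $a_j\le w_j$ for all $j<k$, then $ka_k\le kw_k$. The base case is $a_1\le K=w_1$. This is the same monotone-majorant argument sketched in the overview of \cref{lem:high-order-logit}.

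\textbf{Step 2: solve for $W$ explicitly.} Separating variables gives $(1-RW)W'=K$, so $W-\tfrac R2W^2=Ks$. Hence
\[
W(s)=\frac{1-\sqrt{1-2RKs}}{R}.
\]
This is the root with $W(0)=0$, and its Taylor series agrees with the formal solution by uniqueness. Expanding $\sqrt{1-x}=1-\sum_{k\ge1}\frac{(2k-2)!}{2^{2k-1}k!(k-1)!}x^k$, which has Catalan-type coefficients bounded by $4^{k}$ (indeed $\binom{2k-2}{k-1}\le 4^{k-1}$), yields
\[
w_k=\frac{(2RK)^k}{R}\cdot\frac{\binom{2k-2}{k-1}}{2^{2k-1}k}\le\frac{(2RK)^k}{R}.
\]

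\textbf{Step 3: choose $C_R$.} Set $C_R:=\max\{1,2R,1/R\}$. Then $w_k\le R^{-1}(2R)^kK^k\le C_R(C_RK)^k$.

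The only delicate point is justifying that the formal solution equals the analytic expansion in Step 2. Uniqueness of the coefficient recursion settles this, because the analytic function also satisfies the ODE coefficientwise. As a fallback, one can bound $w_k$ by induction directly from the recursion using Catalan-number estimates.
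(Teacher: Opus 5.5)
Your proposal is correct and takes essentially the same route as the paper. Like the paper, you majorize $A$ coefficientwise by the unique formal solution $W$ of $W'=K/(1-RW)$, using induction and the fact that $[s^{k-1}](1-RA)^{-1}$ is a polynomial with nonnegative coefficients in $a_1,\ldots,a_{k-1}$, then solve $W=(1-\sqrt{1-2RKs})/R$ and bound its binomial coefficients; your explicit estimate $w_k\le(2RK)^k/R$ with $C_R=\max\{1,2R,1/R\}$ just makes concrete the constant the paper leaves implicit.
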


\begin{proof}
The proof strategy is coefficientwise comparison with a scalar equality. Since each new coefficient depends only on earlier coefficients, induction suffices without assuming convergence of the formal series. Let $W(s)=\sum_{k\ge1}w_ks^k$ be the unique formal solution of
\[
W'(s)=\frac{K}{1-RW(s)},
\qquad W(0)=0.
\]
The coefficient recursion of this equality matches the coefficientwise differential inequality in \cref{lem:formal-comparison} with equality, making $W$ a candidate upper bound for $A$. Solving it gives
\[
W(s)=\frac{1-\sqrt{1-2KRs}}{R}.
\]
We prove $A\preceq W$ coefficientwise. Suppose inductively that $a_j\le w_j$ for $1\le j<k$. The coefficient of degree $k-1$ in
\[
(1-RA(s))^{-1}=\sum_{p\ge0}(RA(s))^p
\]
is a polynomial with nonnegative coefficients in $a_1,\ldots,a_{k-1}$; it does not involve $a_k$ because $A(0)=0$. The same polynomial evaluated at $w_1,\ldots,w_{k-1}$ is the degree-$(k-1)$ coefficient of $(1-RW)^{-1}$. Therefore
\[
k a_k=[s^{k-1}]A'(s)
\le K[s^{k-1}](1-RA)^{-1}
\le K[s^{k-1}](1-RW)^{-1}
=k w_k.
\]
The induction starts at $k=1$, so $A\preceq W$. Finally, the binomial expansion of $\sqrt{1-2KRs}$ gives a constant $C_R\ge1$, depending only on $R$, such that $w_k\le C_R(C_RK)^k$. This proves the lemma.
\end{proof}

We now apply the scalar comparison to the centered-logit differences.

\printrepeatedstatement{lem-high-order-logit}{\highOrderLogitRestated}

\begin{proof}
The proof strategy is to combine the derivative bounds for $F$, the finite-difference chain rule, and the centered-logit recurrence into the coefficient inequality of \cref{lem:formal-comparison}. Define the factorially normalized centered-logit and loss differences, and their formal series, by
\begin{align*}
a_k&:=\sup_{t\ge k+1}\frac{\norm{\Delta^kz^t}_E}{k!},
&b_k&:=\sup_{t\ge k+1}\frac{\norm{\Delta^kF(z^t)}_E}{k!},\\
A(s)&:=\sum_{k\ge1}a_ks^k,
&B(s)&:=\sum_{k\ge1}b_ks^k.
\end{align*}
These coefficients are finite before any high-order estimate is used: the centered-logit recurrence gives $\norm{\Delta z^t}_E\le3\sqrt n\,\eta$, so $a_k\le3\sqrt n\,\eta\,2^{k-1}/k!$, while $\norm{F(z)}_E\le\sqrt n$ gives $b_k\le2^k\sqrt n/k!$. The series track factorially normalized differences, have nonnegative coefficients, and satisfy $A(0)=B(0)=0$. By \cref{lem:F-derivative-bounds,lem:difference-chain-rule}, for universal $A_0,R>0$,
\begin{equation}
B(s)\preceq A_0\sqrt n\,\frac{RA(s)}{1-RA(s)}.
\label{eq:F-difference-series}
\end{equation}
Indeed, for every $k\ge1$,
\[
b_k\le [s^k]\sum_{p=1}^k A_0\sqrt n\,R^pA(s)^p
\le [s^k]A_0\sqrt n\sum_{p\ge1}(RA(s))^p.
\]
Since $\norm{F(z)}_E\le\sqrt n$, the centered-logit recurrence gives $a_1\le3\sqrt n\,\eta$. For $k\ge2$, applying $\Delta^{k-1}$ to the increment $\Delta z^t$ gives
\[
\Delta^kz^t
=\eta\left(2\Delta^{k-1}F(z^t)-\Delta^{k-1}F(z^{t-1})\right).
\]
For $t\ge k+1$, both starting indices $t$ and $t-1$ lie in the range $s\ge k$ defining $b_{k-1}$. Dividing by $(k-1)!$ and taking the supremum therefore gives
\begin{equation}
ka_k\le3\eta b_{k-1},
\qquad k\ge2.
\label{eq:ak-bk-recurrence}
\end{equation}
Let $M:=\max\{1,A_0\}$. Coefficientwise,
\[
1+A_0\frac{RA}{1-RA}
=1+A_0\sum_{p\ge1}(RA)^p
\preceq M\sum_{p\ge0}(RA)^p
=\frac{M}{1-RA}.
\]
Combining this bound with \cref{eq:F-difference-series,eq:ak-bk-recurrence} gives
\begin{align}
A'(s)
&=a_1+\sum_{k\ge2}ka_ks^{k-1}
 \preceq3\sqrt n\,\eta+3\eta B(s)\notag\\*
&\preceq3\sqrt n\,\eta
 \left(1+A_0\frac{RA(s)}{1-RA(s)}\right)
\label{eq:A-prime-before-comparison}\\*
&\preceq\frac{3M\sqrt n\,\eta}{1-RA(s)}.
\label{eq:logit-series-main}
\end{align}
Thus \cref{eq:logit-series-main} has the form required by \cref{lem:formal-comparison}:
\begin{equation}
A'(s)\preceq\frac{K}{1-RA(s)},
\qquad K:=3M\sqrt n\,\eta.
\label{eq:A-prime-coefficient-bound}
\end{equation}
The series $A$ has nonnegative coefficients, $A(0)=0$, and $1-RA(s)$ has constant coefficient one. Since $R$ and $M$ are universal, \cref{lem:formal-comparison} gives $a_k\le C_z(R_z\sqrt n\,\eta)^k$ for universal constants $C_z,R_z>0$. Multiplying by $k!$ proves \cref{eq:z-high-order} and completes the proof of \cref{lem:high-order-logit}.
\end{proof}

\subsection{Proof of the weighted-gap derivative bound}

The output $H_i(z,z')$ has $\binom{d_i}{2}$ coordinates. To retain logarithmic dependence on $d_i$ in \cref{eq:log-regret}, we sum squared derivatives using probability normalization rather than count these pairs.

\printrepeatedstatement{lem-H-derivative-bounds}{\weightedGapDerivativeRestated}

\begin{proof}
We first show that each derivative of a square-root probability is bounded by that square-root probability times a factorial bound. The trilinear map in \cref{eq:B-operator} then combines those factors with the loss gaps in $F_i$. Define the square-root probability map $\rho_i:E_i\to\R^{d_i}$ by $\rho_i(z_i):=\sqrt{\softmax(z_i)}$ coordinatewise, and write $\rho_{i,a}=e^{\psi_a}$, where the logarithm $\psi_a$ is
\[
\psi_a(z_i)=\frac12z_{i,a}-\frac12\log\sum_b e^{z_{i,b}}.
\]
The logarithm separates the action coordinate from the common normalizer, whose derivatives are cumulants. For every direction $p$,
\[
D\psi_a(z_i)[p]=\frac12\left(p_a-\E_{\mathsf A\sim\softmax(z_i)}p_{\mathsf A}\right).
\]
For $r\ge2$, $D^r\psi_a[p^1,\ldots,p^r]$ is $-1/2$ times the joint cumulant of $p_{\mathsf A}^1,\ldots,p_{\mathsf A}^r$. Centering the arguments, applying Hoeffding's lemma, and then \cref{lem:cumulant} give universal constants $C_\psi,R_\psi$ such that
\[
|D^r\psi_a(z_i)[p^1,\ldots,p^r]|
\le C_\psi R_\psi^r r!\prod_{q=1}^r\osc(p^q),
\qquad r\ge1.
\]
For $r\ge1$, the exponential Fa\`a di Bruno formula gives
\[
D^r\rho_{i,a}(z_i)[p^1,\ldots,p^r]
=\rho_{i,a}(z_i)\sum_{\pi\in\Pi_r}
 \prod_{B\in\pi}D^{|B|}\psi_a(z_i)[p^q:q\in B].
\]
Every term retains the same prefactor $\rho_{i,a}(z_i)$. Combining the derivative bound for $\psi_a$ with
\[
\sum_{\pi\in\Pi_r}\prod_{B\in\pi}|B|!\le r!2^{r-1}
\]
gives a universal constant $C_\rho>0$ such that
\begin{equation}
|D^r\rho_{i,a}(z_i)[p^1,\ldots,p^r]|
\le\rho_{i,a}(z_i)C_\rho^r r!\prod_{q=1}^r\osc(p^q),
\qquad r\ge0.
\label{eq:rho-derivative}
\end{equation}
The case $r=0$ is immediate. Since $\sum_a\rho_{i,a}^2=1$, summing the squared coordinate bounds introduces no factor depending on $d_i$:
\begin{equation}
\norm{(D^r\rho_{i,a}[p^1,\ldots,p^r])_a}_2
\le C_\rho^r r!\prod_{q=1}^r\osc(p^q).
\label{eq:rho-vector-derivative}
\end{equation}

To control all pairs together, separate the two probability factors from the loss gap. Define the trilinear map $B_i:(\R^{d_i})^3\to\R^{\binom{d_i}{2}}$ by
\[
[B_i(r,s,w)]_{ab}:=r_as_b(w_b-w_a),
\qquad a<b.
\]
The map $B_i$ forms weighted coordinate differences and satisfies a bound independent of $d_i$:
\begin{equation}
\norm{B_i(r,s,w)}_2
\le\norm r_2\norm s_2\osc(w),
\label{eq:B-operator}
\end{equation}
because the sum over $a<b$ is bounded by the sum over all ordered pairs. Moreover,
\[
H_i(z,z')=B_i\bigl(\rho_i(z_i),\rho_i(z_i),F_i(z')\bigr).
\]
Since $\osc(F_i(z'))\le1$, the playerwise centered-loss derivative bound also holds at order zero after enlarging $A$. For labeled directions $\xi^\ell=(p^\ell,q^\ell)$, apply the multilinear Leibniz rule and group the directions assigned to the two $\rho_i$ factors and the $F_i$ factor. Combining Equations~\eqref{eq:rho-vector-derivative}, \eqref{eq:B-operator}, and \eqref{eq:Fi-derivative} gives
\begin{align*}
\norm{D^rH_i(z,z')[\xi^1,\ldots,\xi^r]}_2
&\le\sum_{p+j+k=r}\frac{r!}{p!j!k!}
 (C_\rho^pp!)(C_\rho^jj!)(AR^kk!)
 \prod_{\ell=1}^r\norm{\xi^\ell}\\
&\le A(2C_\rho+R)^r r!
 \prod_{\ell=1}^r\norm{\xi^\ell}.
\end{align*}
This proves Equation~\eqref{eq:H-derivative} in \Cref{lem:H-derivative-bounds}. The identity $\norm{\rho_i}_2=1$ and the bound in \cref{eq:B-operator} make the resulting constants independent of $d_i$.
\end{proof}

The proof of \cref{lem:high-order-gap} is given in \cref{sec:high-order-proof}.

\section{Finite-support cutoff construction}
\label{app:finite-horizon}

We prove \cref{lem:boundary-extension} by multiplying the forward trajectory $(u^t)_{t\ge1}$ by a cutoff and setting the result to zero outside a finite interval. The sequence $y$ is defined on all integer times, but need not equal $u^t$ at every $1\le t\le T$: the cutoff tapers the endpoints and can use the continuation beyond $T+1$. Its squared-norm errors are controlled by the lemma, so finite-difference interpolation (\cref{lem:fourier-tools}) can be applied.

\subsection{Proof of the finite-support cutoff lemma}

\printrepeatedstatement{lem:boundary-extension}{\finiteSupportCutoffRestated}

\begin{proof}
The proof strategy is to construct a cutoff whose $j$th differences are bounded by $(2/L)^j$ for $j\le m$, with a fixed averaging length $L\ge2$. Repeated averaging gives these bounds and a transition over $O(Lm)$ indices, while placing the cutoff after initialization avoids indices where the high-order assumption does not apply. We implement this construction by convolution and bound the $\ell_1$ norms of the resulting kernel differences. For finitely supported scalar sequences $f,g$ on $\mathbb Z$, define
\[
(f*g)^t:=\sum_{s\in\mathbb Z}f^{t-s}g^s,
\qquad
\norm{f}_{\ell_1}:=\sum_{t\in\mathbb Z}|f^t|.
\]
Convolution averages shifted copies when one factor has nonnegative entries summing to one; the $\ell_1$ norm controls the size of kernel differences.

We begin with uniform averaging over an interval. For $S\subseteq\mathbb Z$, let $\mathbf 1_S$ denote its indicator sequence, and define
\[
b^t:=\frac1L\mathbf 1_{\{0,\ldots,L-1\}}(t).
\]
The sequence $b$ averages over $L$ consecutive indices. Let $\nu=b^{*m}$ be its $m$-fold convolution, with total mass one and support in $\{0,\ldots,m(L-1)\}$. Since $\Delta(f*g)=(\Delta f)*g=f*(\Delta g)$, we can assign one difference to each of $j$ distinct factors; the other factors retain unit $\ell_1$ norm. Thus
\begin{equation}
\norm{\Delta^j\nu}_{\ell_1}
\le \norm{\Delta b}_{\ell_1}^j\norm b_{\ell_1}^{m-j}
=\left(\frac2L\right)^j,
\qquad 0\le j\le m.
\label{eq:boundary-kernel}
\end{equation}
The inequality uses $\norm{f*g}_{\ell_1}\le\norm f_{\ell_1}\norm g_{\ell_1}$. Each additional difference multiplies the bound by $2/L$, with no combinatorial coefficient; the support interval of $\nu$ has length $m(L-1)+1$.

First consider a horizon satisfying
\begin{equation}
T+1\ge2m+2+m(L-1).
\label{eq:long-horizon-high-order}
\end{equation}
This leaves an interval where the cutoff equals one. Use the given values of $u^t$ for $t\ge1$ and extend arbitrarily to nonpositive indices; those values will be irrelevant.

We smooth the interval indicator after initialization and multiply the trajectory by the resulting cutoff. Define
\[
\begin{aligned}
\chi&:=\mathbf 1_{\{2m+2,\ldots,T+1\}}*\nu,\\
y^t&:=
\begin{cases}
\chi^tu^t,&2m+2\le t\le T+1+m(L-1),\\
0,&\text{otherwise}.
\end{cases}
\end{aligned}
\]
The sequence $y$ agrees with $u$ wherever $\chi=1$. In particular, $0\le\chi^t\le1$, $y$ is finitely supported, and
\begin{align}
\chi^t&=0,
\qquad t\notin\{2m+2,\ldots,T+1+m(L-1)\},
\label{eq:boundary-zero-range}\\
\chi^t&=1,
\qquad 2m+2+m(L-1)\le t\le T+1,
\label{eq:boundary-central-range}\\
\sup_t\abs{\Delta^j\chi^t}
&\le\norm{\Delta^j\nu}_{\ell_1}
\le(2/L)^j,
\qquad0\le j\le m.
\label{eq:boundary-weight-differences}
\end{align}

These properties identify the interval where $y^t=u^t$ and control $\Delta^j\chi^t$. To bound $\Delta^my^t$, split differences between the cutoff $\chi^t$ and the trajectory $u^t$:
\begin{equation}
\Delta^m y^t
=\sum_{j=0}^m\binom mj
 (\Delta^j\chi)^{t+m-j}\Delta^{m-j}u^t.
\label{eq:product-rule-high-order}
\end{equation}
Each term assigns $j$ differences to the cutoff and $m-j$ to the trajectory. We must check that each nonzero term uses $\Delta^ru^t$ only at indices $t\ge r+2$, where the assumed bound applies. Fix a nonzero summand with $j<m$ and set $r=m-j\ge1$. The factor $(\Delta^j\chi)^{t+r}$ depends on indices $t+r,\ldots,t+m$, at least one of which must lie in the cutoff's support. Since $\chi^s=0$ for $s<2m+2$, we obtain $t\ge m+2\ge r+2$, as required by \cref{eq:boundary-extension-assumption}. For $j=m$, a nonzero $(\Delta^m\chi)^t$ likewise forces $t\ge m+2$, so $u^t$ belongs to the original sequence and is bounded by $B$.

Let $M:=\max\{B,C_0\}$. Since $r!\le m^r$ for $0\le r\le m$, Equation~\eqref{eq:boundary-extension-assumption} in \Cref{lem:boundary-extension}, together with Equations~\eqref{eq:boundary-weight-differences} and \eqref{eq:product-rule-high-order}, gives
\begin{equation}
\sup_t\norm{\Delta^my^t}
\le M\sum_{j=0}^m\binom mj(2/L)^j(R_0\rho m)^{m-j}
=M\left(\frac2L+R_0\rho m\right)^m.
\label{eq:high-order-mth-sup}
\end{equation}
The factor $2/L$ comes from the differences of $\chi$, and $R_0\rho m$ comes from the differences of $u$. Since $\Delta^my$ is nonzero at no more than $T+1+mL$ indices, enlarging a constant depending only on $B,C_0,L$ gives
\begin{equation}
\norm{\Delta^my}_{\ell_2}^2
\le C_L(T+m)\left(\frac2L+R_0\rho m\right)^{2m}.
\label{eq:high-order-mth-l2}
\end{equation}

Whenever both $t$ and $t+1$ lie in the central interval in \cref{eq:boundary-central-range}, one has $\Delta y^t=\Delta u^t$. At most $2m+m(L-1)+1=O_L(m)$ target indices $t\in\{1,\ldots,T\}$ are excluded. The additional right-boundary interval also contains $O_L(m)$ indices. Since $\norm{u^t}\le B$ and $\norm{\Delta u^t}\le2B$,
\begin{align}
\sum_{t=1}^T\norm{\Delta u^t}^2
&\le\norm{\Delta y}_{\ell_2}^2+C_Lm,
\label{eq:restore-boundary-differences}\\*
\norm{y}_{\ell_2}^2
&\le\sum_{t=1}^{T+1}\norm{u^t}^2+C_Lm.
\label{eq:restore-boundary-level}
\end{align}
The two comparisons involve at most $O_L(m)$ additional or excluded indices, giving an additive bound of $C_Lm$ in each inequality. These are the first two conclusions of \cref{lem:boundary-extension}; \cref{eq:high-order-mth-l2} supplies the third.

If the preceding horizon condition fails, then $T=O_L(m)$, so the bound $\norm{\Delta u^t}\le2B$ controls the entire difference sum. When \cref{eq:long-horizon-high-order} fails, set $y^t=0$ for all $t$. This discards all target differences, but $T\le(2+L)m+1$, so
\[
\sum_{t=1}^T\norm{\Delta u^t}^2\le4B^2T\le C_Lm,
\]
while the other two conclusions are immediate. This completes the proof.
\end{proof}

\section{Detailed Comparison with \texorpdfstring{\protect\citet{DFG2021}}{Daskalakis et al. (2021)}}
\label{app:dfg-comparison}

Both analyses reduce individual regret to high-order smoothness of self-play. We compare the estimates that permit a larger step size, using our notation $n$ for the number of players and suppressing universal numerical constants.

\subsection{The common objective}

By \cref{lem:regret-simple}, bounding the prediction-error variance sum by half the stability variance sum allows their contributions to regret to cancel. Regret is then bounded by $(\log d_i)/\eta$ plus $(2\eta/3)$ times the comparison error. The objective is a relative bound, not separate estimates of the two sums.

The central idea of \citet[Section~4]{DFG2021} is to obtain this relative comparison from the high-order smoothness of self-play, rather than from a separate bound on each round's prediction error. Their proof first bounds high-order loss differences and then returns to the first-order quantity in the RVU bound. Our proof follows the same strategy, but controls the weighted gaps $h_i^t$ through the centered-logit recurrence and compares their finite differences in one fixed norm.

\subsection{The analysis of \texorpdfstring{\protect\citet{DFG2021}}{Daskalakis et al. (2021)}}

Let $H:=\lceil\log_2T\rceil$ be the largest difference order. The scale comparison below concerns the regime $T\ge4$ and $\eta\ge1/T$ used in the absorption argument of \citet{DFG2021}; their proof treats smaller horizons and step sizes separately.

\paragraph{Obtaining a sufficiently small high-order term.}
The upward induction of \citet{DFG2021} alternates between loss differences and strategy differences, using a boundedness chain rule for softmax-type maps. Their detailed estimate has the form
\[
\norm{\Delta^k\ell_i^t}_\infty
\le (Cn\eta)^k k^{3k+1}
\le H(Cn\eta H^3)^k,
\qquad 1\le k\le H,\quad 1\le t\le T-k,
\]
for a universal $C>0$ when $\eta$ is sufficiently small relative to $1/(nH)$ \citep[Lemma~4.4, Appendix~B.4]{DFG2021}. The factor $k^{3k+1}$ matters because the estimate must remain useful as $k$ grows with $\log T$. To start and maintain the subsequent downward comparison, \citet{DFG2021} choose $Cn\eta H^3$ at most a sufficiently small multiple of $H^{-1/2}$. This gives the sufficient high-order restriction $\eta\lesssim(nH^{7/2})^{-1}$ \citep[Appendix~C.3, Equation~(96)]{DFG2021}.

\paragraph{Returning from high orders to first differences.}
For one player, write $E_k:=\sum_{t=1}^{T-k}\Var_{x_i^t}(\Delta^k\ell_i^t)$ for the weighted variance sum at order $k$, where $0\le k\le H$. \citet{DFG2021} propagate comparisons between $E_{k+1}$ and $E_k$ downward from $k=H-1$ to $k=0$, keeping the comparison coefficient of order $1/H$. The high-order estimate starts this induction; the difficult part is that the weights $x_i^t$ change with time. Even a fixed loss vector is therefore measured by different variance forms across rounds.

\citet{DFG2021} handle this by comparing the moving weights with fixed weights on short time intervals, where Fourier analysis applies. If $\beta$ denotes the adjacent-order comparison coefficient, their argument uses intervals of length $S\asymp\beta^{-3}$. Optimistic Hedge changes consecutive probability ratios by $1+O(\eta)$, so the relative change across such an interval is $O(\eta S)$ when $\eta S$ is small. Requiring this error to be of order $\beta$ gives
\[
S\asymp\beta^{-3},\qquad \eta S\lesssim\beta
\quad\Longrightarrow\quad
\eta\lesssim\beta^4\asymp H^{-4}.
\]
This is the second sufficient restriction, coming from the changing weights rather than the high-order loss estimate \citep[Lemma~4.7, Appendix~C.2]{DFG2021}. The downward induction, including the initial and terminal weight adjustments, yields the variance comparison \citep[Lemma~4.2, Appendix~C.3]{DFG2021}
\[
\sum_{t=1}^T\Var_{x_i^t}(\ell_i^t-\ell_i^{t-1})
\le\frac12\sum_{t=1}^T\Var_{x_i^t}(\ell_i^{t-1})+O(H^5)
\]
Their common choice $\eta\asymp(nH^4)^{-1}$ satisfies both restrictions. The variance error contributes $O(\eta H^5)=O(H/n)$ to regret, while the entropy term gives $O(n\log d_iH^4)$ for $d_i\ge2$; a player with one action has zero regret. These are restrictions of the proof, not lower bounds on the performance of Optimistic Hedge.

\subsection{Our approach: sharper differences in a fixed norm}

\paragraph{Controlling the trajectory through its recurrence.}
Instead of closing separate induction estimates for losses and strategies, we use the centered-logit state $z^t$ and centered loss map $F$. The finite-difference recurrence \cref{eq:logit-difference-recurrence} expresses $\Delta^kz^t$ as $\eta$ times two order-$(k-1)$ differences of $F(z^t)$. Factorial derivative bounds for $F$, combined with the normed-space chain rule, turn this relation into recursive bounds on $\norm{\Delta^kz^t}_E/k!$. The scalar coefficient comparison controls their growth geometrically; \cref{lem:high-order-logit} states the resulting centered-logit estimate.

For $t\ge2$, the weighted gaps satisfy $h_i^t=H_i(z^t,z^{t-1})$. Applying the same chain rule with the factorial derivative bound for $H_i$ gives
\[
\sup_{t\ge k+2}\norm{\Delta^kh_i^t}_2
\le C(R\sqrt n\,\eta)^k k!
\le C(R\sqrt n\,\eta k)^k.
\]
Thus the useful difference order is $m\asymp(\sqrt n\eta)^{-1}$: choosing its proportionality constant small enough makes the bound exponentially small in $m$. The single factorial is essential to this scale. The factor $\sqrt n$ comes from combining player blocks in the product norm, while probability normalization in $H_i$ avoids an action-count factor.

\paragraph{Comparing high and first differences directly.}
The choice of $h_i^t$ also addresses the moving-weight difficulty. Its squared Euclidean norm equals $\Var_{x_i^t}(\ell_i^{t-1})$, and its first difference controls prediction error up to the $O(\eta^2)$ weight correction of \cref{lem:weighted-gap}. The probabilities are part of the vector being differentiated, rather than part of a norm that changes with time.

After \cref{lem:boundary-extension} constructs a tapered, finitely supported sequence $y$ from $h_i^t$, finite-difference interpolation (\cref{lem:fourier-tools}) gives
\[
\norm{\Delta y}_{\ell_2}^2
\le\theta\norm y_{\ell_2}^2
 +\theta^{1-m}\norm{\Delta^my}_{\ell_2}^2.
\]
Here $\theta$ can remain fixed, independently of $m$. There is no need to compare adjacent orders under changing weights or to freeze those weights on time intervals. The cutoff contributes $O(m)$ boundary error, and the high-order bound controls the interpolation term by $O((T+m)e^{-cm})$ after choosing the constants as in \cref{sec:temporal-estimate}. This yields \cref{lem:explicit-temporal}.

\paragraph{Why the admissible step size is larger.}
The two changes work together: factorial temporal bounds permit $m\asymp(\sqrt n\eta)^{-1}$, and fixed-norm interpolation transfers that control directly to first differences. Taking $m$ of order $\log(T+2)$ permits the horizon-tuned constant step size $\eta_T\asymp[\sqrt n\log(T+2)]^{-1}$. The entropy term then gives the logarithmic regret bound in \cref{cor:logarithmic}. The improvement is therefore not a new RVU inequality or the use of Fourier analysis alone, but sharper high-order estimates in a representation that preserves the relevant probability weights. The chain rule and scalar comparison isolate a reusable question about discrete recurrences; applying them elsewhere would require corresponding uniform derivative bounds.

\end{document}